\newtheorem{lemma}{Lemma}
\newtheorem{definition}{Definition}
\newtheorem{proposition}{Proposition}
\newtheorem{assumption}{Assumption}
\begin{document}
\title{An optimal control approach of day-to-day congestion pricing for stochastic transportation networks}
\author[1]{Hemant Gehlot\thanks{Email: hgehlot@purdue.edu}}
\author[2]{Harsha Honnappa\thanks{Email: honnappa@purdue.edu}}
\author[1]{Satish V. Ukkusuri\thanks{Corresponding author. Email: sukkusur@purdue.edu. Address: Lyles School of Civil Engineering, 550 Stadium Mall Drive, West Lafayette, IN 47907-2051.}}
\affil[1]{Lyles School of Civil Engineering, Purdue University}
\affil[2]{School of Industrial Engineering, Purdue University}

\date{}
\maketitle

\begin{abstract}
Congestion pricing has become an effective instrument for traffic demand management on road networks. This paper proposes an optimal control approach for congestion pricing for day-to-day timescale that incorporates demand uncertainty and elasticity. 
Travelers make the decision to travel or not based on the experienced system travel time in the previous day and traffic managers take tolling decisions in order to minimize the average system travel time over a long time horizon. We formulate the problem as a Markov decision process (MDP) and analyze the problem to see if it satisfies conditions for conducting a satisfactory solution analysis. Such an analysis of MDPs is often dependent on the type of state space as well as on the boundedness of travel time functions. We do not constrain the travel time functions to be bounded and present an analysis centered around weighted sup-norm contractions that also holds for unbounded travel time functions. We find that the formulated MDP satisfies a set of assumptions to ensure Bellman's optimality condition. Through this result, the existence of the optimal average cost of the MDP is shown. A method based on approximate dynamic programming is proposed to resolve the implementation and computational issues of solving the control problem. Numerical results suggest that the proposed method efficiently solves the problem and produces accurate solutions.
\end{abstract}

\begin{keywords}
congestion pricing, optimal control, day-to-day timescale, markov decision processes
\end{keywords}

%


\section{Introduction}
Traffic demand management through congestion pricing (CP) has become one of the most important tools to mitigate congestion. The aim of CP is to change travelers' behavior by modifying the associated travel times. In some cases, the aim is to modify the departure time by shifting some people from peak hours to off-peak times. In some cases, the goal is to discourage the use of particular routes by too many people and decrease the overall negative externalities related to congestion. The first CP was implemented in Singapore in 1975 and thereafter, it became popular in many parts of the world \citep{seik2000advanced}. There have been several works on CP in the literature. Some of these studies have focused on pricing models with an underlying static traffic assignment theory \citep{de2011traffic} whereas others have modeled dynamic congestion pricing \citep{tan2015dynamic}. Static models are simplified models that do not account for time-dependent behavior of traffic flows but are computationally less burdensome than dynamic models \citep{cheng2017urban}. 
%
%
In day-to-day modeling, CP is applied over a set of consecutive days but is only focused to a particular time period of the day. \cite{sandholm2002evolutionary} proposed a continuous CP for day-to-day route choice adjustment to guarantee efficient utilization of the system. \cite{tan2015dynamic} incorporated user
heterogeneity in day-to-day timescale CP. \cite{rambha2016dynamic} studied CP in discrete time setting considering stochastic day-to-day route choice behaviors with fixed and deterministic demand. 

Most of the aforementioned works in CP assume that demand corresponding to a particular period (like morning peak time) across different days is deterministic and inelastic. But demand variation from day to day, a critical feature of demand uncertainty, can significantly change traffic conditions and bring inefficiency to the network \citep{cantarella2016modelling}. Also, most studies do not consider dynamic tolling in a day-to-day perspective which can perform better in optimizing the traffic conditions in a day-to-day basis in comparison to static tolling \citep{rambha2016dynamic}. Some studies such as \cite{friesz2004dynamic} use continuous-time formulations to model dynamic tolling in the day-to-day scenario but continuous-time do not truly represent day-to-day setting. In addition, these works do not capitalize on the recent findings from the analysis of large scale sensing and networked data about mobility patterns at the city scale. Recent findings \citep{hasan2013spatiotemporal,zhong2016variability,yan2014universal} find regularity of travel patterns in urban cities using cell phone, smart card data and social media data. Recent studies also find the variance of travel time patterns within a narrow range thereby allowing a better characterization of uncertainty. In recent times, there has been a steep rise in the availability of big data sources that complement traditional data collection technologies like cameras and loop detectors. For example, sources like GPS and cellular data can generate frequent mobility information \citep{song2010limits}. A key big-data insight from this type of data is the repeated observations over a period of time that allows for a richer characterization of uncertainty and the mobility-demand processes. Some of the studies in real-time routing utilize the recent advancements in the technologies \citep{pi2017stochastic,gao2006optimal}. 

As mentioned before, the literature of CP models can be divided into static and dynamic modeling paradigms. Though dynamic models represent time variant flows, most of them are computationally burdensome \citep{ukkusuri2012dynamic}. Even though the academic research in traffic assignment has moved towards dynamic modeling, static models remain widely popular for strategic transport planning due to their efficiency and simplicity \citep{bliemer2012quasi}. 
Since we focus on a particular time of a day, traffic assignment across a network during the focused period of a day is assumed to be static. Once the demand in the focused period of a day is realized, traffic flows are assigned by ensuring that travel times are minimized across all the routes to ensure user optimality \citep{sheffi1985urban}. 
The travel demand during the focused period in each day is modeled as a random variable to represent the associated uncertainty. 
The elasticity in demand distribution is governed by traffic conditions (total system travel time) in the last day. That is, travelers make their travel decisions based on their memory of the traffic conditions of previous day travel. Since traffic conditions in a day is a function of the demand generated in that day plus the pricing actions taken, this problem can be formulated as a controlled Markov chain, also known as Markov decision process (MDP) \citep{puterman2014markov}.

Since demand during the focused time period of a day is a random variable, we model demand through a discrete probability distribution as it represents the number of people that travel during that period. An appropriate probability distribution to represent travel demand is Poisson distribution assuming that people independently make their decision to travel \citep{clark2005modelling,sumalee2006reliable}. Since Poisson distribution is a partially bounded distribution and the state of the system is characterized by travel demand, we formulate the problem as a countable state MDP, whose details are described in the next section. Bellman's equation, which represents the optimality condition for MDPs, is the basic entity in MDP theory and almost all existence, characterization and computational results are based on the analysis of it. Depending on the type of problem, the satisfaction of Bellman's condition by MDPs can be ensured through satisfaction of a set of assumptions. These assumptions ensure that the required contraction properties of the Bellman operator are satisfied. If these assumptions are not satisfied then one or more of the following may happen: the optimality equations may not have a solution, an optimal policy might not exist, an optimal policy may exist but it might not be stationary (time-invariant), and standard algorithms may fail to converge to the optimal solution \citep{bertsekas2012dynamic}. Apart from the type of state space, the sets of assumptions also depend on whether the cost per time step in MDP is bounded or unbounded, which in turn is dependent on whether the underlying travel time functions are bounded or unbounded. 
For example, if we consider MDPs with countable state space and bounded costs then it is sufficient to show that the Bellman operator is a contraction with respect to the \textit{unweighted} sup-norm. However, travel time functions proposed by the US Bureau of Public Roads (BPR) are not bounded \citep{boyce1981effect}, and consequently we do not assume bounded travel time functions and present an analysis using \textit{weighted sup-norm contractions} that also holds for unbounded costs \citep{bertsekas2012weighted}. 

A natural question that arises is whether the formulated MDP satisfies the assumptions to ensure the existence of the optimal average cost? In addition, while implementing a solution algorithm how should we compute and store the optimal policy for an infinite number of states? In the literature, approximate methods are proposed to implement countable state MDPs \citep{aviv1999value}, so we also develop one such approach. Also, the solution computation of MDPs becomes burdensome with the size of the problem because of the costs attached with conducting $n$-dimensional operations and subsequent storing of $n$-dimensional vectors \citep{bertsekas2012dynamic}. Hence, it is necessary devise an efficient solution algorithm for solving the MDP. Once we efficiently find the solutions of MDP, the question becomes what is the optimal policy/rule that the traffic managers should follow? Should large tolls be put during low demand periods? This paper makes a significant contribution to gain insights into the these research questions.   

In summary, this study makes the following contributions:                
\begin{itemize}
\item An average cost per time step MDP of congestion pricing is formulated for day-to-day timescale that incorporates demand stochasticity and elasticity. 
%
%
\item We analytically show that the MDP satisfies sufficient conditions to ensure Bellman's optimality conditions, implying the existence of the optimal average cost of the MDP.

\item Develop an approximate method to efficiently compute the solutions of the problem.
%
\item We conduct numerical experiments to analyze the solutions and computational performance of the approximated MDP.
\end{itemize}
This study is organized in the following manner. The next section describes the details of the problem. Section \ref{sec:solnmethod} presents the solution methodology used in this paper. Section \ref{sec:results} presents the conducted numerical results. The final section concludes the study and provides future directions.  

\section{Problem Statement}
Consider a traffic network with a single origin-destination (OD) pair (we extend the analysis to multiple OD networks in Appendix \ref{app:multipleOD}). 
We focus on a particular time period of a day, for e.g., morning peak time or evening peak time. It is assumed that traffic demand during this period of a day (or a time step in the MDP) is a random variable but the probability distribution of the demand is a function of the traffic conditions during the same period of the previous day (or the previous time step of the MDP). That is, travelers take into account the traffic conditions that were prevalent when they traveled during the same period in the previous day while making the decision to travel, resulting into an elastic demand pattern. This is an important feature of day-to-day processes where travelers adjust their decisions based on memory and learning processes \citep{watling2015model}. The system wide information of traffic conditions is available to the people through technologies like advanced traveler information systems (ATIS) and more recent advancements in big data sources like GPS, cellular data etc. The traffic assignment within a period of a day is done through static traffic assignment theory \citep{sheffi1985urban}, whose details will be provided later.

Note that we will use the terms \textit{congestion pricing} and \textit{tolling} interchangeably in this paper. The tolling agency daily decides to take tolling actions on various paths (or routes) across the OD pair.\footnote{We do not restrict the assignment of tolls to routes and the same problem can be extended to assigning the tolls to links as shown in Appendix \ref{app:overlapping}.}
At the start of each day, the agency decides how much toll to assign across different routes based on the realized traffic conditions (i.e., the travel demand in that period of the day). Such traffic information is available through the aforementioned technologies. It is assumed that these instruments cover a significant portion of the road network to provide the details of traffic conditions with good accuracy. The objective of the tolling agency is to minimize the average total system travel time over a long horizon period consisting of infinitely many days. We model the MDP as an infinite horizon problem because the evolution of traffic conditions in the day-to-day setting without actions/tolls is modeled as a Markov chain/process in the transportation literature to obtain the 'steady state distribution' of traffic flows. Thus, an infinite horizon MDP provides a 'steady state distribution' of the admissible toll pattern \citep{watling2015model,rambha2016dynamic}.
\subsection{State space}
State at time step $k$ of the MDP is the total traffic demand $x_k$ across the OD pair during that time step. As mentioned before, demand is assumed to be Poisson distributed assuming independent travel of travelers \citep{clark2005modelling,sumalee2006reliable}. So, the state space $S$ is countable and takes integer values in the interval $[0,\infty)$.   
\subsection{Action space}   
Actions are the amount of toll to be levied across the routes in the network. The action vector at time step $k$ is represented as $u_k\equiv \{u_k^1,\ldots,u_k^R\}$, where $R$ is the number of routes between the OD pair. The elements of the action vector take finite values from the set $(\tau_1,\ldots,\tau_m)$. The minimum and maximum toll values are denoted by $\tau^{min}$ and $\tau^{max}$, respectively, such that $\tau^{min}>0$ and $\tau^{max}<\infty$. The action space is denoted as $U(x_k)$, at time step $k$.
\subsection{Transition probabilities}
As mentioned before, we model stochastic demand through a Poisson distribution. The mean of the distribution in time step $k+1$ is equal to $\frac{\theta}{TSTT(x_k,u_k)}$, where $TSTT(x_k,\mu_k)$ is the total system travel time at state $x_k$ and action $u_k$ ($TSTT(x_k,\mu_k)$ is defined in detail in Section \ref{subsec:tstt}) and $\theta$ is a given positive constant. That is, the rate (or the mean in Poisson distribution) of demand in the next time step decreases if the total system travel time $(TSTT)$ of the current time step increases and vice versa, representing demand elasticity. Denote $p_{x_k x_{k+1}}(u_k)$ as the probability to go to state $x_{k+1}$ in the time step $k+1$ if the state at time step $k$ was $x_k$ and action $u_k$ was applied. The transition probability $p_{x_k x_{k+1}}(u_k)$ is given by
\begin{equation}
p_{x_k x_{k+1}}(u_k)= \frac{e^{-\frac{\theta}{TSTT(x_k,u_k)}} \left(\frac{\theta}{TSTT(x_k,u_k)}\right)^{x_{k+1}}}{x_{k+1}!}. \label{eq:poissondist}
\end{equation}

\subsection{Costs}
The cost function $g(x_k,u_k)$ is the expected cost per time step when action $u_k$ is taken at state $x_k$ and is given as follows,
\begin{equation}
g(x_k,u_k)=\sum_{x_{k+1}\in S}p_{x_k x_{k+1}}(u_k) TSTT(x_{k+1},u_k). \label{eq:expectedcost_expression}
\end{equation}

\subsection{Objective}
We define a stationary policy $\mu$ as a policy/rule of choosing actions that does not change from one time step to the next. A stationary policy $\mu$ is a \textit{feasible policy} if $\mu(x)\in U(x), \hspace{1mm} \forall x \in S$. Also, $J_\mu(x)$ is defined as the average cost per time step or the expected $TSTT$ for policy $\mu$ assuming that the system starts at state $x$, i.e., $x_0=x$, 
\begin{equation*}
J_\mu(x)= \lim\limits_{K\rightarrow\infty}\frac{1}{K}E\Bigg\{\sum_{k=0}^{K-1}g(x_k,\mu(x_k))|x_0=x\Bigg\}
\end{equation*}
where $E\{\cdot\}$ is the expected value operator. The objective is to find an optimal policy $\mu^*$ such that
\begin{equation*}
J^*(x)\equiv J_{\mu^*}(x)=\min_{\mu\in  \Pi} J_\mu (x),
\end{equation*}
where $\prod$ is the space of all feasible policies.

\subsection{Total system travel time} \label{subsec:tstt}
We now define the total system travel time $(TSTT)$. Recall that we assume the traffic assignment within a time step to be static. In the static traffic assignment literature, unbounded travel time functions of the form $c(x)=c x^a+b$ are considered \citep{boyce1981effect}. Here, $a$, $b$ and $c$ are positive constants and these travel time functions are known as BPR functions. In this paper, we consider continuous piece-wise linear approximations of BPR functions to make the mathematical analysis in the following section tractable.
Piece-wise linear approximation is a useful technique that is used to approximate many non-linear control problems for establishing useful properties of the problems \citep{sontag1981nonlinear,borrelli2005dynamic}. We consider a total of $\eta$ continuous piece-wise linear segments to approximate the original BPR function. Since the original BPR functions are monotonically increasing \citep{sheffi1985urban}, the continuous piece-wise linear approximations of BPR functions are also monotonically increasing. Each segment of the approximated function has two endpoints (or breakpoints). For the $\beta$th segment of route $r$, consider the breakpoint that has the least value of the x-coordinate. Denote the x-coordinate of this breakpoint by $x^{r,\beta}$. Then, $x^{r,1}=0, \forall r$. For route $r$, we denote the approximated travel time function to be of the form $c^r(x)=\sum_{\beta=1}^\eta (c^{r,\beta} x+\alpha^{r,\beta})z^{r,\beta}$, where $c^{r,\beta}$ and $\alpha^{r,\beta}$ are parameters corresponding to the $\beta$th approximation of route $r$. For all $\beta\neq \eta$, $z^{r,\beta}$ is a binary variable which is equal to one if $x\in [x^{r,\beta},x^{r,\beta+1})$, otherwise it is equal to zero. For $\beta = \eta$, $z^{r,\beta}$ is equal to one if $x\in [x^{r,\eta},\infty)$, otherwise it is equal to zero. 
Note that BPR functions are formulated considering no tolls, so if tolls are also present they need to be added accordingly to the travel time functions. As mentioned before, we start with the assumption that each route consists of a single link; however, we extend the analysis to incorporate assignment of tolls (on links) if routes consist of multiple links and there can be overlaps between different routes in Appendix \ref{app:overlapping}.  

We assume that the total demand $x$ (at a given time step of the MDP) gets distributed across different routes such that travel times of all the travelers are minimized, in accordance with Wardrop's principle (while distributing the flow across different routes we assume that flow is infinitesimally divisible to simplify the expression of total system travel time). This ensures that traffic flows get distributed in every time step to represent users' perspective (or user equilibrium) \citep{sheffi1985urban}.  Therefore, there are two sets of routes, 1) a \textit{used} route, where the traffic flow is positive and 2) an \textit{unused} route, where the traffic flow is zero and travel time of the route is greater than the travel time of a \textit{used} route. Denote the travel time across a used routes by $w$. Let the number of the unused routes be $\iota$ and denote the set of unused paths by $\nu$. Let $u$ denote the vector of tolls that are levied across the given time step. Then, the total system travel time $(TSTT(x,u))$ is computed as the sum of the travel times on all the paths as follows,
\begin{equation}
TSTT(x,u)=\sum_{r\in \nu} (\alpha^{r,1}+u^r)+ (R-\iota)w\label{eq:tstt1_defn}
\end{equation} 
\begin{equation}
x^r\left(\sum_{\beta=1}^\eta (c^{r,\beta} x^r+\alpha^{r,\beta})z^{r,\beta}+u^r-w\right)=0, \quad \forall r\label{eq:tstt_compl1}
\end{equation} 
\begin{equation}
\sum_{\beta=1}^\eta (c^{r,\beta} x^r+\alpha^{r,\beta})z^{r,\beta}+u^r-w\ge 0, \quad \forall r\label{eq:tstt_compl2}
\end{equation} 
\begin{equation}
x^1+\ldots+x^R=x, \label{eq:tstt_sumofflow}
\end{equation}
\begin{equation}
     x^r \ge 0, \quad \forall r  \label{eq:tstt3_2}
\end{equation}
\begin{equation}
z^{r,\beta} = \begin{cases} 1 &\text{if } x^r \in [x^{r,\beta},x^{r,\beta+1}), \quad \forall r,\beta\\
 0 &otherwise.\end{cases} \label{eq:tstt2}
\end{equation}
\begin{equation}
z^{r,\beta}\in \{0,1\}, \quad \forall r,\beta 
     \label{eq:tstt4}
\end{equation}
where $x^r$ and $u^r$ denote the traffic flow and toll value, respectively, across route $r$ at the given time step and $x^{r,\eta+1} = \infty, \forall r$. Constraint \eqref{eq:tstt1_defn} is the definitional constraint for $TSTT(x,u)$. Constraints \eqref{eq:tstt_compl1} and \eqref{eq:tstt_compl2} ensure that either the flow across a route is positive and in that case the travel time across that route is equal to the equilibrium travel time $w$, otherwise the flow across a route is zero and the travel time of that route is larger than the equilibrium travel time $w$. Constraint \eqref{eq:tstt_sumofflow} ensures that the sum of all route flows is equal to the total demand. Constraints \eqref{eq:tstt3_2}-\eqref{eq:tstt4} are definitional constraints. 
$TSTT(x,u)$ which is the sum of continuous piece-wise linear travel time functions satisfying the constraints \eqref{eq:tstt_compl1}-\eqref{eq:tstt4} is a continuous piece-wise linear function with $\xi$ piece-wise linear approximations (we argue why this should be in Appendix \ref{app:tstt_piecewiselinear}). That is,
\begin{equation}
    TSTT(x,u)=\sum_{\psi=1}^\xi \left(k^{0,\psi}x+\sum_{r }k^{r,\psi} u^r\right)\gamma^{\psi}, \label{eq:tstt_new_piecewise}
\end{equation}
where $\{k^{0,\psi}\}$ and $\{k^{r,\psi}\}$ are positive constants, $\gamma^{\psi}\in \{0,1\}, \forall \psi$ and $\sum_{\psi=1}^\xi \gamma^{\psi}=1$. 

\section{Solution methodology} \label{sec:solnmethod}
In general, the analysis of average cost per time step problems with unbounded cost per time step and denumerable states poses difficulties, and at present there is no comprehensive theory. However, there are sets of assumptions that allow a satisfactory analysis \citep{bertsekas2012dynamic}. We first present these assumptions and later show that the MDP satisfies them.
\subsection{Bellman's optimality condition}
We assume that state 0 is special in that the system has a tendency to return to it under all policies \citep{bertsekas2012dynamic}. Such a state comes under the category of \textit{recurrent} states. In particular, for any policy $\mu$ we denote $C_\mu$ as the expected cost starting from state 0 up to the first return to $0$ and $N_\mu$ as the expected number of time steps to return to state 0 starting from state 0. We start by stating the following assumption.
\begin{assumption} \label{assum:recurrence}
For every policy $\mu$, $C_{\mu}$ and $N_{\mu}$ are finite. Furthermore, $N_\mu$ is uniformly bounded over $\mu$, i.e., for some $\overline{N}>0$, we have $N_\mu<\overline{N}$ for all $\mu$.
\end{assumption}

We now introduce a positive sequence $v=\{v_0,v_1,\ldots\}$, such that 
\begin{equation*}
\inf_{i=0,1,\ldots} v_i>0,
\end{equation*}
and the weighted sup-norm
\begin{equation*}
||J||=\max_{i=0,1,\ldots} \frac{|J(i)|}{v_i}
\end{equation*}
in the space $B(S)$ of sequences $\{J(0),J(1),\ldots\}$ such that $||J||<\infty$. As stated before, weighted sup-norm contractions play an important role in the solution analysis of the MDPs that have unbounded cost per time step \citep{bertsekas2012weighted}. The following assumptions form an essential part of that analysis.
\begin{assumption}\label{assum:G}
The sequence $G=\{G_0,G_1,\ldots\}$, where 
\begin{equation*}
G_i=\max_{u\in U(i)} |g(i,u)|, \quad i=0,1,\ldots
\end{equation*}
belongs to $B(S)$ and $g(i,u)$ is the expected cost per time step when action $u$ is taken at state $i$ and is given by \eqref{eq:expectedcost_expression}. 
\end{assumption}
\begin{assumption}\label{assum:V}
The sequence $V=\{V_0,V_1,\ldots\}$, where
\begin{equation*}
V_i=\max_{u\in U(i)} \sum_{j=0}^{\infty}p_{ij}(u)v_j, \quad i=0,1,\ldots
\end{equation*}
belongs to $B(S)$ and $p_{ij}(u)$ is the transition probability to go from state $i$ to state $j$ with action $u$ and is given by \eqref{eq:poissondist}.
\end{assumption}
\begin{assumption}\label{assum:rho}
There is a scalar $\rho \in (0,1)$ and an integer $m\geq 1$ such that for all $\mu$ and $i=0,1,\ldots$, we have
\begin{equation}
\frac{\sum_{j=1}^{\infty}p^m_{ij}(\mu)v_j}{v_i}\le\rho \label{eq:rho}
\end{equation}
where $p^m_{ij}(\mu)$ is the probability of entering state $j$ at the $m$th time step given that the state at time step $0$ was $i$ and policy $\mu$ is applied.
\end{assumption}
The following proposition provides the main result.
\begin{proposition} \label{prop:bellman}
Let Assumptions \ref{assum:recurrence}, \ref{assum:G}, \ref{assum:V} and \ref{assum:rho} hold. Then the optimal average cost, denoted $\lambda^*$, is the same for all initial states and together with some sequence $h^*=\{h^*(0),h^*(1),\ldots\}$ satisfies Bellman's equation
\begin{equation}
h^*(i)+\lambda^*=\min_{u\in U(i)}\bigg[g(i,u)+\sum_{j=0}^{\infty}p_{ij}(u) h^*(j)\bigg], \quad i=0,1,\ldots \label{eq:bellman}
\end{equation}
Furthermore, if $\mu(i)$ attains the minimum in the above equation for all $i$, the stationary policy $\mu$ is optimal.
\end{proposition}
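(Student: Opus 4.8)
The plan is to reduce the average-cost problem to an associated stochastic shortest path (SSP) problem that terminates at the distinguished recurrent state $0$, and then to solve that SSP using the weighted sup-norm contraction machinery supplied by Assumptions~\ref{assum:G}--\ref{assum:rho}. Concretely, for each scalar $\lambda$ I would define the SSP in which state $0$ is absorbing and cost-free, the one-stage cost at every $i\neq 0$ is $g(i,u)-\lambda$, and transitions still follow $p_{ij}(u)$. Assumption~\ref{assum:recurrence} (finiteness of $C_\mu$ and $N_\mu$, plus the uniform bound $\overline N$) makes this SSP well posed: every stationary policy reaches $0$ in finite expected time and incurs finite expected cost to $0$, so the SSP optimal cost $J^*_\lambda(i)$ is well defined. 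I would then argue that $\lambda\mapsto J^*_\lambda(0)$ is continuous and nonincreasing and that the uniform bound $\overline N$ forces it to have a unique zero $\lambda^*$; this $\lambda^*$ is the claimed optimal average cost and $h^*(i):=J^*_{\lambda^*}(i)$ is the differential-cost vector.

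The contraction step uses the remaining assumptions. Assumption~\ref{assum:G} places the one-stage costs in $B(S)$, Assumption~\ref{assum:V} ensures $J\mapsto\sum_j p_{ij}(u)J(j)$ maps $B(S)$ into itself, and Assumption~\ref{assum:rho} is precisely the $m$-stage weighted-sup-norm contraction estimate. Letting
\[
(T_\lambda J)(i)=\min_{u\in U(i)}\Big[g(i,u)-\lambda+\sum_{j=0}^{\infty}p_{ij}(u)J(j)\Big]\ \ (i\neq0),\qquad (T_\lambda J)(0)=0,
\]
Assumption~\ref{assum:rho} yields $\|T_\lambda^m J-T_\lambda^m J'\|\le\rho\|J-J'\|$ on $B(S)$, so $T_\lambda$ has a unique fixed point in $B(S)$, reached by value iteration at a geometric rate, and that fixed point is $J^*_\lambda$. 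Evaluating at $\lambda=\lambda^*$ gives Bellman's equation \eqref{eq:bellman} with the $h^*$ constructed above. Independence of $\lambda^*$ from the initial state then follows by coupling two copies of the chain at their first common visit to $0$, which occurs in finite expected time by Assumption~\ref{assum:recurrence}, so the Cesàro averages coincide regardless of $x_0$.

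For the optimality claim, take a stationary $\mu$ attaining the minimum in \eqref{eq:bellman}, rewrite the identity as $h^*(i)+\lambda^*=g(i,\mu(i))+\sum_j p_{ij}(\mu(i))h^*(j)$, iterate it over $K$ stages, divide by $K$, and let $K\to\infty$: the telescoping $h^*$ terms contribute $\tfrac1K\bigl(h^*(i)-E[h^*(x_K)]\bigr)$, which vanishes because $h^*\in B(S)$ and Assumption~\ref{assum:rho} bounds $E[v_{x_K}]/v_i$ geometrically along the $m$-step subsequence. This yields $J_\mu(i)=\lambda^*$; applying the same computation to the inequality form of \eqref{eq:bellman} gives $J_{\mu'}(i)\ge\lambda^*$ for every feasible $\mu'$, so $\mu$ is optimal.

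The main obstacle is making the SSP construction and the limiting argument compatible with the unboundedness of $TSTT$: one must ensure the SSP value functions actually lie in the weighted space $B(S)$ (so that the fixed-point theorem of the second step applies to the very object built in the first step), and that $\tfrac1K E[h^*(x_K)]\to0$ despite unbounded costs. Both reduce to checking that the weight sequence $v$ can simultaneously dominate the one-stage costs (Assumption~\ref{assum:G}) and be geometrically contracted by the transition kernel (Assumptions~\ref{assum:V}--\ref{assum:rho}) while the chain still returns to $0$ in uniformly bounded expected time (Assumption~\ref{assum:recurrence}); once all four assumptions are in force this compatibility is inherited from the cited weighted-sup-norm-contraction theory, so the proof is largely a matter of assembling these ingredients and verifying the interchange-of-limit steps.
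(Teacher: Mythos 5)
The paper does not prove this proposition at all: it is quoted verbatim from the literature and the ``proof'' is a one-line citation of Proposition 5.6.5 in volume 2 of Bertsekas's \emph{Dynamic Programming and Optimal Control}. Your sketch is essentially a reconstruction of that cited proof --- the reduction of the average-cost problem to a parametrized stochastic shortest path problem anchored at the recurrent state $0$, the characterization of $\lambda^*$ as the zero of $\lambda\mapsto J^*_\lambda(0)$ using the uniform bound $\overline N$, and the weighted sup-norm $m$-stage contraction from Assumptions \ref{assum:G}--\ref{assum:rho} --- so you are taking the same route the paper implicitly relies on. One detail to repair in a full write-up: you cannot both declare state $0$ absorbing with $(T_\lambda J)(0)=0$ and then locate $\lambda^*$ as the unique zero of $J^*_\lambda(0)$, since the former forces $J^*_\lambda(0)\equiv 0$; the standard fix is to introduce a separate artificial termination state $t$, redirect all transitions into $0$ toward $t$, and define $J^*_\lambda(0)$ as the optimal expected cost of one cycle from $0$ to first return (equivalently, apply the SSP Bellman operator once more at $0$ with the sum over $j\ge 1$ only, which is exactly why Assumption \ref{assum:rho} sums from $j=1$).
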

\begin{proof}
See Proposition 5.6.5 in volume 2 of \cite{bertsekas2012dynamic}.  
\end{proof}

We now show that our MDP satisfies the aforementioned assumptions under some conditions. First, we introduce the definition of a \textit{positive recurrent state} \citep{bremaud2013markov}.
\begin{definition} \label{def:recurrentstate}
A state in a Markov chain is positive recurrent if the expected number of time steps to return to the same state is finite. 
\end{definition}
We now introduce some results that will be useful in checking if the MDP satisfies Assumption \ref{assum:recurrence}.
\begin{lemma} \label{lem:foster}
(Foster's theorem) Consider an irreducible discrete-time Markov chain on a countable state space $S$ having a transition probability matrix with elements $p_{ij}$ for going from state $i$ to state $j$, where $i,j\in S$. Then, a Markov chain is positive recurrent if and only if there exists a Lyapunov function $v:S\rightarrow \mathbb{R}$, such that $v_i\ge 0, \forall i\in S$ and
\begin{equation}
\sum_{j\in S}p_{ij}v_j<\infty \quad \forall i \in F \label{eq:foster_a}
\end{equation}
\begin{equation}
\sum_{j\in S}p_{ij}v_j < v_i-\epsilon \quad \forall i \notin F \label{eq:foster_b}
\end{equation}
for some finite set $F$ and strictly positive $\epsilon$.
\end{lemma}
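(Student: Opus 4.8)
This is the classical Foster--Lyapunov positive recurrence criterion, and the plan is to establish the two implications separately. For \emph{sufficiency} (existence of $v$ implies positive recurrence), I would first bound the expected time to hit the finite set $F$ from every state, and then bootstrap this to finite expected return times for individual states using finiteness of $F$. Let $\tau_F=\inf\{n\ge 0:X_n\in F\}$. Starting from $i\notin F$, put $Y_n=v(X_n)+\epsilon n$; since $X_n\notin F$ on the event $\{n<\tau_F\}$, a one-step conditioning using \eqref{eq:foster_b} shows that the stopped process $Y_{n\wedge\tau_F}$ is a nonnegative supermartingale. Optional stopping gives $\epsilon\,E_i[n\wedge\tau_F]\le E_i[Y_{n\wedge\tau_F}]\le Y_0=v_i$, and letting $n\to\infty$ (monotone convergence on the left, $v\ge 0$ on the right) yields $E_i[\tau_F]\le v_i/\epsilon<\infty$; in particular $\tau_F<\infty$ almost surely, and this holds trivially for $i\in F$ as well. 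Writing $\sigma_F=\inf\{n\ge 1:X_n\in F\}$ for the return time to $F$, a first-step decomposition combined with \eqref{eq:foster_a} gives, for $i\in F$, $E_i[\sigma_F]=1+\sum_{j\notin F}p_{ij}E_j[\tau_F]\le 1+\epsilon^{-1}\sum_j p_{ij}v_j<\infty$, so $\bar m:=\max_{i\in F}E_i[\sigma_F]<\infty$ since $F$ is finite.

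It then remains to pass from finite expected return time to the \emph{set} $F$ to finite expected return time to a single \emph{state}. Let $T_0<T_1<\cdots$ enumerate the successive visits of the chain to $F$; by the strong Markov property $\tilde X_k:=X_{T_k}$ is a Markov chain on the finite set $F$, and it inherits irreducibility from the original chain, hence is positive recurrent. Fixing $i_0\in F$, the mean number $\bar k$ of $\tilde X$-steps needed to return to $i_0$ is then finite, and a Wald-type bound (each inter-visit gap has conditional mean at most $\bar m$, summed over a random index set of expected size $\bar k$) shows that the expected return time of the original chain to $i_0$ is at most $\bar m\,\bar k<\infty$. Thus $i_0$ is positive recurrent, and by irreducibility so is every state.

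For \emph{necessity} (positive recurrence implies existence of $v$), I would exhibit the Lyapunov function directly: fix a state $i_0$, take $F=\{i_0\}$, and set $v_i:=E_i[\tau_{i_0}]$ with $\tau_{i_0}=\inf\{n\ge 0:X_n=i_0\}$, so $v_i\ge 0$ and $v_{i_0}=0$. Positive recurrence together with irreducibility implies $v_i<\infty$ for all $i$ (a standard consequence of first-step analysis propagated along paths emanating from $i_0$). First-step analysis then gives $\sum_j p_{ij}v_j=v_i-1<v_i-\tfrac12$ for $i\ne i_0$, which is \eqref{eq:foster_b} with $\epsilon=\tfrac12$, and $\sum_j p_{i_0 j}v_j=E_{i_0}[\sigma_{i_0}]-1<\infty$, which is \eqref{eq:foster_a}, where $\sigma_{i_0}$ denotes the first return time to $i_0$.

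The main obstacle is the bootstrapping step in the sufficiency direction: the drift inequality \eqref{eq:foster_b} only controls the hitting time of the \emph{set} $F$, whereas positive recurrence concerns return to an individual \emph{state}. This is precisely where finiteness of $F$ is essential — it renders the embedded chain $\tilde X$ on $F$ a finite irreducible chain, which is automatically positive recurrent — and it is the place where one must be careful with the Wald-type identity that aggregates the time accumulated over a random number of excursions between consecutive visits to $F$. The supermartingale and optional-stopping argument itself is routine, provided one stops at $\tau_F$ so that \eqref{eq:foster_b} legitimately applies at every step before the stopping time.
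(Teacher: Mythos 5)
Your argument is correct, but note that the paper does not actually prove this lemma: it is a quoted textbook result, and the ``proof'' in the paper is only a citation to Theorem 1.1, Chapter 5 of Br\'emaud. What you have written is essentially the standard Foster--Lyapunov proof that such a reference contains, and both directions are sound: the stopped process $Y_{n\wedge\tau_F}=v(X_{n\wedge\tau_F})+\epsilon (n\wedge\tau_F)$ is indeed a nonnegative supermartingale under \eqref{eq:foster_b}, giving $E_i[\tau_F]\le v_i/\epsilon$; condition \eqref{eq:foster_a} then makes $\max_{i\in F}E_i[\sigma_F]$ finite; and the passage from the set $F$ to a single state via the embedded chain on the finite set $F$ together with the Wald-type bound $E[T_N]\le \bar m\,E[N]$ (justified by conditioning on $\{N\ge k\}\in\mathcal{F}_{T_{k-1}}$) is exactly the right way to close the sufficiency direction. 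Two small points you should make explicit if you write this out in full. First, in the necessity direction, the finiteness of $v_j=E_j[\tau_{i_0}]$ for an irreducible positive recurrent chain is cleanest via the strong Markov inequality $E_{i_0}[\sigma_{i_0}]\ge P_{i_0}(\tau_j<\sigma_{i_0})\,E_j[\tau_{i_0}]$ with $P_{i_0}(\tau_j<\sigma_{i_0})>0$ by irreducibility and recurrence; ``propagating along paths'' from $i_0$ only directly handles states reachable in one step. Second, the paper states \eqref{eq:foster_b} with a strict inequality, which is harmless in both directions (your choice $\epsilon=\tfrac12$ in the converse handles it), but it is worth remarking that the criterion is usually stated with $\le$ and that the two forms are interchangeable here.
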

\begin{proof}
See Theorem 1.1, Chapter 5 of \cite{bremaud2013markov}.
\end{proof}
We now present an expression that will be useful for proving Assumption \ref{assum:recurrence} and other results in the paper.
\begin{lemma} \label{lem:expectedvalue_tstt}
$g\left(x_k,\mu_k\right)\le k^{0,m}\left(\frac{\theta}{TSTT(x_k,\mu_k)}\right)+\sum_{r }k^{r,m} \mu_k^r$
\end{lemma}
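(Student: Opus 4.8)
The plan is to view $g(x_k,\mu_k)$ as the expectation of the total system travel time at the \emph{next} state, with the toll vector held fixed, and then to dominate this piece-wise linear integrand by a single affine function of the next-state demand before integrating.

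First I would rewrite the cost \eqref{eq:expectedcost_expression} as $g(x_k,\mu_k)=E[TSTT(X,\mu_k)]$, where $X$ is a Poisson random variable with mean $\theta/TSTT(x_k,\mu_k)$; this is immediate because the weights $p_{x_k x_{k+1}}(\mu_k)$ appearing in \eqref{eq:expectedcost_expression} are exactly the Poisson probability mass function with rate $\theta/TSTT(x_k,\mu_k)$ given in \eqref{eq:poissondist}, while the second argument of $TSTT$ stays equal to the fixed action $\mu_k$ throughout the sum.

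Next I would use the representation \eqref{eq:tstt_new_piecewise}. For any value of the next-state demand exactly one indicator $\gamma^{\psi}$ equals $1$, so $TSTT(X,\mu_k)=k^{0,\psi^\star}X+\sum_{r}k^{r,\psi^\star}\mu_k^r$ for the active piece $\psi^\star$. Letting $m$ denote the piece whose coefficients dominate those of all the others --- such a piece exists because the continuous piece-wise linear approximation of the monotonically increasing, convex BPR functions has non-decreasing segment coefficients, so $m$ is the last piece $\xi$ --- and using that $k^{0,\psi}$, $k^{r,\psi}$, $\mu_k^r$ and $X$ are all nonnegative, one obtains the pointwise bound $TSTT(X,\mu_k)\le k^{0,m}X+\sum_{r}k^{r,m}\mu_k^r$. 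Taking expectations, invoking linearity and $E[X]=\theta/TSTT(x_k,\mu_k)$, then gives $g(x_k,\mu_k)\le k^{0,m}(\theta/TSTT(x_k,\mu_k))+\sum_{r}k^{r,m}\mu_k^r$, which is the claim.

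The step I expect to be the crux is the domination of $TSTT(X,\mu_k)$ by a single affine function, i.e.\ producing one index $m$ with $k^{0,m}\ge k^{0,\psi}$ and $k^{r,m}\ge k^{r,\psi}$ for every piece $\psi$ and every route $r$ at once; this rests on the structural properties of $TSTT(\cdot,u)$ established in Appendix~\ref{app:tstt_piecewiselinear} (continuity, piece-wise linearity, and monotonicity inherited from the BPR functions). Everything after that is a one-line computation. If instead only the separate maxima $k^{0,m}=\max_{\psi}k^{0,\psi}$ and $k^{r,m}=\max_{\psi}k^{r,\psi}$ are available, possibly attained at different pieces, the same inequality still holds with $m$ understood as shorthand for ``the largest coefficient within each group,'' so the conclusion is unaffected.
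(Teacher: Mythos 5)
Your proposal is correct and follows essentially the same route as the paper: rewrite $g(x_k,\mu_k)$ as the expectation of $TSTT(x_{k+1},\mu_k)$ under the Poisson law, bound the piece-wise linear coefficients by their maxima, and use $E\{x_{k+1}\}=\theta/TSTT(x_k,\mu_k)$ together with $\sum_{\psi}\gamma^{\psi}=1$. Note that the paper does not attempt to identify a single dominating piece (your convexity argument for that is not actually justified for the toll coefficients $k^{r,\psi}$, whose ordering across pieces is unclear); it simply takes the componentwise maxima $k^{0,m}=\max_{\psi}k^{0,\psi}$ and $k^{r,m}=\max_{\psi}k^{r,\psi}$, which is exactly the fallback you describe in your last sentence and which suffices since $x_{k+1}$ and $\mu_k^r$ are nonnegative.
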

\begin{proof}
Recall that
\begin{align*}
g\left(x_k,\mu_k\right)&=\sum_{x_{k+1}=0}^{\infty}p_{x_k x_{k+1}}\left(\mu_k\right) TSTT\left(x_{k+1},\mu_k\right), \\
&=\sum_{x_{k+1}=0}^{\infty}p_{x_k x_{k+1}}\left(\mu_k\right)\sum_{\psi=1}^\xi \left(k^{0,\psi}x_{k+1}+\sum_{r }k^{r,\psi} \mu_k^r\right)\gamma^{\psi}\\
&\le\sum_{x_{k+1}=0}^{\infty}p_{x_k x_{k+1}}\left(\mu_k\right)\sum_{\psi=1}^\xi \left(k^{0,m}x_{k+1}+\sum_{r }k^{r,m} \mu_k^r\right)\gamma^{\psi} \\
&=k^{0,m}\left(\frac{\theta}{TSTT(x_k,\mu_k)}\right)+\sum_{r }k^{r,m} \mu_k^r
\end{align*}
as $E\{x_{k+1}\}=\left(\frac{\theta}{TSTT(x_k,\mu_k)}\right)$ and $\sum_{x_{k+1}=0}^{\infty}p_{x_k x_{k+1}}\left( \mu_k\right)=1$. Here, $k^{0,m}=\max_\psi \{k^{0,\psi}\}$ and $k^{r,m}=\max_\psi \{k^{r,\psi}\}$. 
\end{proof}
We now start showing that the MDP satisfies Assumption \ref{assum:recurrence}. We first show that expected recurrence times are finite for all feasible policies.
\begin{lemma}\label{lem:showfoster}
$N_{\mu}$ is finite in the MDP for all the feasible policies. 
\end{lemma}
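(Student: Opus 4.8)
The plan is to verify the hypotheses of Foster's theorem (Lemma~\ref{lem:foster}) for the Markov chain induced by an arbitrary feasible (stationary) policy $\mu$, and then invoke Definition~\ref{def:recurrentstate} to conclude that the expected return time to state $0$, which is exactly $N_\mu$, is finite.

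First I would check irreducibility. Under $\mu$ the one-step transition probability from $i$ to $j$ is $p_{ij}(\mu(i))$, which by \eqref{eq:poissondist} is the Poisson probability with rate $\theta/TSTT(i,\mu(i))$. Since $TSTT(i,\mu(i))$ is finite and strictly positive (the free-flow terms $\alpha^{r,1}$ and the tolls $\mu^r(i)\ge\tau^{min}>0$ are positive), this rate is a finite positive number, so $p_{ij}(\mu(i))>0$ for every $j\in S$; hence the induced chain is irreducible — in fact fully connected. Next, I take the Lyapunov function $v_i=i$, which is nonnegative on $S$. Then
\[
\sum_{j\in S}p_{ij}(\mu(i))\,v_j = E\{x_{k+1}\mid x_k=i,\ u_k=\mu(i)\} = \frac{\theta}{TSTT(i,\mu(i))},
\]
because the Poisson mean equals its rate. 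Using the piecewise-linear representation \eqref{eq:tstt_new_piecewise}, for any $i$ and any toll vector exactly one $\gamma^{\psi}$ equals one and all $k^{0,\psi}$ are positive, so $TSTT(i,u)\ge \underline{k}\,i$ with $\underline{k}:=\min_{\psi\in\{1,\dots,\xi\}}k^{0,\psi}>0$. Therefore $\sum_{j}p_{ij}(\mu(i))v_j \le \theta/(\underline{k}\,i)$ for $i\ge 1$, a bound that tends to $0$ as $i\to\infty$ and, notably, is uniform over all feasible $\mu$ (this uniformity is not needed for the present lemma but will be convenient later).

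Finally I would fix $\epsilon=1$ and choose the finite set $F=\{0,1,\dots,N_0-1\}$ with $N_0$ large enough that $\theta/(\underline{k}\,i)\le i-1$ for all $i\ge N_0$; such an $N_0$ exists. Then \eqref{eq:foster_b} holds for all $i\notin F$, while for $i\in F$ we have $\sum_{j}p_{ij}(\mu(i))v_j=\theta/TSTT(i,\mu(i))<\infty$ since $TSTT(i,\mu(i))>0$, giving \eqref{eq:foster_a}. Foster's theorem then yields positive recurrence of the chain under $\mu$, so by Definition~\ref{def:recurrentstate} the expected number of time steps to return to state $0$ is finite, i.e.\ $N_\mu<\infty$; as $\mu$ was arbitrary, the claim follows.

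The one step I would be most careful about is the uniform lower bound $TSTT(i,u)\ge\underline{k}\,i$: it rests on the structural fact, asserted via \eqref{eq:tstt_new_piecewise} and argued in Appendix~\ref{app:tstt_piecewiselinear}, that $TSTT$ is continuous piecewise linear with finitely many pieces and strictly positive slope coefficients $k^{0,\psi}$, so that whichever piece is active at $(i,u)$ its slope is at least $\underline{k}$. Everything else is a direct verification of Foster's two conditions with the identity Lyapunov function, and I would not expect any genuine difficulty there. (Note that this argument uses only positivity of the slopes, not the finiteness of the toll set or $\tau^{max}$; those bounds enter when checking Assumptions~\ref{assum:G}--\ref{assum:rho}.)
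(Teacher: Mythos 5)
Your proposal is correct and follows essentially the same route as the paper: verify irreducibility, apply Foster's theorem with a linear Lyapunov function, and use the lower bound $TSTT(i,u)\ge k^{0,l}i$ coming from \eqref{eq:tstt_new_piecewise} to check condition \eqref{eq:foster_b} outside a finite set. The only (cosmetic) difference is that the paper takes the affine Lyapunov function $v_i=k^{0,m}i+\sum_{r}k^{r,m}\tau^{max}$ — which it then reuses as the weight sequence for Assumptions \ref{assum:G}--\ref{assum:rho} — whereas you use $v_i=i$, which works equally well here.
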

\begin{proof}
The Foster's theorem requires the MDP to be irreducible, which is the case as all the pairs of states $i$ and $j$ are communicating since $p_{ij}(\mu)>0, \forall \mu$. Denote $v_i=k^{0,m}i+\sum_{r }k^{r,m} \tau^{max}$. 
First, we show that condition \eqref{eq:foster_a} is satisfied for all the states and feasible policies of the MDP. For state $i$ and policy $\mu$, 
\begin{align*}
\sum_{j\in S}p_{ij}(\mu)v_j=&\sum_{j\in S}p_{ij}(\mu)(k^{0,m}j+\sum_{r }k^{r,m} \tau^{max})\\
=& \frac{k^{0,m}\theta}{TSTT(i,\mu)}+\sum_r k^{r,m} \tau^{max}
\end{align*}
The last equality in the above set of equations comes after doing an analysis similar to the proof of Lemma \ref{lem:expectedvalue_tstt}. Let us compute the maximum value of $\sum_{j\in S}p_{ij}(\mu)v_j$ over all the states and policies and show that it is finite. Note that
\begin{equation}
\max_{\mu}\left(\max_{i}\left(\sum_{j\in S}p_{ij}(\mu)v_j\right)\right)=  \frac{k^{0,m}\theta}{TSTT(0,u^{min})}+\sum_r k^{r,m} \tau^{max} <\infty \label{eq:showfoster_a}
\end{equation}
where $u^{min}=\{\tau^{min},\ldots,\tau^{min}\}$ and  $TSTT(0,u^{min})$ is the total system travel time with state $0$ when the toll values on all the routes are minimum and equal to $\tau^{min}$ (this is coming from Lemma \ref{lem:tstt_mono} of Appendix \ref{app:tstt_piecewiselinear} that $TSTT$ is a monotonically increasing function). Note that the inequality in Equation \eqref{eq:showfoster_a} holds true as $\tau^{min}>0$. 

We now show that there exists a finite set of states such that all the elements not belonging to this set satisfy the condition \eqref{eq:foster_b}. We claim that for all the states $i\ge \sqrt{\frac{\theta}{k^{0,l}}}+1$, condition \eqref{eq:foster_b} is satisfied with $\epsilon = k^{0,m}\left(\sqrt{\frac{\theta}{k^{0,l}}}+1-\frac{\theta}{k^{0,l}\left(\sqrt{\frac{\theta}{k^{0,l}}}+1\right)}\right) $, where $k^{0,l}=\min_{\psi} k^{0,\psi}$. Here, $F$ is set of all integers in the interval $\bigg[0, \sqrt{\frac{\theta}{k^{0,l}}}+1\bigg)$. For a policy $\mu$, we compute 
\begin{align*}
&v_i-\sum_{j\in S}p_{ij}(\mu)v_j \\
&=k^{0,m}i+\sum_r k^{r,m}\tau^{max}-\sum_{j\in S}p_{ij}(\mu)
\left(k^{0,m}j+\sum_r k^{r,m}\tau^{max}\right) \\
&=k^{0,m}i+\sum_r k^{r,m}\tau^{max}-\frac{k^{0,m}\theta}{TSTT(i,\mu)}-\sum_r k^{r,m}\tau^{max} \\
&=k^{0,m}i-\frac{k^{0,m}\theta}{TSTT(i,\mu)} \\
&>k^{0,m}i-\frac{k^{0,m}\theta}{k^{0,l}i}.
\end{align*}
Note that the function $k^{0,m}i-\frac{k^{0,m}\theta}{k^{0,l}i}$ becomes equal to zero at $i=\sqrt{\frac{\theta}{k^{0,l}}}$ and is monotonically increasing with $i$ as the first-order derivative of the function is $k^{0,m}+\frac{k^{0,m}\theta}{k^{0,l}i^2}>0,\forall i$.  
So, for all policies and for states $i\ge \sqrt{\frac{\theta}{k^{0,l}}}+1$,
\begin{equation*}
v_i-\sum_{j\in S}p_{ij}(\mu)v_j >k^{0,m}\left(\sqrt{\frac{\theta}{k^{0,l}}}+1-\frac{\theta}{k^{0,l}\left(\sqrt{\frac{\theta}{k^{0,l}}}+1\right)}\right) 
\end{equation*}
Thus, from Lemma \ref{lem:foster}, $N_{\mu}<\infty, \forall \mu$.
\end{proof}

Before showing that the expected recurrence times $N_{\mu}$ are uniformly bounded we state the following basic limit theorem of Markov chains.
\begin{lemma} \label{lem:basiclimitthm}
Consider a recurrent irreducible aperiodic Markov chain. Let $p^k_{ii}$ be the probability of entering state $i$ at the $k$th time step given that the initial state (state at time step $0$) is $i$. Also, by convention $p^0_{ii}=1$. Let $f_{ii}^k$ be the probability of first returning to state $i$ at the $k$th time step. Thus, 
\begin{equation*}
p^k_{ii}-\sum_{l=0}^k f_{ii}^{k-l}p^l_{ii}=
\begin{cases}
      1, & \text{if}\ k=0 \\
      0, & \text{otherwise}.
    \end{cases}
\end{equation*}
Then,
\begin{equation*}
\lim_{k \rightarrow \infty} p_{ii}^k=\frac{1}{\sum_{k=0}^{\infty}kf_{ii}^k}.
\end{equation*}
\end{lemma}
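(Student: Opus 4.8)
This is the classical discrete renewal theorem of Erd\H{o}s, Feller and Pollard, applied to the proper renewal sequence $\{p^k_{ii}\}_{k\ge0}$ whose inter-renewal law is $\{f^k_{ii}\}_{k\ge1}$; one could simply cite it (e.g.\ in \cite{bremaud2013markov}), but a self-contained plan goes as follows. Write $\mu:=\sum_{k\ge1}kf^k_{ii}\in(0,\infty]$. First I would pass to generating functions: for $s\in[0,1)$ set $P(s):=\sum_{k\ge0}p^k_{ii}s^k$ and $F(s):=\sum_{k\ge1}f^k_{ii}s^k$; multiplying the renewal identity in the statement by $s^k$ and summing over $k\ge0$ gives $P(s)=1+F(s)P(s)$, i.e.\ $P(s)=\bigl(1-F(s)\bigr)^{-1}$. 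Since $i$ is recurrent the chain returns to $i$ almost surely, so $\sum_{k\ge1}f^k_{ii}=1$ and $F(1^-)=1$; moreover $\frac{1-F(s)}{1-s}=\sum_{k\ge1}f^k_{ii}(1+s+\cdots+s^{k-1})\uparrow\mu$ as $s\uparrow1$ by monotone convergence. Hence $(1-s)P(s)\to1/\mu$, which already identifies $1/\mu$ as the only possible value of $\lim_k p^k_{ii}$ (with the convention $1/\infty:=0$). What remains is to show the limit exists.

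For existence I would run the Erd\H{o}s--Feller--Pollard argument. Put $\eta:=\limsup_k p^k_{ii}$ and pick $n_\nu\uparrow\infty$ with $p^{n_\nu}_{ii}\to\eta$. In the renewal equation $p^{n_\nu}_{ii}=\sum_{k\ge1}f^k_{ii}\,p^{n_\nu-k}_{ii}$ the weights $f^k_{ii}$ sum to one, $0\le p^m_{ii}\le1$, and $\limsup_m p^m_{ii}=\eta$; a short convexity argument then forces $p^{n_\nu-k}_{ii}\to\eta$ for every $k$ with $f^k_{ii}>0$, and, iterating, for every $k$ in the additive semigroup $\Sigma$ generated by $\{k:f^k_{ii}>0\}$. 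By \emph{aperiodicity} (the $\gcd$ of that set is $1$), $\Sigma$ contains all sufficiently large integers. Passing to a diagonal subsequence so that $g(m):=\lim_{\nu\to\infty} p^{n_\nu-m}_{ii}$ exists for every $m\ge0$, one gets $0\le g\le\eta$, $g\equiv\eta$ off a finite set, and (letting $\nu\to\infty$ in the renewal equation, by dominated convergence) $g(m)=\sum_{k\ge1}f^k_{ii}\,g(m+k)$ for all $m$. If $g$ were not identically $\eta$, take the largest $m_0$ with $g(m_0)<\eta$; then $g(m_0)=\sum_{k\ge1}f^k_{ii}\,g(m_0+k)=\eta$ because every $m_0+k$ exceeds $m_0$ --- a contradiction. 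So $g\equiv\eta$.

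To finish I would combine this with the complementary identity $\sum_{l=0}^{n}p^{n-l}_{ii}\,r_l=1$ (valid for every $n$), where $r_l:=\sum_{k>l}f^k_{ii}$, so that $\sum_{l\ge0}r_l=\mu$; this identity follows from $P(s)=\bigl(1-F(s)\bigr)^{-1}$ upon multiplying by $R(s):=\sum_{l\ge0}r_l s^l=\frac{1-F(s)}{1-s}$, which yields $P(s)R(s)=1/(1-s)$. Letting $n=n_\nu\to\infty$: when $\mu<\infty$, dominated convergence gives $1=\sum_{l\ge0}g(l)r_l=\eta\mu$, so $\eta=1/\mu$; when $\mu=\infty$, truncating at an arbitrary $N$ gives $\eta\sum_{l=0}^{N}r_l\le1$, forcing $\eta=0$. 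Running the identical chain of reasoning with $\liminf_k p^k_{ii}$ in place of $\limsup$ shows it too equals $1/\mu$, hence $\lim_k p^k_{ii}=1/\sum_{k}kf^k_{ii}$.

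I expect the second paragraph to be the main obstacle: it is the Tauberian upgrade from the Abelian limit of the first paragraph to an honest pointwise limit, and it is precisely there that the aperiodicity hypothesis is indispensable --- with period $d>1$ the semigroup $\Sigma$ would omit infinitely many integers, there would be no largest ``bad'' index, and $p^k_{ii}$ would not converge (it would oscillate through the residue classes mod $d$). Everything else --- the generating-function manipulation, identifying $\mu$ as $R(1^-)$, and the handling of the null-recurrent case $\mu=\infty$ via truncation --- is routine.
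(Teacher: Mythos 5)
Your proposal is correct: it is the classical Erd\H{o}s--Feller--Pollard renewal-theorem argument (Abelian limit via generating functions, Tauberian upgrade using aperiodicity, and the tail identity $\sum_{l=0}^{n}p^{n-l}_{ii}r_l=1$ to close both the positive- and null-recurrent cases). The paper does not prove the lemma itself but simply cites Theorem 1.2, Chapter 3 of Karlin and Taylor, and your outline is essentially the proof given there, so the two approaches coincide.
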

\begin{proof}
See Theorem 1.2, chapter 3 of \cite{karlin2014first}.
\end{proof}
Denote $f_{ii}^k(\mu)$ as the probability of first returning to state $i$ at the $k$th time step given that the initial state was $i$ and when policy $\mu$ was applied. We first show that $f_{00}^k(\mu)$ is non-zero for all feasible policies $\mu$ of the MDP, which will be useful in showing that the expected recurrence times are uniformly bounded.
\begin{lemma} \label{lem:f00}
$f_{00}^k(\mu)$ is non-zero for all feasible policies $\mu$ of the MDP.
\end{lemma}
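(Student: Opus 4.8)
The plan is to exploit the fact that \emph{every} one-step transition probability of this MDP is strictly positive, so that explicit excursions from state $0$ back to $0$ of every length can be exhibited, each carrying positive probability; summing over such excursions then forces $f_{00}^k(\mu)>0$.

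First I would establish the key auxiliary fact: for every state $i$ and every feasible action $u\in U(i)$, the Poisson rate $\theta/TSTT(i,u)$ appearing in \eqref{eq:poissondist} is a finite, strictly positive number. Finiteness is immediate from the piecewise-linear representation \eqref{eq:tstt_new_piecewise}, which is a finite sum of affine terms evaluated at a finite argument. Strict positivity holds because every coefficient $k^{0,\psi},k^{r,\psi}$ is positive and every feasible toll obeys $u^r\ge \tau^{min}>0$, so that on the active segment $TSTT(i,u)\ge \sum_r k^{r,\psi}\tau^{min}>0$ (in particular this is already true at the empty state $i=0$). Hence $\theta/TSTT(i,u)\in(0,\infty)$, and therefore $p_{ij}(u)$, being the Poisson mass at the non-negative integer $j$ with a finite positive parameter, satisfies $p_{ij}(u)>0$ for all $i,j$ and all $u\in U(i)$.

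Next, fixing a feasible stationary policy $\mu$, the action at each state is a well-defined element $\mu(i)\in U(i)$, so all transition probabilities $p_{ij}(\mu):=p_{ij}(\mu(i))$ are strictly positive by the previous paragraph. For $k=1$ this gives $f_{00}^1(\mu)=p_{00}(\mu(0))>0$ directly. For $k\ge 2$, I would consider the single sample path $0\to 1\to 1\to\cdots\to 1\to 0$ that remains at state $1$ for the $k-1$ intermediate steps and returns to $0$ exactly at step $k$; since no intermediate state equals $0$, this path is a valid first-return-to-$0$-at-time-$k$ trajectory. Its probability equals $p_{01}(\mu(0))\,p_{11}(\mu(1))^{\,k-2}\,p_{10}(\mu(1))$, a product of finitely many strictly positive factors, hence positive. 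Because $f_{00}^k(\mu)$ is the sum of the probabilities of all first-return paths of length $k$, it is bounded below by this one path's probability, so $f_{00}^k(\mu)>0$. As $\mu$ was an arbitrary feasible policy, the claim follows.

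I do not expect a genuine obstacle here. The only point requiring care is the verification that $TSTT(i,u)>0$ even at $i=0$, so that the Poisson mean is not forced to $0$ and every mass is positive — this is precisely where $\tau^{min}>0$ enters. A minor second point is bookkeeping: being explicit that "first return" forbids the intermediate states from hitting $0$, a condition the all-ones excursion respects by construction.
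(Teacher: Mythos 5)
Your proof is correct, and it takes a genuinely different and more elementary route than the paper's. The paper factorizes $f_{00}^k(\mu)$ into conditional probabilities of aggregate events, writing it as $P(x_k=0\mid x_{k-1}\neq 0,\mu)\prod_{m}P(x_m\neq 0\mid x_{m-1}\neq 0,\mu)\,P(x_1\neq 0\mid x_0=0,\mu)$, then introduces the normalized weights $\overline{P}(x_{m-1}=i\mid\mu)$ and argues each factor is strictly positive using $0<e^{-\theta/TSTT(i,\mu)}<1$. You instead observe that the Poisson parameter $\theta/TSTT(i,u)$ lies in $(0,\infty)$ for every state and every feasible toll vector (finiteness from the piecewise-linear form \eqref{eq:tstt_new_piecewise}, strict positivity from $\tau^{min}>0$ and the positive coefficients $k^{r,\psi}$ --- the same place the paper's argument ultimately relies on $\tau^{min}>0$), so that \emph{every} one-step transition probability $p_{ij}(u)$ is strictly positive; you then lower-bound $f_{00}^k(\mu)$ by the probability of the single explicit excursion $0\to 1\to\cdots\to 1\to 0$, which is a product of finitely many positive factors. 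Your version buys simplicity and robustness: it needs only the Markov property along a fixed sample path, whereas the paper's factorization conditions on aggregate events like $\{x_{m-1}\neq 0\}$, which is written somewhat loosely (the Markov property does not directly yield that product without carrying the full history in the conditioning). The paper's decomposition, carried out carefully, yields more structural information about $f_{00}^k(\mu)$, but for the purpose of this lemma only positivity is needed, and your single-path bound delivers it cleanly.
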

\begin{proof}
The probability of first returning to state $0$ after the first time step given that the initial state was $0$ and policy $\mu$ was applied is given by
\begin{equation*}
f_{00}^1(\mu)=P(x_1=0|x_0=0,\mu),
\end{equation*}
where $P(x_{k+1}=j|x_k=i,\mu)$ is the probability of entering state $j$ at time step $k+1$ given that the state at time step $k$ is $i$ and policy $\mu$ is applied and is equal to $p_{ij}(\mu)$. Similarly, we compute the first return probability at time step $2$, 
\begin{align}
f_{00}^2(\mu)&=P(x_2=0,x_1\neq 0|x_0=0,\mu)\\&
=P(x_2=0|x_1\neq 0,\mu)P(x_1\neq 0|x_0=0,\mu). \label{eq:markovpropo}
\end{align}
The equality in Equation \eqref{eq:markovpropo} comes because of the Markov property \citep{bremaud2013markov}. Following in the same way, the first return probability after $k$ time steps can be written as
\begin{equation*}
f_{00}^k(\mu)=P(x_k=0|x_{k-1}\neq 0,\mu)\prod_{m=2}^{k-1}P(x_m\neq0|x_{m-1}\neq 0,\mu)P(x_1\neq0|x_0= 0,\mu).
\end{equation*}
Note that 
\begin{align*}
P(x_k=0|x_{k-1}\neq 0,\mu) 
&=\frac{P(x_k=0,x_{k-1}\neq 0,\mu)}{P(x_{k-1}\neq 0,\mu)} \\
&=\frac{\sum_{i>0}P(x_k=0|x_{k-1}= i,\mu)P(x_{k-1}=i|\mu)}{\sum_{j>0}P(x_{k-1}=j|\mu)}\\
&=\sum_{i>0}P(x_k=0|x_{k-1}= i,\mu)\overline{P}(x_{k-1}=i|\mu),
\end{align*}
where $\overline{P}(x_{k-1}=i|\mu)=\frac{P(x_{k-1}=i|\mu)}{\sum_{j>0}P(x_{k-1}=j|\mu)}$ and $P(x_k=0|x_{k-1}= i,\mu)=e^{-\frac{\theta}{TSTT(i,\mu)}}$. Similarly, 
\begin{align*}
P(x_k\ne 0|x_{k-1}\neq 0,\mu)&=1-P(x_k=0|x_{k-1}\neq 0,\mu)
\\&=1-\sum_{i>0}P(x_k=0|x_{k-1}= i,\mu)\overline{P}(x_{k-1}=i|\mu).
\end{align*}
Also, 
\begin{align*}
P(x_1\ne 0|x_{0}= 0,\mu)&=1-P(x_1=0|x_{0}= 0,\mu)\\&=1-e^{-\frac{\theta}{TSTT(0,\mu)}}.
\end{align*}
So, it follows that 
\begin{multline*}
f^k_{00}(\mu)=\left(\sum_{i>0}e^{-\frac{\theta}{TSTT(i,\mu)}}\overline{P}(x_{k-1}=i|\mu)\right).\\ \prod_{m=2}^{k-1}\left(1-\sum_{i>0}e^{-\frac{\theta}{TSTT(i,\mu)}}\overline{P}(x_{m-1}=i|\mu)\right) \left(1-e^{-\frac{\theta}{TSTT(0,\mu)}}\right).
\end{multline*}
We now show that $f_{00}^k(\mu)$ is positive. The following holds
\begin{equation*}
1-e^{-\frac{\theta}{TSTT(0,\mu)}}>0, \forall i, \mu,
\end{equation*}
because $\frac{\theta}{TSTT(0,\mu)}>0,\mu$. Also, 
\begin{equation*}
\sum_{i>0}e^{-\frac{\theta}{TSTT(i,\mu)}}\overline{P}(x_{m-1}=i|\mu)<\sum_{i>0}\overline{P}(x_{m-1}=i|\mu),
\end{equation*}
as $e^{-\frac{\theta}{TSTT(i,\mu)}}<1, \forall i, \mu $. Note that 
\begin{equation*}
\sum_{i>0}\overline{P}(x_{m-1}=i|\mu)=\sum_{i>0}\frac{P(x_{m-1}=i|\mu)}{\left(\sum_{j>0}P(x_{m-1}=j|\mu)\right)}=1.
\end{equation*}
Therefore, 
\begin{equation*}
\prod_{m=2}^{k-1}\left(1-\sum_{i>0}e^{-\frac{\theta}{TSTT(i,\mu)}}\overline{P}(x_{m-1}=i|\mu)\right)>0.
\end{equation*}
Also, since we have shown that $P(x_k=0|x_{k-1}\ne0,\mu)>0$, so $P(x_{k-1}\ne 0,\mu)>0$ for $P(x_k=0|x_{k-1}\ne0,\mu)$ to remain defined. So, $\exists i\ne 0$, such that $\overline{P}(x_{k-1}=i|\mu)>0$. Thus, combining the fact that $e^{-x}>0\hspace{2mm} \forall x$, we get
\begin{equation*}
\sum_{i>0}e^{-\frac{\theta}{TSTT(i,\mu)}}\overline{P}(x_{k-1}=i|\mu)>0.
\end{equation*}
Therefore, $f_{00}^k(\mu)>0, \forall \mu$. 
\end{proof}
We now show that the expected recurrence times of the MDP are uniformly bounded.
\begin{lemma} \label{lem:uniformrecurrence}
For our MDP, there exists a $\overline{N}>0$, such that $N_{\mu}<\overline{N}, \forall\mu$.
\end{lemma}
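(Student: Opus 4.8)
The plan is to reduce the uniform bound on $N_\mu$ to a single, policy-independent lower bound on the probability of jumping to state $0$ in one step from an arbitrary state. The key observation is that, by \eqref{eq:poissondist}, the probability of entering $0$ in one step from state $i$ under a stationary policy $\mu$ is $p_{i0}(\mu)=e^{-\theta/TSTT(i,\mu(i))}$, and — by exactly the monotonicity of $TSTT$ already invoked in the proof of Lemma \ref{lem:showfoster} (Lemma \ref{lem:tstt_mono} of Appendix \ref{app:tstt_piecewiselinear}), together with $\tau^{min}>0$ — we have $TSTT(i,\mu(i))\ge TSTT(0,u^{min})>0$ for every state $i$ and every feasible action $\mu(i)$, where $u^{min}=\{\tau^{min},\dots,\tau^{min}\}$. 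Hence
\begin{equation*}
p_{i0}(\mu)\ \ge\ e^{-\theta/TSTT(0,u^{min})}\ =:\ \delta\in(0,1)\qquad\text{for all }i\in S\text{ and all feasible }\mu .
\end{equation*}

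Given this bound, I would dominate the first return time $T_0:=\min\{k\ge 1:x_k=0\}$ by a geometric random variable. Fix a stationary policy $\mu$; the state sequence is then a time-homogeneous Markov chain, so conditioning on $x_{k-1}$ and using $P(x_k\ne 0\mid x_{k-1}=i,\mu)=1-p_{i0}(\mu)\le 1-\delta$ for every $i$, the tower property gives $P(T_0>k\mid x_0=0,\mu)\le(1-\delta)\,P(T_0>k-1\mid x_0=0,\mu)$, hence $P(T_0>k\mid x_0=0,\mu)\le(1-\delta)^k$ for all $k\ge 0$. Summing, $N_\mu=\sum_{k=0}^{\infty}P(T_0>k\mid x_0=0,\mu)\le\sum_{k=0}^{\infty}(1-\delta)^k=1/\delta$, a finite quantity independent of $\mu$. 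Taking $\overline{N}:=1/\delta+1=e^{\theta/TSTT(0,u^{min})}+1$ then yields $N_\mu<\overline{N}$ for all $\mu$, which is the claim (and, together with Lemma \ref{lem:showfoster}, the part of Assumption \ref{assum:recurrence} concerning $N_\mu$).

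An equivalent route, which is presumably why Lemmas \ref{lem:basiclimitthm} and \ref{lem:f00} were set up, is to note that the same one-step bound gives $p^k_{00}(\mu)=\sum_{i\in S}p^{k-1}_{0i}(\mu)\,p_{i0}(\mu)\ge\delta$ for every $k\ge 1$; since $0$ carries a self-loop ($p_{00}(\mu)>0$), the recurrent irreducible chain is aperiodic, so Lemma \ref{lem:basiclimitthm} applies and gives $1/N_\mu=\lim_{k\to\infty}p^k_{00}(\mu)\ge\delta$, i.e.\ $N_\mu\le 1/\delta$ once more. Either way there is no genuinely hard step; the only points needing care are (i) extracting the policy-independent constant $\delta$ from the monotonicity of $TSTT$ in \emph{both} its demand and toll arguments (and from $\tau^{min}>0$), exactly as in Lemma \ref{lem:showfoster}, and (ii) in the geometric-domination argument, being explicit that the recursion on $P(T_0>k\mid x_0=0,\mu)$ relies on time-homogeneity of the chain induced by a \emph{fixed stationary} policy, rather than on the Markov property along an arbitrary sample history.
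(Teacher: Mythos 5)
Your proposal is correct, and it takes a genuinely different route from the paper. The paper argues by contradiction: it posits a sequence of policies $\{\mu_j\}$ with $N_{\mu_j}\to\infty$, passes to a limit policy $\mu_\infty$, and combines the renewal identity of Lemma \ref{lem:basiclimitthm} with the positivity of the first-return probabilities from Lemma \ref{lem:f00} to reach a contradiction. You instead prove the bound directly via a Doeblin-type minorization: the uniform lower bound $p_{i0}(\mu)=e^{-\theta/TSTT(i,\mu(i))}\ge e^{-\theta/TSTT(0,u^{min})}=:\delta>0$ (which follows from Lemma \ref{lem:tstt_mono} and $\tau^{min}>0$, exactly the facts the paper already invokes in Lemma \ref{lem:showfoster}) dominates the return time to $0$ by a geometric variable, giving the explicit and policy-independent bound $N_\mu\le 1/\delta$. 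What your approach buys is considerable: it is constructive (it exhibits $\overline{N}=e^{\theta/TSTT(0,u^{min})}+1$ rather than merely asserting existence), it makes Lemma \ref{lem:showfoster}'s separate finiteness claim redundant, and in its first form it dispenses with Lemmas \ref{lem:basiclimitthm} and \ref{lem:f00} altogether. It also sidesteps the delicate points in the paper's contradiction argument, namely the unjustified existence of the limit policy $\mu_\infty$ and the inference that a vanishing limit of the convolution sums forces every individual term to vanish. Your second route, via $p^k_{00}(\mu)\ge\delta$ and the basic limit theorem, is closer in spirit to the paper's machinery but still avoids the limit-policy construction; your care about aperiodicity (the self-loop at $0$) and about time-homogeneity under a fixed stationary policy is well placed.
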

\begin{proof}
We prove this lemma through the method of contradiction. Let us assume that the lemma is not true. Then, there exists a sequence of policies $\{\mu_j\}$ with $j=0,1,\ldots$ such that for this sequence the mean recurrence times $\{N_{\mu_j}\}$ form a non-decreasing sequence such that
\begin{equation*}
\lim_{j\rightarrow\infty}N_{\mu_j}=\infty.
\end{equation*}
	Note that
	\begin{equation*}
	\lim_{j\rightarrow\infty} \sum_{l=0}^k f_{00}^{k-l}(\mu_j) p_{00}^l(\mu_j) = \sum_{l=0}^k f_{00}^{k-l}(\mu_\infty) p_{00}^l(\mu_\infty),
	\end{equation*}
	and the sum is well-defined since the limits are probabilities. For a policy $\mu_j$, $N_{\mu_j}=\sum_{k=0}^{\infty}kf_{00}^k(\mu_j)$. Using the fact that $\lim_{j\rightarrow \infty} N_{\mu_j} = \infty$, it follows from Lemma \ref{lem:basiclimitthm} that
	\begin{equation*}
	\lim_{k\rightarrow\infty} \sum_{l=0}^k f_{00}^{k-l}(\mu_\infty) p_{00}^l(\mu_\infty) = 0.
	\end{equation*}
	But, this can only happen if every term in the sum is zero (observe that each term is non-negative). Therefore, it follows that
	\begin{equation*}
	f_{00}^{k-l}(\mu_\infty) p_{00}^l(\mu_\infty) = 0 \quad \forall k \geq 0, l \leq k.
	\end{equation*}
Since $p_{00}^0(\cdot) = 1$ by definition as in Lemma \ref{lem:basiclimitthm}, it must be the case that $f_{00}^{k} = 0$. But this is a contradiction of Lemma \ref{lem:f00}. Thus, our assumption that the expected recurrence times are not uniformly bounded is false, completing the proof.  
\end{proof}

\begin{lemma} \label{lem:finitecost}
For our MDP, $C_{\mu}$ is finite for all policies $\mu$. 
\end{lemma}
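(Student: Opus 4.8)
The plan is to combine the uniform bound on expected recurrence times (Lemma \ref{lem:uniformrecurrence}) with a uniform bound on the per-step cost $g$ over all states and feasible actions, so that $C_\mu$ is sandwiched by a constant times $N_\mu$.

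First I would note that by Lemma \ref{lem:expectedvalue_tstt}, for every state $i$ and every feasible action $u\in U(i)$,
\[
g(i,u)\le k^{0,m}\frac{\theta}{TSTT(i,u)}+\sum_{r}k^{r,m}u^r .
\]
By the representation \eqref{eq:tstt_new_piecewise}, whose coefficients $k^{0,\psi}$ and $k^{r,\psi}$ are all positive, together with Lemma \ref{lem:tstt_mono}, the function $TSTT$ is monotonically increasing in the demand and in each toll component, so $TSTT(i,u)\ge TSTT(0,u^{min})>0$; moreover every toll component satisfies $u^r\le\tau^{max}$. Hence
\[
g(i,u)\le \bar g:=k^{0,m}\frac{\theta}{TSTT(0,u^{min})}+\sum_{r}k^{r,m}\tau^{max}<\infty ,
\]
a finite constant independent of $i$, $u$ and the policy.

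Next I would let $T_0$ denote the first return time to state $0$. Since the chain is positive recurrent under every feasible policy (Lemma \ref{lem:showfoster}), $T_0<\infty$ almost surely, so along every sample path starting at $x_0=0$ we have $\sum_{k=0}^{T_0-1} g(x_k,\mu(x_k))\le \bar g\,T_0$. Taking expectations and recalling $N_\mu=E\{T_0\mid x_0=0\}$ yields $C_\mu\le \bar g\,N_\mu$. Finally, Lemma \ref{lem:uniformrecurrence} furnishes $\overline N$ with $N_\mu<\overline N$ for all $\mu$, so $C_\mu\le \bar g\,\overline N<\infty$ for every feasible policy $\mu$, which is the claim.

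I do not anticipate a genuine obstacle here: the argument is essentially immediate once the earlier lemmas are available. The only step requiring a little care is the uniform upper bound on $g$, which relies on the monotonicity of $TSTT$ (so that the worst case $TSTT(0,u^{min})$ sits in the denominator) and on the finiteness of the action set (so that $u^r\le\tau^{max}$); these are exactly the facts already established, so the proof is a short assembly of them.
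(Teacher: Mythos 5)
Your proof is correct and takes essentially the same route as the paper's: bound the per-step cost $g$ uniformly over all states and feasible actions by a finite constant via Lemma \ref{lem:expectedvalue_tstt}, then multiply by the uniform bound $\overline{N}$ on the expected recurrence time. The only cosmetic difference is the choice of uniform bound on $g$ --- the paper drops the demand term and uses $TSTT(x,u)>\sum_r k^{r,l}u^r$ with bounded tolls, while you invoke monotonicity to put $TSTT(0,u^{min})$ in the denominator --- and both yield the same conclusion.
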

\begin{proof}
Denote $C'_{\mu}$ as the expected cost starting from state 0 up to the first return to 0 when the number of time steps of return is $K$. Then, 
\begin{equation*}
C'_{\mu}=\sum_{k=0}^{K-1} g\left(x_k,\mu_k\right)
\end{equation*}
such that $x_0=0,x_{K-1}=0$. From Lemma \ref{lem:expectedvalue_tstt} we get,
\begin{align*}
C'_{\mu}\le&\sum_{k=0}^{K-1}\left(k^{0,m}\left(\frac{\theta}{TSTT(x_k,\mu_k)}\right)+\sum_{r }k^{r,m} \mu_k^r\right) \\
<&\sum_{k=0}^{K-1}\left(k^{0,m}\left(\frac{\theta}{\sum_{r}k^{r,l} \mu_k^r}\right)+\sum_{r}k^{r,m} \mu_k^r\right).
\end{align*}
where $k^{r,l}=\min_{\psi} k^{r,\psi}, \forall r$. Note that $k^{0,m}\left(\frac{\theta}{\sum_{r}k^{r,l} \mu_k^r}\right)+\sum_{r}k^{r,m} \mu_k^r <\infty$ as all the possible toll values are positive and bounded (i.e. $\tau^{min}>0$ and $\tau^{max}<\infty$). Denote $\varphi^{max}$ as the upper bound of $k^{0,m}\left(\frac{\theta}{\sum_{r}k^{r,l} \mu_k^r}\right)+\sum_{r}k^{r,m} \mu_k^r$ for all policies $\mu$. Now we take expectation of $C'_{\mu}$ over the number of time steps $K$ to obtain $C_{\mu}$. So, for policy $\mu$
\begin{align*}
C_{\mu}<&E_{K} \Bigg\{\sum_{k=0}^{K-1}\left(k^{0,m}\left(\frac{\theta}{\sum_{r}k^{r,l} \mu_k^r}\right)+\sum_{r}k^{r,m} \mu_k^r\right)\Bigg\}\\
\le&E_{K}\{K\varphi^{max}\} \\
=&N_{\mu}\varphi^{max} \\
\le&\overline{N}\varphi^{max} <\infty.
\end{align*}
Thus, $C_{\mu}$ is finite for all the policies.
\end{proof}

\begin{proposition}\label{prop:recurrence}
Assumption \ref{assum:recurrence} is satisfied in the MDP. 
\end{proposition}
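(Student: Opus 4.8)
The plan is to assemble Proposition \ref{prop:recurrence} directly from the three lemmas just established, since Assumption \ref{assum:recurrence} is nothing more than the conjunction of their conclusions. Reading off the assumption, it demands three things: (a) $N_\mu<\infty$ for every feasible policy $\mu$; (b) $C_\mu<\infty$ for every feasible policy $\mu$; and (c) a uniform bound $\overline{N}>0$ with $N_\mu<\overline{N}$ for all $\mu$. So I would state the proof as a short synthesis rather than re-deriving anything.

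First I would invoke Lemma \ref{lem:showfoster} for (a): Foster's theorem, applied with the Lyapunov function $v_i=k^{0,m}i+\sum_r k^{r,m}\tau^{max}$, shows that state $0$ is positive recurrent under every feasible policy, hence $N_\mu<\infty$. Next I would cite Lemma \ref{lem:uniformrecurrence} for (c), which gives the uniform bound $N_\mu<\overline{N}$ outright. Finally I would cite Lemma \ref{lem:finitecost} for (b), which bounds $C_\mu$ by $\overline{N}\varphi^{max}<\infty$ using the positivity and boundedness of the admissible toll values together with the uniform recurrence bound from Lemma \ref{lem:uniformrecurrence}. Concatenating (a), (b) and (c) is precisely Assumption \ref{assum:recurrence}.

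The only point worth a sentence of care is the identification of the distinguished recurrent state: Assumption \ref{assum:recurrence} is phrased relative to state $0$, and this is legitimate because the chain is irreducible under every policy (since $p_{ij}(\mu)>0$ for all $i,j$ and all $\mu$, as noted in the proof of Lemma \ref{lem:showfoster}), so that positive recurrence of state $0$ is both meaningful and consistent with the construction already used throughout the MDP. I do not anticipate any real obstacle here — all of the substantive analysis was carried out in Lemmas \ref{lem:showfoster}, \ref{lem:uniformrecurrence} and \ref{lem:finitecost}, and this proposition is simply their immediate corollary; the remaining effort is purely bookkeeping to confirm that no clause of the assumption has been left unaddressed.
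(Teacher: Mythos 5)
Your proposal is correct and matches the paper's own proof, which likewise establishes Proposition \ref{prop:recurrence} simply by citing Lemmas \ref{lem:showfoster}, \ref{lem:uniformrecurrence} and \ref{lem:finitecost} as jointly covering the three clauses of Assumption \ref{assum:recurrence}. Your added remark on irreducibility justifying the choice of state $0$ is a harmless (and reasonable) elaboration, not a different route.
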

\begin{proof}
 Lemmas \ref{lem:showfoster}, \ref{lem:uniformrecurrence} and \ref{lem:finitecost} imply satisfaction of Assumption \ref{assum:recurrence}.
\end{proof}
We now show that the MDP satisfies Assumption \ref{assum:G}.
\begin{proposition}\label{prop:V}
There exists a sequence $v=\{v_0,v_1,\ldots\}$ that satisfies Assumption \ref{assum:G} for all the instances of the MDP.  
\end{proposition}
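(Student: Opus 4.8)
The plan is to prove something slightly stronger than what is literally asked: that the sequence $G$ is \emph{uniformly bounded}, so that it lies in $B(S)$ for any admissible weight sequence. For concreteness I would take $v_i = k^{0,m} i + \sum_r k^{r,m}\tau^{max}$ — the same Lyapunov function already used in Lemma \ref{lem:showfoster} — which satisfies $\inf_{i} v_i = v_0 = \sum_r k^{r,m}\tau^{max} > 0$, as required of a weight sequence, and which has the advantage of remaining available for verifying Assumptions \ref{assum:V} and \ref{assum:rho} later.

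The first step is to observe that $g(i,u)\ge 0$ for every $i$ and every $u\in U(i)$: by \eqref{eq:expectedcost_expression} it is a convex combination of the values $TSTT(j,u)$, each of which is nonnegative by the piecewise-linear form \eqref{eq:tstt_new_piecewise}; hence $G_i = \max_{u\in U(i)} g(i,u)$ and the absolute value may be dropped. Applying Lemma \ref{lem:expectedvalue_tstt} then gives $g(i,u)\le k^{0,m}\bigl(\theta/TSTT(i,u)\bigr) + \sum_r k^{r,m} u^r$ for every feasible $u$, and it remains to bound the right-hand side uniformly in $(i,u)$. Using \eqref{eq:tstt_new_piecewise} together with $i\ge 0$ and $u^r\ge\tau^{min}>0$, the active linear piece obeys $k^{0,\psi}i+\sum_r k^{r,\psi}u^r\ge\sum_r k^{r,l}\tau^{min}$, so $TSTT(i,u)\ge\sum_r k^{r,l}\tau^{min}>0$, with $k^{r,l}=\min_\psi k^{r,\psi}$ as in Lemma \ref{lem:finitecost}; and $\sum_r k^{r,m}u^r\le\sum_r k^{r,m}\tau^{max}<\infty$ since $\tau^{max}<\infty$. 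Combining, $G_i\le\overline{G}:=k^{0,m}\theta/\bigl(\sum_r k^{r,l}\tau^{min}\bigr)+\sum_r k^{r,m}\tau^{max}$ for all $i$, whence $\|G\| = \max_i G_i/v_i \le \overline{G}/\bigl(\sum_r k^{r,m}\tau^{max}\bigr)<\infty$, so $G\in B(S)$ and Assumption \ref{assum:G} holds.

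The only delicate point is the strictly positive lower bound on $TSTT$, which is exactly what prevents $\theta/TSTT(i,u)$, and hence the expected cost, from being unbounded; it rests squarely on the modeling assumption $\tau^{min}>0$. Beyond that the argument is routine, and it is uniform over instances because the final bound $\overline{G}$ is expressed entirely through the fixed, finite constants $\theta$, $\tau^{min}$, $\tau^{max}$, $k^{0,m}$, $k^{r,l}$ and $k^{r,m}$ of the instance at hand. I would also remark in passing that, since $G$ is genuinely bounded, the constant weights $v_i\equiv 1$ would already suffice for this assumption alone; the linear choice above is adopted only so that a single $v$ serves all four assumptions.
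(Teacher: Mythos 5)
Your proof is correct and rests on the same two pillars as the paper's: the bound of Lemma \ref{lem:expectedvalue_tstt} and the choice $v_i=k^{0,m}i+\sum_r k^{r,m}\tau^{max}$, with finiteness ultimately coming from $\tau^{min}>0$ and $\tau^{max}<\infty$. Where you diverge is in the final bounding step, and your route is the more elementary one. The paper keeps the $i$-dependence in the denominator, invokes convexity of $u\mapsto k^{0,m}\theta/(k^{0,l}i+\sum_r k^{r,l}u^r)+\sum_r k^{r,m}u^r$ to place the maximum over the toll box at a corner ($u\equiv\tau^{min}$ or $u\equiv\tau^{max}$), and then argues by monotonicity in $i$ that $\max_i |G_i|/v_i$ is attained at $i=0$. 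You instead discard the $i$-term entirely via $TSTT(i,u)\ge\sum_r k^{r,l}\tau^{min}$ and bound the toll term by its worst case, obtaining a single constant $\overline{G}$ dominating every $G_i$. Your bound is looser (it mixes $\tau^{min}$ in the first term with $\tau^{max}$ in the second, so it dominates both corner values the paper compares), but it buys a strictly stronger structural conclusion: $G$ is uniformly bounded, so Assumption \ref{assum:G} would already hold with constant weights, and the linear $v$ is needed only for the later assumptions. It also sidesteps the convexity and monotonicity claims, which are the only mildly delicate points in the paper's version. One small remark: your observation that $g(i,u)\ge 0$ (so the absolute value is harmless) is implicit but never stated in the paper, and it is worth making explicit exactly as you do, since Assumption \ref{assum:G} is phrased in terms of $|g(i,u)|$.
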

\begin{proof}
The expression for $G_i$ is given as 
\begin{equation*}
    G_i=\max_{u\in U(i)}|g(i,u)|
\end{equation*}
By Lemma \ref{lem:expectedvalue_tstt},
\begin{align*}
G_i&\le\max_{u\in U(i)} k^{0,m}\left(\frac{\theta}{TSTT(i,u)}\right)+\sum_{r }k^{r,m} u^r\\
&\le\max_{u\in U(i)} k^{0,m}\left(\frac{\theta}{k^{0,l}i+\sum_r k^{r,l}u^r}\right)+\sum_{r }k^{r,m} u^r\\
&\le \max \bigg\{k^{0,m}\left(\frac{\theta}{k^{0,l}i+\sum_r k^{r,l}\tau^{max}}\right)+\sum_{r }k^{r,m} \tau^{max},k^{0,m}\left(\frac{\theta}{k^{0,l}i+\sum_r k^{r,l}\tau^{min}}\right)+\sum_{r }k^{r,m} \tau^{min}\bigg\}
\end{align*}
The last inequality comes from the fact that the expression $k^{0,m}\left(\frac{\theta}{k^{0,l}i+\sum_r k^{r,l}u^r}\right)+\sum_{r }k^{r,m} u^r$ is a convex function with respect to the variables $\{u^r\}$ and thus the maximum value will lie at one of the interval limits. We now choose $v_i=k^{0,m}i+\sum_{r }k^{r,m} \tau^{max}$. Then,
\begin{equation*}
\frac{|G_i|}{v_i}\le \frac{\max \bigg\{k^{0,m}\left(\frac{\theta}{k^{0,l}i+\sum_r k^{r,l}\tau^{max}}\right)+\sum_{r }k^{r,m} \tau^{max},k^{0,m}\left(\frac{\theta}{k^{0,l}i+\sum_r k^{r,l}\tau^{min}}\right)+\sum_{r }k^{r,m} \tau^{min}\bigg\}}{k^{0,m}i+\sum_{r }k^{r,m} \tau^{max}}
\end{equation*}
Note that the maximum value of $\frac{|G_i|}{v_i}$ over $i=0,1,\ldots$, will occur at $i=0$ as the numerator is a monotonically decreasing function of $i$ and denominator is a monotonically increasing function of $i$. Thus,
\begin{equation*}
    \max_{i=0,1,\ldots} \frac{|G_i|}{v_i} \le \frac{\max \bigg\{k^{0,m}\left(\frac{\theta}{\sum_r k^{r,l}\tau^{max}}\right)+\sum_{r }k^{r,m} \tau^{max},k^{0,m}\left(\frac{\theta}{\sum_r k^{r,l}\tau^{min}}\right)+\sum_{r }k^{r,m} \tau^{min}\bigg\}}{\sum_{r }k^{r,m} \tau^{max}} <\infty
\end{equation*}
as $\tau^{min}>0$ and $\tau^{max}<\infty$. Therefore, $||J||<\infty$.
\end{proof}
Next, we show that Assumption \ref{assum:V} is satisfied by MDP. 
\begin{proposition}
There exists a sequence $v=\{v_0,v_1,\ldots\}$ that satisfies Assumption \ref{assum:V} for all the instances of the MDP.   
\end{proposition}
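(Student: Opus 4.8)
The plan is to reuse the weight sequence $v_i = k^{0,m} i + \sum_r k^{r,m}\tau^{max}$ from the proof of Proposition \ref{prop:V} and to show that the sequence $V$ it induces has finite weighted sup-norm. First I would compute $\sum_{j=0}^{\infty} p_{ij}(u) v_j$ in closed form. Since $p_{ij}(u)$ is the Poisson law with mean $\theta/TSTT(i,u)$, the same manipulation used in Lemma \ref{lem:expectedvalue_tstt} and in the Foster-theorem computation inside Lemma \ref{lem:showfoster} gives $\sum_{j=0}^{\infty} p_{ij}(u) v_j = \frac{k^{0,m}\theta}{TSTT(i,u)} + \sum_r k^{r,m}\tau^{max}$, the constant term surviving because $\sum_j p_{ij}(u) = 1$.

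Next I would evaluate the maximum over $u \in U(i)$. Because $TSTT(i,u)$ is monotonically increasing in each toll component (Lemma \ref{lem:tstt_mono} of Appendix \ref{app:tstt_piecewiselinear}), the quantity $\theta/TSTT(i,u)$ is maximized at the smallest admissible toll vector $u^{min} = \{\tau^{min},\ldots,\tau^{min}\}$, so $V_i = \frac{k^{0,m}\theta}{TSTT(i,u^{min})} + \sum_r k^{r,m}\tau^{max}$. I would then lower-bound the denominator using the piecewise-linear representation \eqref{eq:tstt_new_piecewise}, namely $TSTT(i,u^{min}) \ge k^{0,l} i + \sum_r k^{r,l}\tau^{min}$ with $k^{0,l} = \min_\psi k^{0,\psi}$ and $k^{r,l} = \min_\psi k^{r,\psi}$, which yields $V_i \le \frac{k^{0,m}\theta}{k^{0,l} i + \sum_r k^{r,l}\tau^{min}} + \sum_r k^{r,m}\tau^{max}$.

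Finally I would bound $\|V\| = \max_{i=0,1,\ldots} V_i/v_i$. The upper bound just derived for $V_i$ is decreasing in $i$ while $v_i$ is increasing in $i$, so the ratio is maximized at $i=0$, giving
\[
\|V\| \le \frac{\dfrac{k^{0,m}\theta}{\sum_r k^{r,l}\tau^{min}} + \sum_r k^{r,m}\tau^{max}}{\sum_r k^{r,m}\tau^{max}},
\]
which is finite because $\tau^{min} > 0$ and $\tau^{max} < \infty$. Hence $V \in B(S)$ and Assumption \ref{assum:V} holds.

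This argument is structurally parallel to the proof of Proposition \ref{prop:V}, so I do not anticipate a genuine obstacle; the only step needing care is the reduction of $\max_{u \in U(i)}$ to the single vector $u^{min}$, which relies both on the monotonicity of $TSTT$ in the tolls and on the fact that the $u$-independent term $\sum_r k^{r,m}\tau^{max}$ in $v_j$ plays no role in the maximization. It is also worth stressing that the \emph{same} $v$ is used here as in Assumption \ref{assum:G} (and will be used again for Assumption \ref{assum:rho}), since Proposition \ref{prop:bellman} requires one weight sequence that simultaneously validates Assumptions \ref{assum:G}, \ref{assum:V} and \ref{assum:rho}.
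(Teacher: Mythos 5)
Your proposal is correct and follows essentially the same route as the paper's proof: the same weight sequence $v_i = k^{0,m}i + \sum_r k^{r,m}\tau^{max}$, the same closed-form evaluation of $\sum_j p_{ij}(u)v_j$ via the Poisson mean, the same reduction of the maximum over $u$ to $u^{min}$ by monotonicity of $TSTT$, and the same lower bound $TSTT(i,u^{min}) \ge k^{0,l}i + \sum_r k^{r,l}\tau^{min}$ leading to the identical final bound at $i=0$. The only cosmetic difference is that you justify the maximization at $i=0$ by a direct monotonicity observation (decreasing numerator over increasing denominator), whereas the paper differentiates the bound $E'_i$ with respect to $i$; both yield the same conclusion.
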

\begin{proof}
Set $v_i=k^{0,m}i+\sum_{r }k^{r,m} \tau^{max}$. Then, 
\begin{align*}
V_i=&\max_{u\in U(i)} \sum_{j=0}^{\infty}p_{ij}(u)v_j \\
=&\max_{u\in U(i)} \sum_{j=0}^{\infty}p_{ij}(u)\left(k^{0,m}j+\sum_{r }k^{r,m} \tau^{max}\right) 
\end{align*}
Since $\sum_{j=0}^{\infty}p_{ij}(u)j=\frac{\theta}{TSTT(i,u)}$ and $\sum_{j=0}^{\infty}p_{ij}(u)=1$, it follows that
\begin{equation*}
V_i=\max_{u\in U(i)}\left(\frac{k^{0,m}\theta}{TSTT(i,u)}+\sum_{r} k^{r,m} \tau^{max}\right) 
=\frac{k^{0,m}\theta}{TSTT(i,u^{min})}+\sum_{r} k^{r,m} \tau^{max},
\end{equation*}
where $TSTT(i,u^{min})$ is the total system travel time for state $i$ when the toll values on all the routes are minimum and equal to $\tau^{min}$. Then,
\begin{align}
E_i=\frac{|V_i|}{v_i} &= \frac{\frac{k^{0,m}\theta}{TSTT(i,u^{min})}+\sum_{r} k^{r,m} \tau^{max}}{k^{0,m}i+\sum_{r }k^{r,m} \tau^{max}}, \label{eq:VdividedV_original}\\
&\le \frac{\frac{k^{0,m}\theta}{k^{0,l}i+\sum_r k^{r,l}\tau^{min}}+\sum_{r} k^{r,m} \tau^{max}}{k^{0,m}i+\sum_{r }k^{r,m} \tau^{max}} \label{eq:VdividedV}\\
&\equiv E'_i
\end{align}
as $V_i>0 \hspace{1mm} \forall i$ holds for all the parameters of the MDP. We take an upper bound of the expression \eqref{eq:VdividedV_original} because $TSTT(i,u)$ is not a differentiable function. In order to find $\max_{i=0,1,\ldots}\frac{V_i}{v_i}$, we differentiate the expression in right-hand side (RHS) of \eqref{eq:VdividedV} with respect to $i$ to get
\begin{equation*}
\frac{\partial E'_i}{\partial i}=\frac{-k^{0,m}\theta\left(k^{0,l}(k^{0,m}i+\sum_r k^{r,m}\tau^{max})+k^{0,m}(k^{0,l}i+\sum_r k^{r,l}\tau^{min})\right)}{\left((k^{0,l}i+\sum_r k^{r,l}\tau^{min})(k^{0,m}i+\sum_{r }k^{r,m} \tau^{max})\right)^2}  -\frac{k^{0,m}\sum_{r} k^{r,m} \tau^{max}}{\left(k^{0,m}i+\sum_{r }k^{r,m} \tau^{max}\right)^2}<0, 
\end{equation*}
for all the states $i$ and parameter values. Thus, $E'_i$ is a decreasing function. Hence, the maximum value of the expression $E'_i$ occurs at $i=0$. That is,
\begin{equation*}
\max_{i=0,1,\ldots}\frac{V_i}{v_i}\le \max_{i=0,1,\ldots}E'_i=\frac{\frac{k^{0,m}\theta}{\sum_r k^{r,l}\tau^{min}}+\sum_{r} k^{r,m} \tau^{max}}{\sum_{r }k^{r,m} \tau^{max}}< \infty
\end{equation*}
as all the parameter values are finite and the minimum toll value $\tau^{min}>0$.
\end{proof}
Finally, we have the following result:
\begin{proposition}
If $\frac{k^{0,m}\theta}{\sum_{r} k^{r,l}\tau^{min}}\le \left(e^{-\left(\frac{\theta}{\sum_{r} k^{r,l}\tau^{min}}\right)}\right)\left(\sum_{r} k^{r,m}\tau^{max}\right)$ then the MDP satisfies Assumption \ref{assum:rho}.
\end{proposition}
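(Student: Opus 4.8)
\emph{Proof proposal.} The plan is to verify Assumption~\ref{assum:rho} using the same weight sequence employed throughout the paper, $v_i = k^{0,m} i + \sum_r k^{r,m}\tau^{max}$, and taking the integer step parameter in the assumption to be $1$ (note this ``number of steps'' is unrelated to the superscript $m$ in $k^{0,m}$, which denotes a maximum over linear pieces). For a single step the quantity to control is $\sum_{j=1}^\infty p_{ij}(\mu) v_j / v_i$, and the crucial point is that the sum \emph{omits} the $j=0$ term. I would therefore write $\sum_{j=1}^\infty p_{ij}(\mu) v_j = \sum_{j=0}^\infty p_{ij}(\mu) v_j - p_{i0}(\mu) v_0$, reusing the computation already done when establishing Assumption~\ref{assum:V}, namely $\sum_{j=0}^\infty p_{ij}(\mu)v_j = k^{0,m}\theta / TSTT(i,\mu) + \sum_r k^{r,m}\tau^{max}$, together with $p_{i0}(\mu) = e^{-\theta/TSTT(i,\mu)}$ from \eqref{eq:poissondist} and $v_0 = \sum_r k^{r,m}\tau^{max}$. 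This gives, for every state $i$ and feasible policy $\mu$,
\[
\frac{\sum_{j=1}^\infty p_{ij}(\mu) v_j}{v_i} = \frac{k^{0,m}\theta / TSTT(i,\mu) + \big(\sum_r k^{r,m}\tau^{max}\big)\big(1 - e^{-\theta/TSTT(i,\mu)}\big)}{k^{0,m} i + \sum_r k^{r,m}\tau^{max}}.
\]

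The next step is to make this bound uniform in $i$ and $\mu$. Using the piecewise-linear form \eqref{eq:tstt_new_piecewise}, $TSTT(i,\mu) = k^{0,\psi}i + \sum_r k^{r,\psi}\mu^r$ for the active piece $\psi$, and since $\mu^r \ge \tau^{min}$ and all $k^{r,\psi}>0$ we get the policy- and state-independent lower bound $TSTT(i,\mu) \ge \sum_r k^{r,l}\tau^{min}$ (the same monotonicity fact already used in Lemma~\ref{lem:showfoster}). Both terms in the numerator above are decreasing in $TSTT(i,\mu)$, so replacing $TSTT(i,\mu)$ by this lower bound only enlarges the ratio and removes the dependence on $\mu$; the resulting function of $i$ has a constant numerator and a strictly increasing denominator, hence attains its maximum at $i=0$. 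Evaluating there yields, for all $i$ and $\mu$,
\[
\frac{\sum_{j=1}^\infty p_{ij}(\mu) v_j}{v_i} \le \frac{k^{0,m}\theta}{\big(\sum_r k^{r,l}\tau^{min}\big)\big(\sum_r k^{r,m}\tau^{max}\big)} + 1 - e^{-\theta/\sum_r k^{r,l}\tau^{min}} =: \rho.
\]
Dividing the hypothesis $k^{0,m}\theta / \sum_r k^{r,l}\tau^{min} \le e^{-\theta/\sum_r k^{r,l}\tau^{min}}\sum_r k^{r,m}\tau^{max}$ through by $\sum_r k^{r,m}\tau^{max}$ shows it is \emph{exactly} the statement $\rho \le 1$; since every quantity in $\rho$ is strictly positive we also have $\rho>0$, so $\rho\in(0,1]$, and $\rho<1$ whenever the hypothesis is strict, which is what Assumption~\ref{assum:rho} requires.

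The main obstacle is not any single delicate estimate but the bookkeeping needed to see that the supremum over all states and policies of the one-step ratio is precisely the left-hand side of the stated inequality: one must justify (i) subtracting off the $j=0$ mass and identifying the remainder, (ii) the monotonicity of the numerator in $TSTT$ so that the worst policy is captured by the universal lower bound $\sum_r k^{r,l}\tau^{min}$, and (iii) the monotonicity in $i$ so that the worst state is $i=0$. A secondary point to be careful about is the borderline case of equality in the hypothesis, where $\rho = 1$: there one should instead invoke Assumption~\ref{assum:rho} with step parameter $\geq 2$, for which the analogous ratio is strictly smaller since more probability mass has returned to state $0$; when the hypothesis holds strictly the one-step argument above suffices.
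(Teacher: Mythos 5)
Your proof follows essentially the same route as the paper's: the same weights $v_i=k^{0,m}i+\sum_r k^{r,m}\tau^{max}$, step parameter equal to $1$, the decomposition $\sum_{j\ge 1}p_{ij}(\mu)v_j=\sum_{j\ge 0}p_{ij}(\mu)v_j-p_{i0}(\mu)v_0$, a uniform bound attained at $i=0$ and $\mu=u^{min}$, and the identification of the hypothesis with $\rho\le 1$ (you merely replace the paper's partial-derivative monotonicity checks with the cruder but equally valid lower bound $TSTT(i,\mu)\ge\sum_r k^{r,l}\tau^{min}$ and a monotonicity-in-$TSTT$ argument). Your remark that equality in the hypothesis only yields $\rho=1$, which Assumption \ref{assum:rho} does not strictly permit, is a fair caveat that the paper itself glosses over.
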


\begin{proof}
Let $m=1$ in condition \eqref{eq:rho}. Then, left hand-side (LHS) of condition \eqref{eq:rho} becomes
\begin{equation*}
\frac{\sum_{j=1}^{\infty}p_{ij}(\mu)v_j}{v_i}.
\end{equation*}
Using the relation $\sum_{j=1}^{\infty}p_{ij}(\mu)v_j=$$\sum_{j=0}^{\infty}p_{ij}(\mu)v_j$$-p_{i0}(\mu)v_0$, we get
\begin{align*}
\frac{\sum_{j=1}^{\infty}p_{ij}(\mu)v_j}{v_i}&=\frac{\sum_{j=0}^{\infty}p_{ij}(\mu)v_j-p_{i0}(\mu)v_0}{k^{0,m}i+\sum_{r} k^{r,m}\tau^{max}} \\ &=\frac{\sum_{j=0}^{\infty}p_{ij}(\mu)(k^{0,m}j+\sum_{r} k^{r,m}\tau^{max})-p_{i0}(\mu)(k^{0,m}0+\sum_{r} k^{r,m}\tau^{max})}{k^{0,m}i+\sum_{r} k^{r,m}\tau^{max}}.
\end{align*}
Since $\sum_{j=0}^{\infty}p_{ij}(\mu)j=\frac{\theta}{TSTT(i,\mu)}$ and $\sum_{j=0}^{\infty}p_{ij}(\mu)=1$, we get
\begin{align}
&\frac{\sum_{j=0}^{\infty}p_{ij}(\mu)(k^{0,m}j+\sum_{r} k^{r,m}\tau^{max})-p_{i0}(\mu)(k^{0,m}0+\sum_{r} k^{r,m}\tau^{max})}{k^{0,m}i+\sum_{r} k^{r,m}\tau^{max}} \\&= \frac{\frac{k^{0,m}\theta}{TSTT(i,\mu)}+\sum_{r} k^{r,m}\tau^{max}}{k^{0,m}i+\sum_{r} k^{r,m}\tau^{max}} -\frac{p_{i0}(\mu)\left(\sum_{r} k^{r,m}\tau^{max}\right)}{k^{0,m}i+\sum_{r} k^{r,m}\tau^{max}} \label{eq:rhoexpresssion}
\\&< \frac{\frac{k^{0,m}\theta}{k^{0,l}i+\sum_{r} k^{r,l}\mu^r}+\sum_{r} k^{r,m}\tau^{max}}{k^{0,m}i+\sum_{r} k^{r,m}\tau^{max}} -\frac{\left(e^{-\left(\frac{\theta}{k^{0,l}i+\sum_{r} k^{r,l}\mu^r}\right)}\right)\left(\sum_{r} k^{r,m}\tau^{max}\right)}{k^{0,m}i+\sum_{r} k^{r,m}\tau^{max}}. \label{eq:rhoexpresssion_upperbound}
\end{align}
We take an upper bound of the expression \eqref{eq:rhoexpresssion} because $TSTT(i,\mu)$ is not a differentiable function. There are two set of variables in the expression \eqref{eq:rhoexpresssion_upperbound}: $i$ and $\mu$. We first find the minimum value of the expression with respect to both the variables through partial differentiation. Denote $\chi=\frac{k^{0,m}\theta}{k^{0,l}i+\sum_{r} k^{r,l}\mu^r}+\sum_{r} k^{r,m}\tau^{max}$, $\kappa=-\left(e^{-\left(\frac{\theta}{k^{0,l}i+\sum_{r} k^{r,l}\mu^r}\right)}\right)\left(\sum_{r} k^{r,m}\tau^{max}\right)$ and $\zeta=k^{0,m}i+\sum_{r} k^{r,m}\tau^{max}$. We first compute the partial derivatives of these terms with respect to $i$. Note that
\begin{equation*}
\frac{\partial \chi}{\partial i}=\frac{-k^{0,m}k^{0,l}\theta}{(k^{0,l}i+\sum_{r} k^{r,l}\mu^r)^2}<0.
\end{equation*}
Then,
\begin{align*}
\frac{\partial \kappa}{\partial i}
=&-\left(\sum_{r} k^{r,m}\tau^{max}\right)\frac{\partial e^{-\left(\frac{\theta}{k^{0,l}i+\sum_{r} k^{r,l}\mu^r}\right)}}{\partial i} \\
=&-\left(\sum_{r} k^{r,m}\tau^{max}\right)e^{-\left(\frac{\theta}{k^{0,l}i+\sum_{r} k^{r,l}\mu^r}\right)}\left(\frac{k^{0,l}\theta}{\left(k^{0,l}i+\sum_{r} k^{r,l}\mu^r\right)^2}\right)
<0.
\end{align*}
Also,
\begin{equation*}
\frac{\partial \zeta}{\partial i}=k^{0,m}>0.
\end{equation*}
Thus, maxima of $\chi$ and $\kappa$ as well as the minimum of $\zeta$ all occur when $i=0$, for all the policies and parameter settings. Therefore,
\begin{equation*}
\max_{i=0,1,\ldots} \frac{\sum_{j=1}^{\infty}p_{ij}(\mu)v_j}{v_i}< \frac{\frac{k^{0,m}\theta}{\sum_{r} k^{r,l}\mu^r}+\sum_{r} k^{r,m}\tau^{max}}{\sum_{r} k^{r,m}\tau^{max}} -\frac{\left(e^{-\left(\frac{\theta}{\sum_{r} k^{r,l}\mu^r}\right)}\right)\left(\sum_{r} k^{r,m}\tau^{max}\right)}{\sum_{r} k^{r,m}\tau^{max}}
\end{equation*}
as $\zeta=k^{0,m}i+\sum_rk^{r,m}\tau^{max}$ is positive. We now compute the maximum of the expression in RHS of Equation \eqref{eq:rhoexpresssion_upperbound} with respect to $\mu$. Taking partial derivatives of the RHS expression in Equation \eqref{eq:rhoexpresssion_upperbound} with respect to $\mu^r \hspace{1mm}\forall r$, we get
\begin{equation*}
-\frac{\frac{k^{0,m}k^{r,l}\theta}{(k^{0,l}i+\sum_r k^{r,l}\mu^r)^2}}{k^{0,m}i+\sum_{r} k^{r,m}\tau^{max}}-\frac{e^{-\left(\frac{\theta}{k^{0,l}i+\sum_{r} k^{r,l}\mu^r}\right)}\left(\frac{\theta k^{r,l}}{(k^{0,l}i+\sum_r k^{r,l}\mu^r)^2}\right)\left(\sum_{r} k^{r,m}\tau^{max}\right)}{k^{0,m}i+\sum_{r}k^{r,m}\tau^{max}} <0.
\end{equation*}
That is, the expression of Equation \eqref{eq:rhoexpresssion_upperbound} is a decreasing function with respect to $\mu^r, \forall r$. Therefore, the maximum of the expression occurs at $\mu=\{\tau^{min},\tau^{min},\ldots,\tau^{min}\}$, for all states and parameter settings. 

Thus, by setting $i=0$ and $\mu=u^{min}$ we bound the expression in Equation \eqref{eq:rhoexpresssion} as follows:
\begin{equation*}
\frac{\frac{k^{0,m}\theta}{TSTT(i,\mu)}+\sum_{r} k^{r,m}\tau^{max}}{k^{0,m}i+\sum_{r} k^{r,m}\tau^{max}} -\frac{p_{i0}(\mu)\left(\sum_{r} k^{r,m}\tau^{max}\right)}{k^{0,m}i+\sum_{r} k^{r,m}\tau^{max}}
<1+\frac{\frac{k^{0,m}\theta}{\sum_{r} k^{r,l}\tau^{min}}}{\sum_{r} k^{r,m}\tau^{max}} -\frac{\left(e^{-\left(\frac{\theta}{\sum_{r} k^{r,l}\tau^{min}}\right)}\right)\left(\sum_{r} k^{r,m}\tau^{max}\right)}{\sum_{r} k^{r,m}\tau^{max}}.
\end{equation*}
Finally, we derive a sufficient condition to ensure that \eqref{eq:rho} is satisfied. Since the supremum of $\rho$ in condition \eqref{eq:rho} is 1, we ensure that the expression in the last condition is less than or equal to 1 (as $\rho \in (0,1)$). That is, 
\begin{equation*}
1+\frac{\frac{k^{0,m}\theta}{\sum_{r} k^{r,l}\tau^{min}}}{\sum_{r} k^{r,m}\tau^{max}} -\frac{\left(e^{-\left(\frac{\theta}{\sum_{r} k^{r,l}\tau^{min}}\right)}\right)\left(\sum_{r} k^{r,m}\tau^{max}\right)}{\sum_{r} k^{r,m}\tau^{max}}\le 1
\end{equation*}
Or,
\begin{equation*}
\frac{k^{0,m}\theta}{\sum_{r} k^{r,l}\tau^{min}}\le \left(e^{-\left(\frac{\theta}{\sum_{r} k^{r,l}\tau^{min}}\right)}\right)\left(\sum_{r} k^{r,m}\tau^{max}\right)
\end{equation*}
\end{proof}
Finally, it follows that Proposition \ref{prop:bellman} holds for the MDP. 
\subsection{Implementation issues in solving the problem}\label{sec:implementation}
We now focus on implementing a solution algorithm to solve the MDP. A few studies have established the convergence of algorithms like value iteration and policy iteration for countable state MDPs \citep{meyn1997policy,sennott1991value,aviv1999value}. But the countable state MDP, while being theoretically appealing, faces several implementation issues. First, the infinite sums to the right of Equation \eqref{eq:bellman} cannot be evaluated on a term by term basis, nor can the function $h^*(i)$ be evaluated and stored for infinitely many values \citep{aviv1999value}. Therefore, approximate methods need to be devised to find the solutions of countable state MDPs.  

One approach to approximate a countable state MDP is by solving the finite state problem obtained through truncation of the state space. Existing results show that as the size of the approximating MDP increases, its cost per time step and, under some conditions its optimal policies approach those of the original, countable MDP \citep{cavazos1986finite,altman1994denumerable}. So, for solving the problem we truncate the state space $S$ to a finite interval $[0,x^{max}]$. Once we solve the finite state MDP we can extrapolate the optimal policies for the states $S\backslash [0,x^{max}]$. The truncation of the state space implies the state transition probabilities are now governed by a truncated Poisson distribution. We denote the transition probability from state $x_k$ to state $x_{k+1}$ for action $u_k$ using truncated Poisson distribution as $q_{x_k x_{k+1}}(u_k)$, where 
\begin{equation*}
q_{x_k x_{k+1}}(u_k)= \frac{\frac{e^{-\frac{\theta}{TSTT(x_k,u_k)}} \left(\frac{\theta}{TSTT(x_k,u_k)}\right)^{x_{k+1}}}{x_{k+1}!}}{\sum_{X=0}^{X=x^{max}}\frac{e^{-\frac{\theta}{TSTT(x_k,u_k)}} \left(\frac{\theta}{TSTT(x_k,u_k)}\right)^X}{X!}}.
\end{equation*}
The action space, costs and objective function remain the same as before. We now present the solution algorithm that we use to solve the truncated MDP.  
\subsection{Solution algorithm}\label{subsec:PI}
We use policy iteration to solve for the optimal policy of the infinite horizon finite state MDP. Policy iteration algorithm starts with a stationary policy $\mu^0$, and iteratively generates a sequence of new policies $\mu^1,\mu^2,\ldots$ as follows:
\begin{enumerate}
\item (Policy evaluation) Given a policy $\mu^m$ at iteration $m$, we compute average and differential costs $\lambda^m$ and $h^m(i)$ satisfying
\begin{equation}
h^m(i)+\lambda^m=g(i,\mu^m(i))+\sum_{j=0}^{x^{max}}q_{ij}(\mu^m(i)) h^m(j), \quad i=0,\ldots,x^{max} \label{eq:policy_eval}
\end{equation}
and $h^m(x^{max})=0$. The condition $h^m(x^{max})=0$ assumes without loss of generality that the state $x^{max}$ is recurrent and there exists a unique solution to \eqref{eq:policy_eval}.
\item (Policy improvement) We find a stationary policy $\mu^{m+1}$, where for all $i$, $\mu^{m+1}(i)$ is such that
\begin{equation*}
g(i,\mu^{m+1}(i))+\sum_{i=0}^{x^{max}}q_{ij}(\mu^{m+1}(i))h^m (j)=\min_{u\in U(i)}\left[g(i,u)+\sum_{i=0}^{x^{max}}q_{ij}(u)h^m(j)\right].
\end{equation*}
\item (Termination check) If $\lambda^{m+1}=\lambda^m$ and $h^{m+1}(i)=h^m (i)$ for all $i$, the algorithm terminates; otherwise, we go to Step 1 with $\mu^{m+1}$ replacing $\mu^m$.
\end{enumerate}
\subsection{Aggregated problem} \label{sec:aggregated}
We have seen that dynamic programming (DP) algorithms like Policy iteration can be used to solve the finite state MDP. However, the computational requirements associated with the exact solution of even finite state MDPs are overwhelming. It is known that DP algorithms can be numerically applied only if the size of state space is relatively small. Computing the exact solution is generally difficult and possibly intractable for large problems due to the widely known \textit{curse of dimensionality} \citep{bertsekas2012dynamic}. Thus, it is necessary to devise an approximate method to tackle this issue.
  
We use state aggregation to overcome the large state space. Before performing the aggregation, we assume that the state space is continuous by using the normal approximation to the Poisson distribution. So, the probability density function of going to the state $x_{k+1}$ at time step $k+1$ if the state in the previous time step was $x_k$ and the action $u_k$ was taken is given by
\begin{equation*}
f_{x_k x_{k+1}}\left(u_k\right)=\frac{1}{\sqrt{2\pi\sigma^2}}e^{-\frac{\left(x_{k+1}-\upsilon\right)^2}{2\sigma^2}},
\end{equation*} 
where $\upsilon=\frac{\theta}{TSTT(x_k,u_k)}, \sigma^2=\frac{\theta}{TSTT(x_k,u_k)}$. 
\subsubsection{State space}
We construct a finite number of states from the continuous state space by aggregating states using intervals. We define the aggregated state space set $S'=\Big\{\left[0,\frac{x^{max}}{N}\right],\left[\frac{x^{max}}{N},\frac{2x^{max}}{N}\right],\ldots,\left[\frac{(N-1)x^{max}}{N},x^{max}\right]\Big\}$ by dividing $x^{max}$ into $N$ intervals. For an aggregated interval/state $y \in S'$, we approximate the values in the interval by its center, $y^c $.  

\subsubsection{Transition probabilities}
We compute the state transition probability of going from an aggregated state $X$ to another aggregated state $Y$ by computing the probability of going from the center of state $X$ to state $Y$. We define 
\begin{equation*}
p'_{y_k^c y_{k+1}^c}(u_k)=\int_{y_{k+1}^c - \frac{x^{max}}{2N}}^{y_{k+1}^c + \frac{x^{max}}{2N}}\frac{1}{\sqrt{2\pi\sigma^2}}e^{-\frac{\left(x-\upsilon\right)^2}{2\sigma^2}} dx,
\end{equation*} 
where $\upsilon=\frac{\theta}{TSTT(y_k^c,u_k)}, \sigma^2=\frac{\theta}{TSTT(y_k^c,u_k)}$. Since we truncate the state space in the interval $[0,x^{max}]$, it may happen that the sum of probabilities of going from state $X$ to all the aggregated states might not add up to one. Therefore, we now normalize these values by setting
\begin{equation*}
p_{y_k^c y_{k+1}^c}(u_k)=\frac{p'_{y_k^c y_{k+1}^c}(u_k)}{\sum_{\forall y^c \in S'} p'_{y_k^c y^c}(u_k)}.
\end{equation*}

Note that the action space for the approximate dynamic programming remains the same as before. The expressions for costs also remain the same as before and only the original states are replaced by the centers of the aggregated states. Thus, the policy iteration algorithm can be used similarly as in Section \ref{subsec:PI} by appropriately using the state space and transition probabilities of the aggregated problem. 

\section{Numerical results} \label{sec:results}
Consider a single origin-destination (OD) pair with two routes (each containing a single link) as shown in Figure \ref{fig:network}. In the appendices, we show that networks with complex topologies can be reduced to single OD pair networks with non-overlapping routes. So, we present the results for the simplest setting. The travel time parameters are given as follows, $c=[1,2]$ and $b=[0.5,1]$ where the first and second entries correspond to routes 1 and 2, respectively. The power coefficient $a$ is set to 4 according to the standard BPR functions \citep{boyce1981effect}. There are three possible toll values: 2, 3 and 4 units. The parameter $\theta$ of the Poisson distribution is set to 100. The maximum value of demand per unit time $x^{max}$ is set to 15. We term this problem as the \textit{original} problem in this paper. The results that we will present either correspond to the \textit{original} problem or modifications of it.    
\begin{figure}[h]
\centering
	\includegraphics[width= 0.25\textwidth]{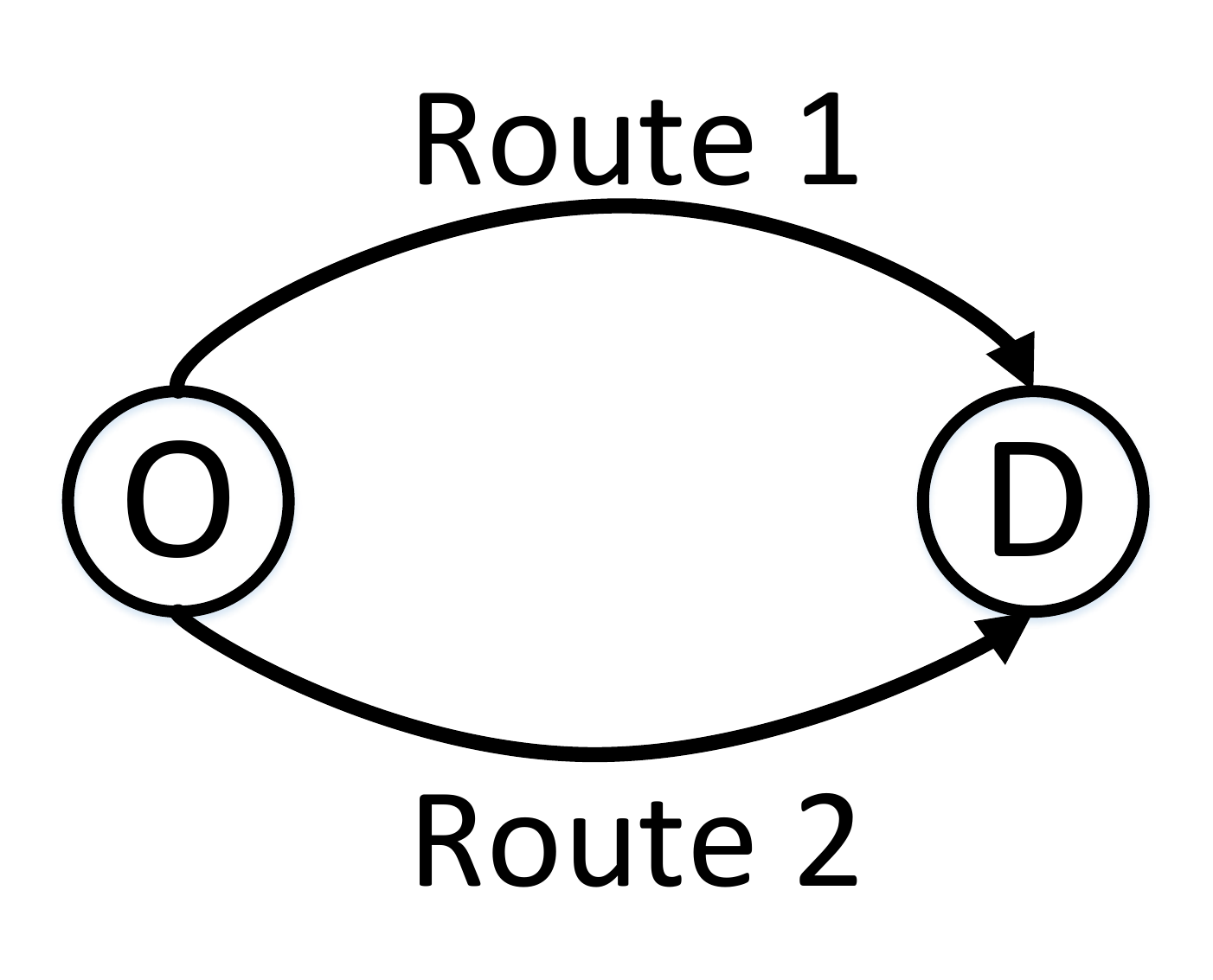}
	\caption{Test network}
	\label{fig:network}
\end{figure}
\begin{figure}[h]
\centering
	\includegraphics[width= 0.45\textwidth]{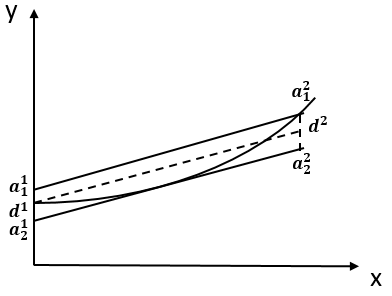}
	\caption{Construction of the first segment of the approximated continuous piece-wise linear function}
	\label{fig:piecewise}
\end{figure}

We first discuss the method that we use to generate approximations of the BPR functions, though any method that can generate continuous piece-wise linear approximations of the original functions is valid. The objective of these approximations is to ensure that the difference (measured in terms of a norm) between the approximated curve and the original curve does not exceeds a given value. One of the possible norms is the maximum norm that ensures that the maximum difference between the original curve and the approximated curve does not exceeds a given value. \cite{tomek1974two} introduced a method to ensure that the approximated curve satisfies the aforementioned criteria. The first segment of the approximated curve is the segment whose one breakpoint lies on the y-axis. Let $d^j=(x^j,y^j)$ be the point with the least x-coordinate in the $j$th segment of the approximated curve. The basic idea starts from the construction of two parallel line segments passing through the points $a_1^j=(x^j,y^j+\epsilon)$ and $a_2^j=(x^j,y^j-\epsilon)$ that differ by $2\epsilon$ in the y direction for the $j$th segment of the piece-wise continuous function. The line segments are extended until it is not possible to bound the original curve between the two parallel line segments. Figure \ref{fig:piecewise} illustrates the construction of the two segments for $j=1$. Since the domain corresponding to the original BPR function goes to infinity, we construct the first $\eta-1$ segments such that maximum error between the original and the approximated curves is $2\epsilon$ but this rule is not followed for the last segment that extends to positive infinity. We use the following algorithm to generate piece-wise linear approximations (the original algorithm presented by \cite{tomek1974two} can be substantially simplified to the following algorithm because BPR functions are monotonically increasing):  
\begin{enumerate}
    \item Set $j=1$, $x^1=0$ and $y^1=b$, where $b$ is the constant corresponding to the BPR function that needs to be approximated. Denote $d^j=(x^j,y^j)$ and $a_1^j=(x^j,y^j+\epsilon)$, $a_2^j=(x^j,y^j-\epsilon)$.
    \item Construct a tangent starting from the point $a_2^j$ that intersects the curve of the original BPR function.
    \item Construct a line segment starting from the point $a_1^j$ that is parallel to the previously constructed tangent. Denote the point that represents the intersection of the line segment with the original BPR function by $a_1^{j+1}$. Then, the points $d^{j+1}$ and $a_2^{j+1}$ are at distances $\epsilon$ and $2\epsilon$ below the point $a_1^{j+1}$ in y-direction, respectively.
    \item Set $j=j+1$. If $j=\eta$, then go to the next step. Otherwise, go to step 2.
    \item Construct a tangent starting from the point $a_2^{\eta}$ that intersects the curve of the original BPR function. Draw a ray  starting from the point $d^{\eta}$ that is parallel to the previously constructed tangent. The constructed ray represents the $\eta$th approximated segment.
\end{enumerate}
In the \textit{original} problem, we set $\epsilon=1$ and $\eta=4$. We discuss about the effect of number of approximation segments (i.e., $\eta$) on the solution accuracy and computation later.

\subsection{Solutions} \label{subsec:solution}
In this section, we focus on the solutions obtained by solving \textit{original} problem and the variants of it. We coded the problems in MATLAB \citep{coleman}, and solved the computation of \textit{TSTT} through \eqref{eq:tstt1_defn}-\eqref{eq:tstt4} using a TOMLAB solver \citep{holmstrom2010user}. We first present the solutions obtained from solving the \textit{original} problem as described in Section \ref{subsec:PI} (i.e., without aggregation of states). Figure \ref{fig:soln} provides the optimal toll values for the two routes. The dashed line presents the optimal toll values for route 1 and the solid line provides the toll values for route 2 (in this example, the optimal toll values are equal for both the routes so only the solid line is visible but that may not be always true as can be seen in later results where some of the parameters of the \textit{original} problem are varied). It can be seen that for both the routes the optimal value of toll is larger for smaller state values as compared to larger state values. This is possibly because when the demand is low, putting large toll value reduces the probability of a large demand in the next time step by Equation \eqref{eq:poissondist}. This switching of optimal policy is also dependent on the parameters $\theta$ and $\{k^{0,\psi}\},\{k^{1,\psi}\},\ldots,\{k^{R,\psi}\}$. To better understand this, we focus on the expected cost per time step $g(x,u)$. Since $TSTT(x,u)$ is a piece-wise linear function, it is not possible to partially differentiate it and therefore it is not possible to partially differentiate $g(x,u)$. Therefore, we analyze the upper and lower bounds of $g(x,u)$. Note that $g\left(x,u\right)\le k^{0,m}\left(\frac{\theta}{TSTT(x,u)}\right)+\sum_{r }k^{r,m} u^r\le k^{0,m}\left(\frac{\theta}{k^{0,l}x+\sum_{r }k^{r,l} u^r}\right)+\sum_{r }k^{r,m} u^r$ by Lemma \ref{lem:expectedvalue_tstt}. We partially differentiate the expression $k^{0,m}\left(\frac{\theta}{k^{0,l}x+\sum_{r }k^{r,l} u^r}\right)+\sum_{r }k^{r,m} u^r$ with respect to $u^r, \forall r $. The derivative of the expression with respect to $u^r$ is equal to zero when (we also checked the second order conditions to verify that the first order conditions represent the minimum)
\begin{equation*}
 \sum_{r \in R}k^{r,l} u^r = \sqrt{\frac{k^{0,m}k^{r,l}\theta}{k^{r,m}}}-k^{0,l}x.    
\end{equation*}
Consider a network of two routes for simplicity. Then, the partial derivative of the upper bound of $g(x,u)$ with $u^1$ is  
\begin{equation}
k^{1,l} u^1+ k^{2,l} u^2= \sqrt{\frac{k^{0,m}k^{1,l}\theta}{k^{1,m}}}-k^{0,l}x, \label{eq:upperbound_g_optimality}
\end{equation}
and the partial derivative of the same expression with $u^2$ is
\begin{equation}
k^{1,l} u^1+ k^{2,l} u^2= \sqrt{\frac{k^{0,m}k^{2,l}\theta}{k^{2,m}}}-k^{0,l}x. \label{eq:upperbound_g_optimality_2}
\end{equation}
\begin{figure}[h]
\centering
	\includegraphics[width= 0.75\textwidth]{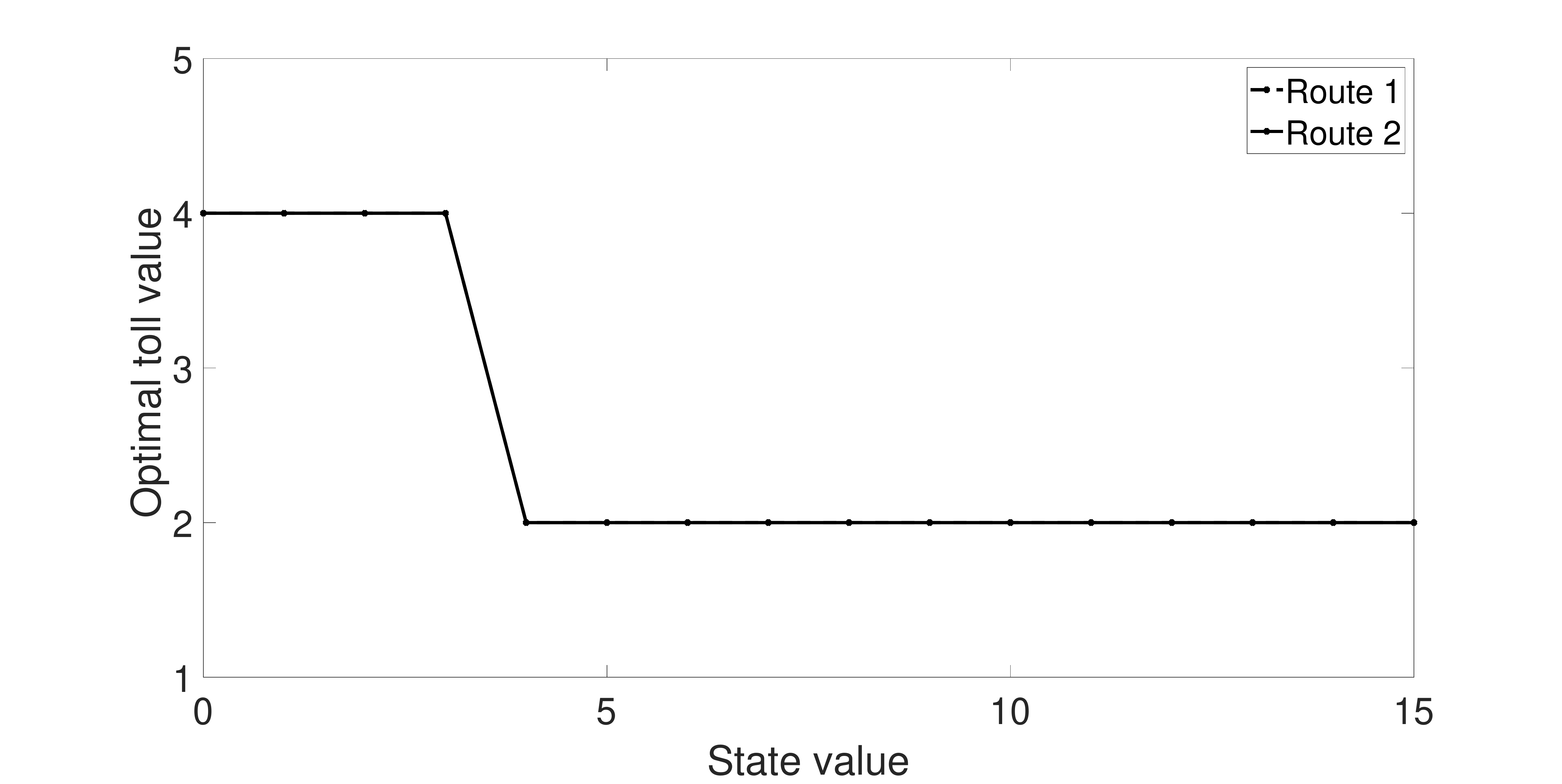}
	\caption{Plot of optimal toll values for different states by solving the \textit{original} problem.}
	\label{fig:soln}
\end{figure}
Since the expressions \eqref{eq:upperbound_g_optimality} and \eqref{eq:upperbound_g_optimality_2} only differ by a constant, the following discussion and conclusions are similar for both the expressions and therefore we only we focus on \eqref{eq:upperbound_g_optimality}. 
First, we consider the case when $\sqrt{\frac{k^{0,m}k^{1,l}\theta}{k^{1,m}}}-k^{0,l}x$ is positive. We plot the condition \eqref{eq:upperbound_g_optimality} in Figure \ref{fig:soln_costdiag_pos}. If the current demand is low (i.e., $k^{0,l}x<\sqrt{\frac{k^{0,m}k^{1,l}\theta}{k^{1,m}}}$) then the optimal toll values satisfy Equation \eqref{eq:upperbound_g_optimality} as the LHS of \eqref{eq:upperbound_g_optimality} is always positive. For a given set of the parameters, optimal toll value increases as the demand decreases by Equation \eqref{eq:upperbound_g_optimality}. In the extreme case, where the length of perpendicular from the origin to the line denoting Equation \eqref{eq:upperbound_g_optimality} becomes very large, i.e., $\frac{\sqrt{\frac{k^{0,m}k^{1,l}\theta}{k^{1,m}}}-k^{0,l}x}{\sqrt{(k^{1,l})^2+(k^{2,l})^2}}>\sqrt{(\tau^{max})^2+(\tau^{max})^2}$, then it is optimal to put the largest possible toll value $\tau^{max}$ so that the difference between the LHS and RHS of the Equation \eqref{eq:upperbound_g_optimality} is minimum. We now consider the case when $\sqrt{\frac{k^{0,m}k^{1,l}\theta}{k^{1,m}}}-k^{0,l}x$ is not positive as shown in Figure \ref{fig:soln_costdiag_neg}. Since the parameters and the possible toll values are positive, the LHS cannot be negative and thus it is optimal to put the least toll value $\tau^{min}$ so that the difference between the LHS and RHS is minimum. Thus, the optimal toll values are governed by the interplay between the demand $x$, $\theta$ and $TSTT(x,u)$ parameters. Note that $g\left(x,u\right)\ge  k^{0,l}\left(\frac{\theta}{k^{0,m}x+\sum_{r }k^{r,m} u^r}\right)+\sum_{r }k^{r,l} u^r$ (this can be obtained by performing an analysis that is similar to Lemma \ref{lem:expectedvalue_tstt}). Conducting a similar analysis on the partial derivative of the expression $k^{0,l}\left(\frac{\theta}{k^{0,m}x+\sum_{r }k^{r,m} u^r}\right)+\sum_{r }k^{r,l} u^r$ as before results into same conclusions regarding the optimal tolling solutions and therefore we omit the analysis for the sake of brevity. Finally, in the limiting case where the difference between the upper and lower bounds of $g(x,u)$ becomes small, it is possible to extend the same conclusions to $g(x,u)$.   
\begin{figure}[h!]
\centering
	\includegraphics[width= 0.6\textwidth]{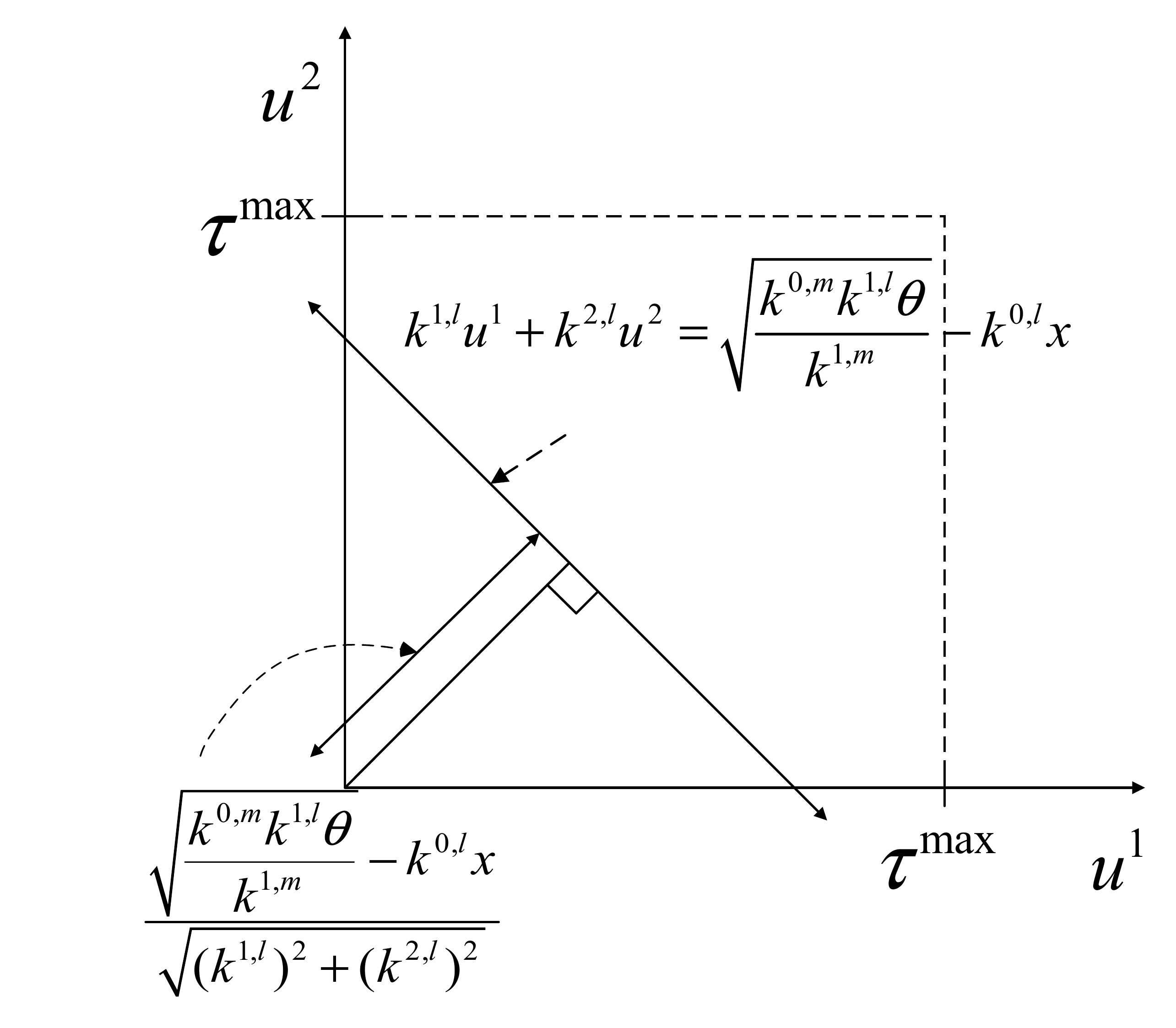}
	\caption{Plot of condition \eqref{eq:upperbound_g_optimality} when $\sqrt{\frac{k^{0,m}k^{1,l}\theta}{k^{1,m}}}-k^{0,l}x$ is positive. }
	\label{fig:soln_costdiag_pos}
\end{figure}
\begin{figure}[h!]
\centering
	\includegraphics[width= 0.6\textwidth]{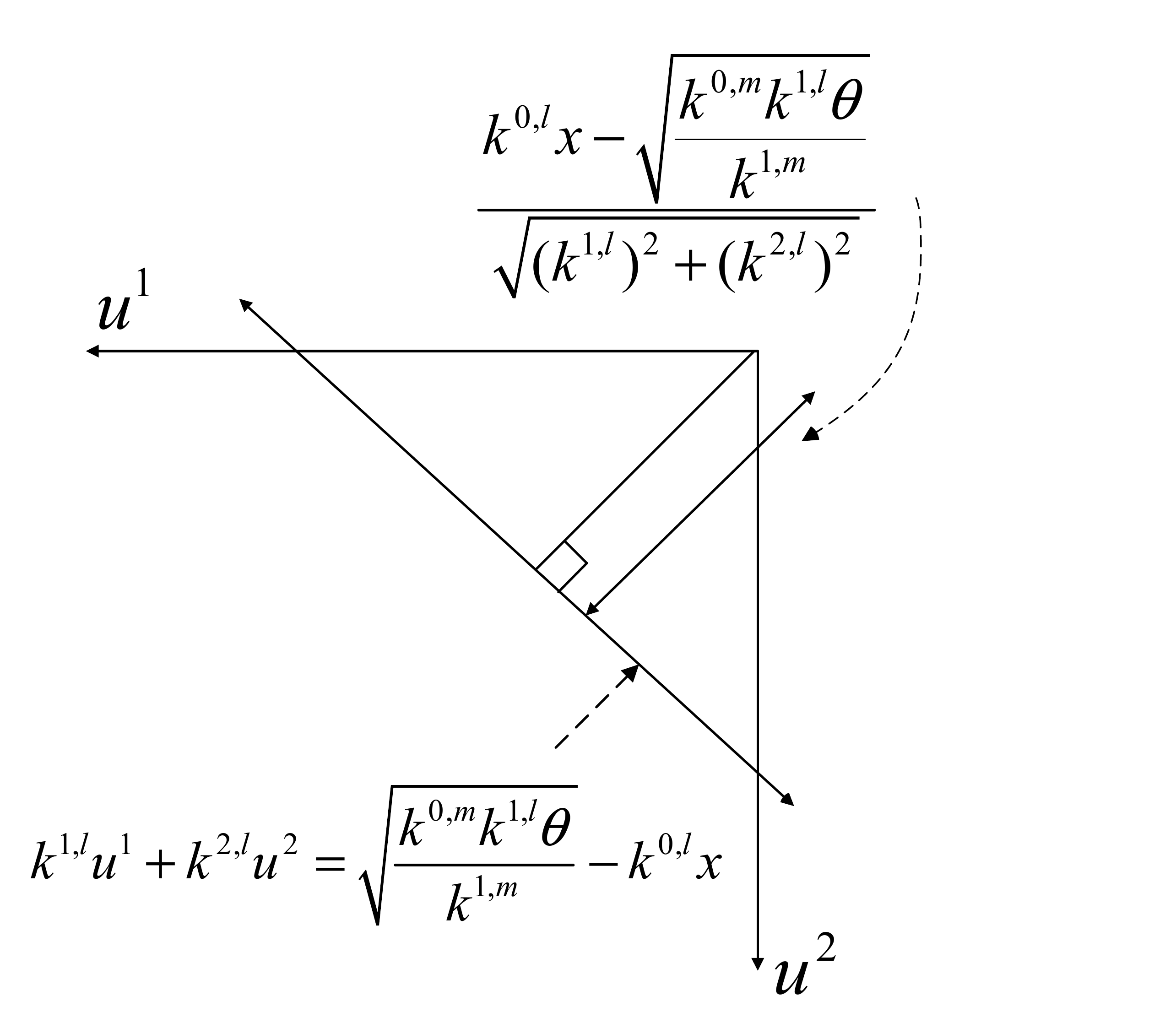}
	\caption{Plot of condition \eqref{eq:upperbound_g_optimality} when $\sqrt{\frac{k^{0,m}k^{1,l}\theta}{k^{1,m}}}-k^{0,l}x$ is not positive. }
	\label{fig:soln_costdiag_neg}
\end{figure}
\begin{figure}[h!]
\centering
	\includegraphics[width= 1.08\textwidth]{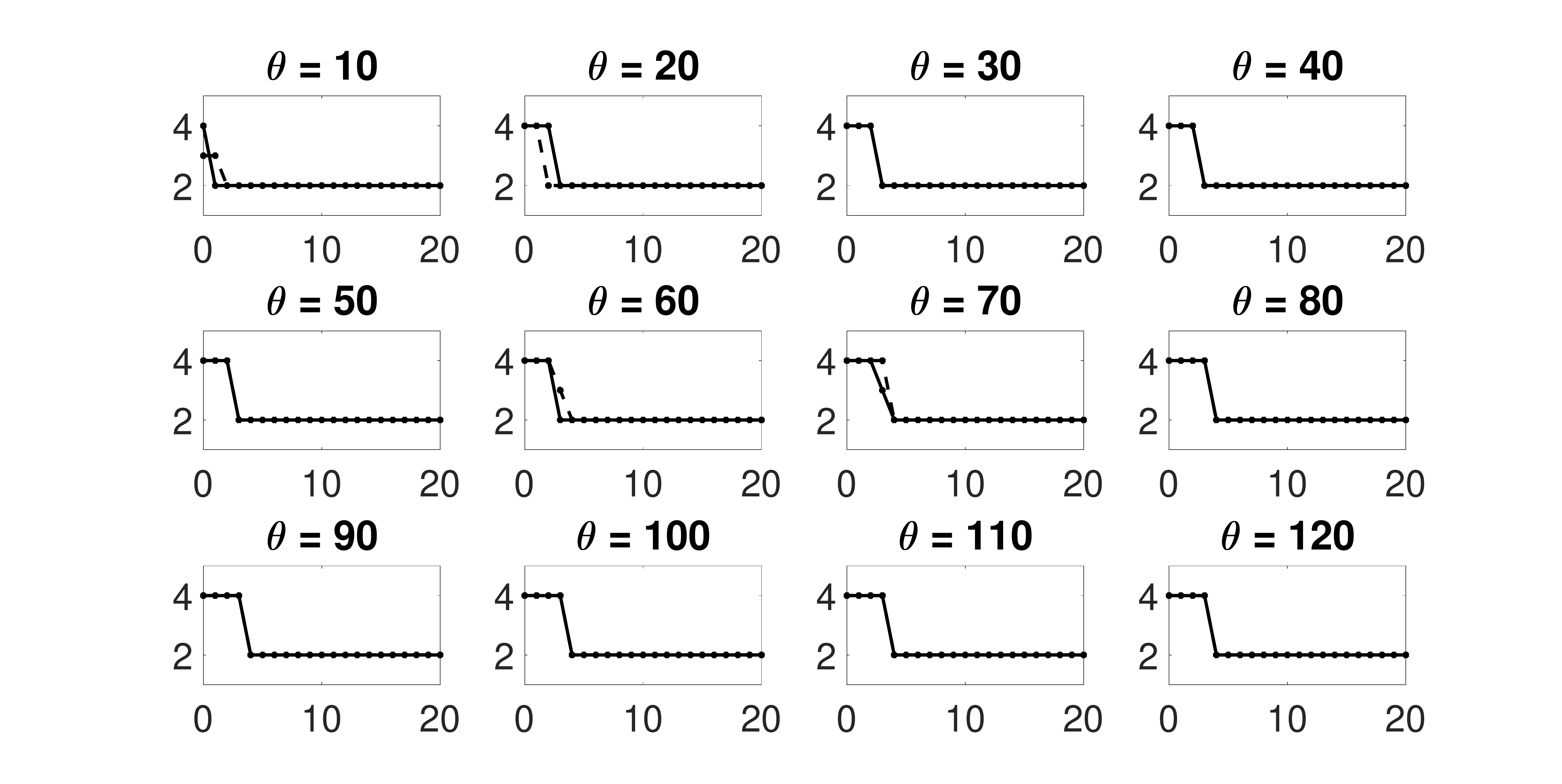}
	\caption{Plot of optimal toll values by solving the \textit{original} problem for different values of $\theta$. The x-axis represents state value (demand value) and y-axis represents the corresponding optimal toll value. The dashed and solid lines present optimal toll values for routes 1 and 2, respectively.}
	\label{fig:soln_thetavar}
\end{figure}

We now present some numerical results to illustrate the point in the last paragraph. Figure \ref{fig:soln_thetavar} presents the variation of optimal toll values with $\theta$, by keeping the travel time parameters to be fixed. It can be seen that as $\theta$ increases the optimal toll value for a given demand value also increases. This can also be followed from the analysis of bounds of the expected cost $g(x,u)$ in the last paragraph: as $\theta$ increases, the toll values $u^1$ and $u^2$ also increase by Equation \eqref{eq:upperbound_g_optimality}, for a given demand value and travel time parameters. This is also what one would anticipate: when $\theta$ is low, the toll should be low, since the demand in the next time step is going to be small with large probability by Equation \eqref{eq:poissondist}; when $\theta$ is large, the demand in the next time step is likely to be large, and therefore charging a large toll is appropriate. Thus, optimal toll calculation is anticipatory in nature as it takes into account the possible demand distribution of the next time step.     
\begin{figure}[h!]
\centering
	\includegraphics[width= 0.8\textwidth]{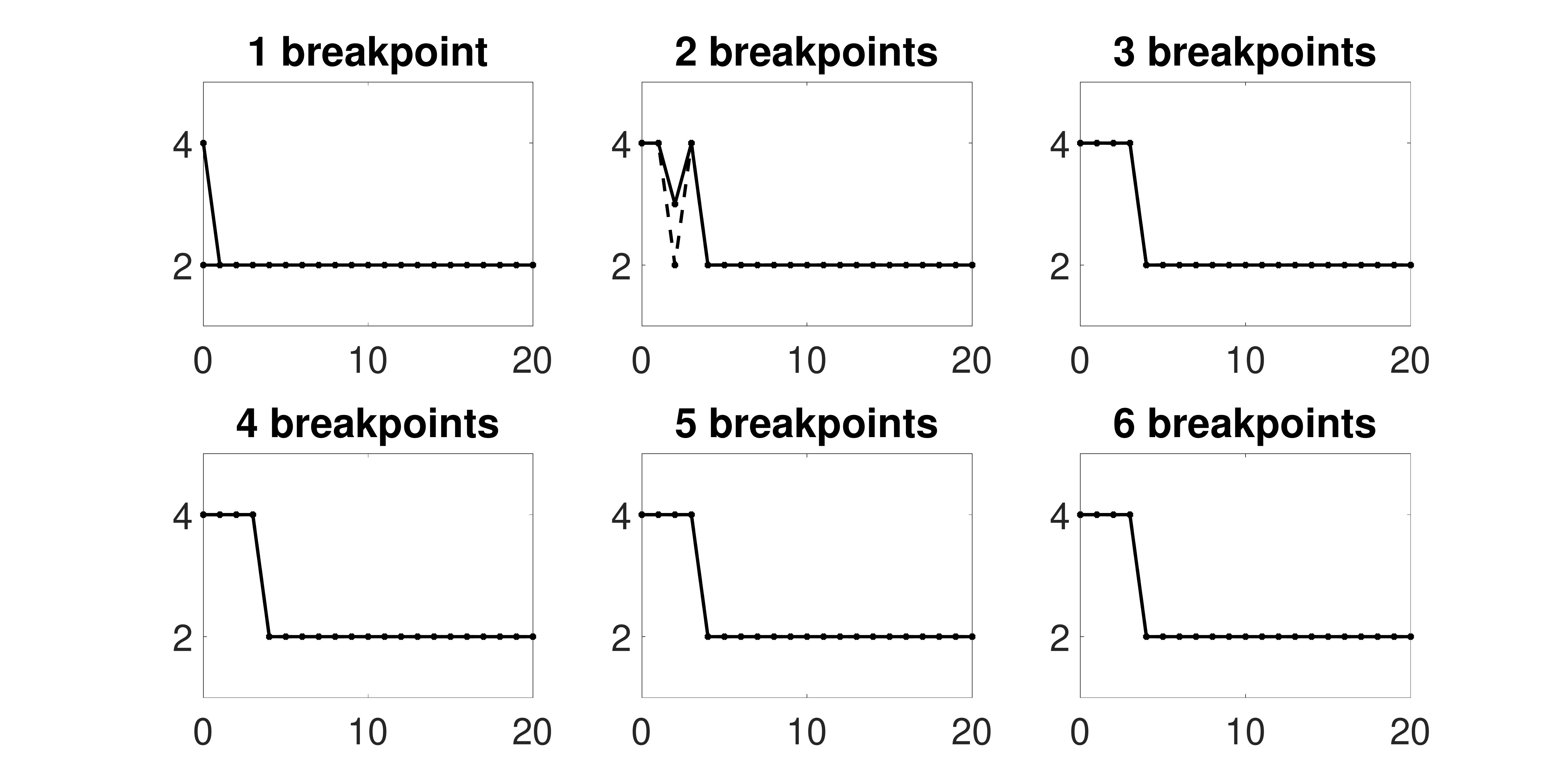}
	\caption{Plot of optimal toll values by solving the \textit{original} problem for different values of number of breakpoints (equivalently the number of piece-wise linear segments) of BPR functions. The x-axis represents state value (demand value) and y-axis represents the corresponding optimal toll value. The dashed and solid lines present optimal toll values for routes 1 and 2, respectively.}
	\label{fig:soln_numbofpiecesegments}
\end{figure}
\begin{figure}[h!]
\centering
	\includegraphics[width= 0.75\textwidth]{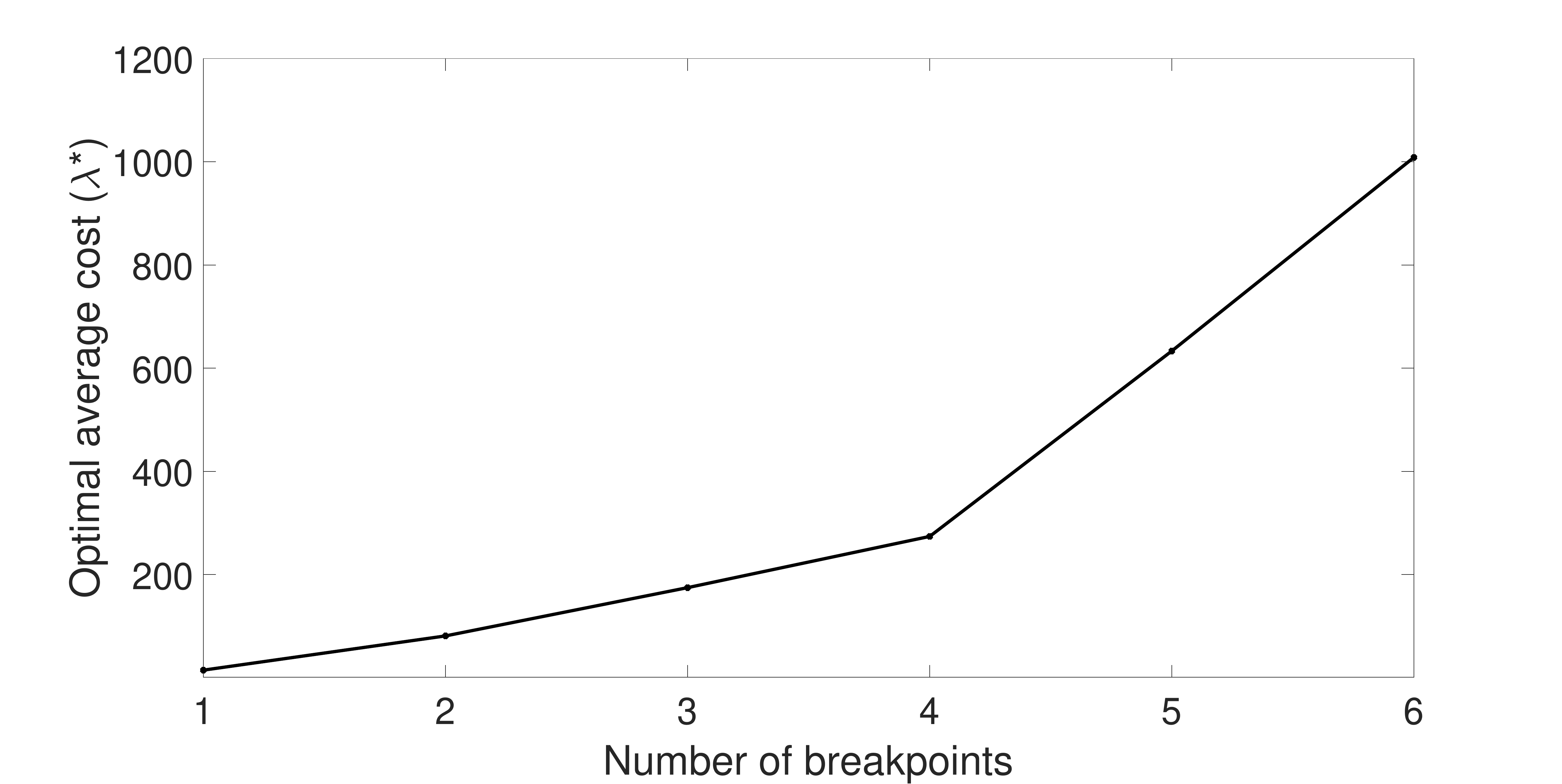}
	\caption{Variation of optimal average cost with the number of breakpoints. }
	\label{fig:cost_numbofpiecewisesegments}
\end{figure}
\begin{figure}[h!]
\centering
	\includegraphics[width= 0.75\textwidth]{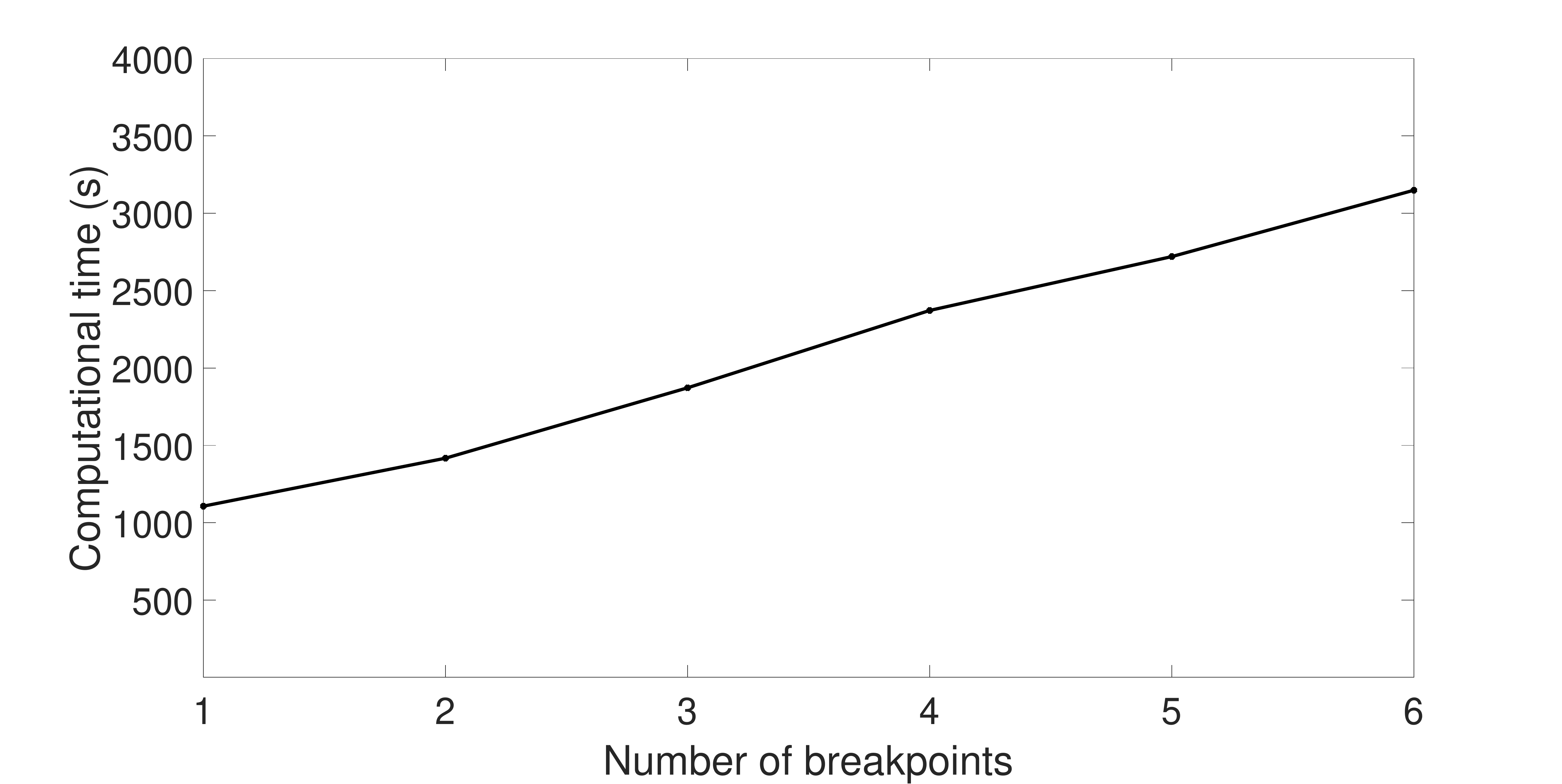}
	\caption{Variation of computational time (in seconds) with the number of breakpoints}
	\label{fig:comptime_numbofsegments}
\end{figure}
\subsection{Continuous piecewise linear approximation}
In this section, we discuss the effect of number of piece-wise segments (or equivalently the number of breakpoints) $\eta$ of BPR functions on the solution accuracy and computation. It is clear that as the number of piece-wise segments increase, the original BPR functions are more accurately approximated. However, the computational load involved in solving the problem also increases with $\eta$ and therefore there is a trade-off in choosing the value of $\eta$. Figure \ref{fig:soln_numbofpiecesegments} presents the variation of optimal toll values with $\eta$, by keeping the other parameters to be fixed. It can be seen that the optimal toll values do not vary when $\eta\ge 3$. Figure \ref{fig:cost_numbofpiecewisesegments} presents the variation of the optimal average cost $(\lambda^*)$ with $\eta$. It can be seen that the optimal average cost keeps increasing with $\eta$, which is also expected because as the number of segments increase the BPR travel time functions are more accurately approximated. Figure \ref{fig:comptime_numbofsegments} presents the variation of computational time with $\eta$. It can be seen that computational time steadily increases with $\eta$. Figures \ref{fig:soln_numbofpiecesegments},  \ref{fig:cost_numbofpiecewisesegments} and \ref{fig:comptime_numbofsegments} illustrate that though the accuracy of the optimal average cost increases with $\eta$, the increase in the computational time necessitates to choose an appropriate $\eta$. Since the optimal toll values do not change when $\eta$ becomes greater than three, we set $\eta=4$ in the \textit{original} problem.
\begin{figure}[h!]
\centering
	\includegraphics[width= 0.75\textwidth]{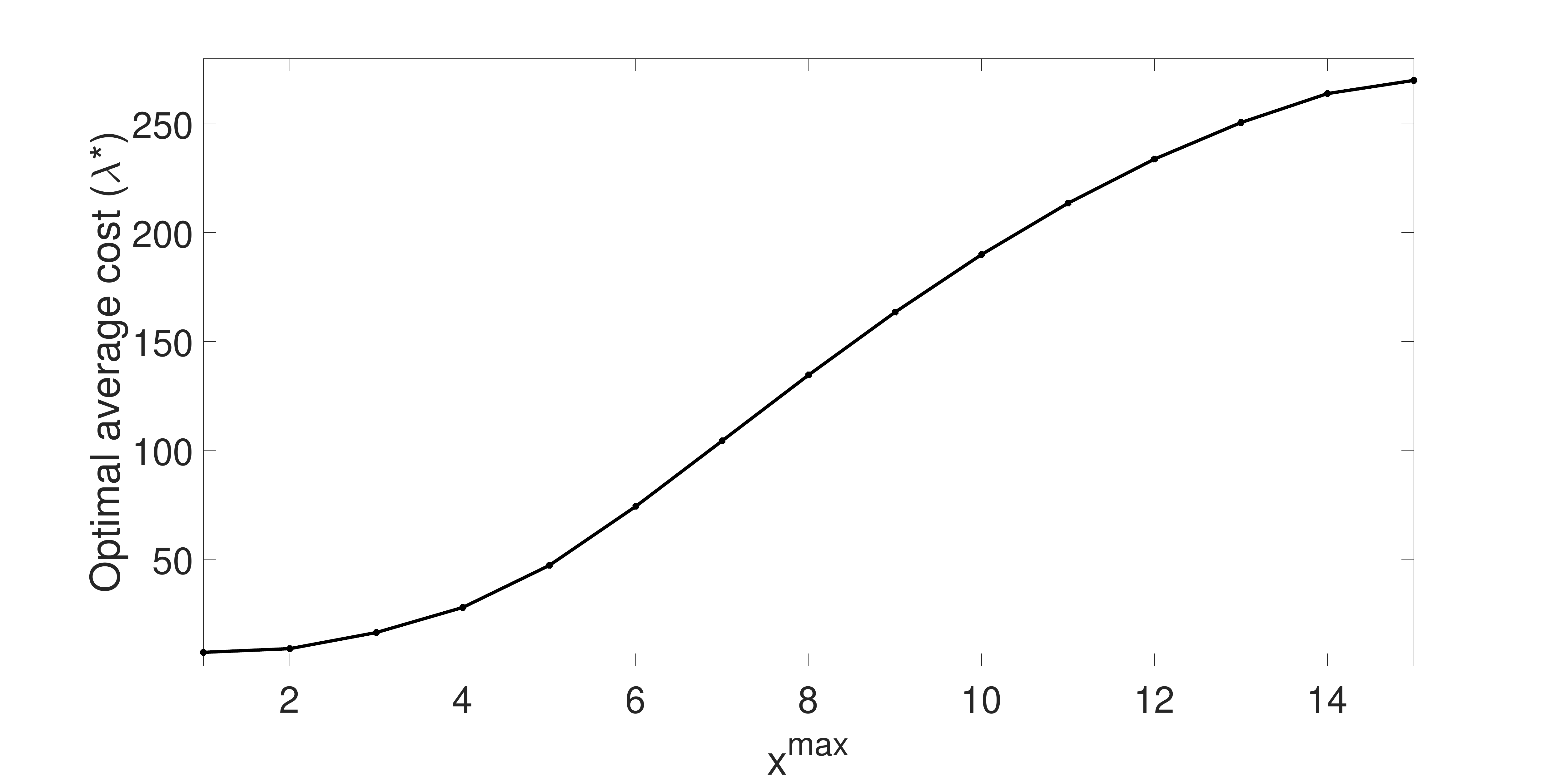}
	\caption{Variation of optimal average cost with the state truncation value ($x^{max}$). }
	\label{fig:maxvalue}
\end{figure}

\begin{figure}[h!]
\centering
	\includegraphics[width= 1.08\textwidth]{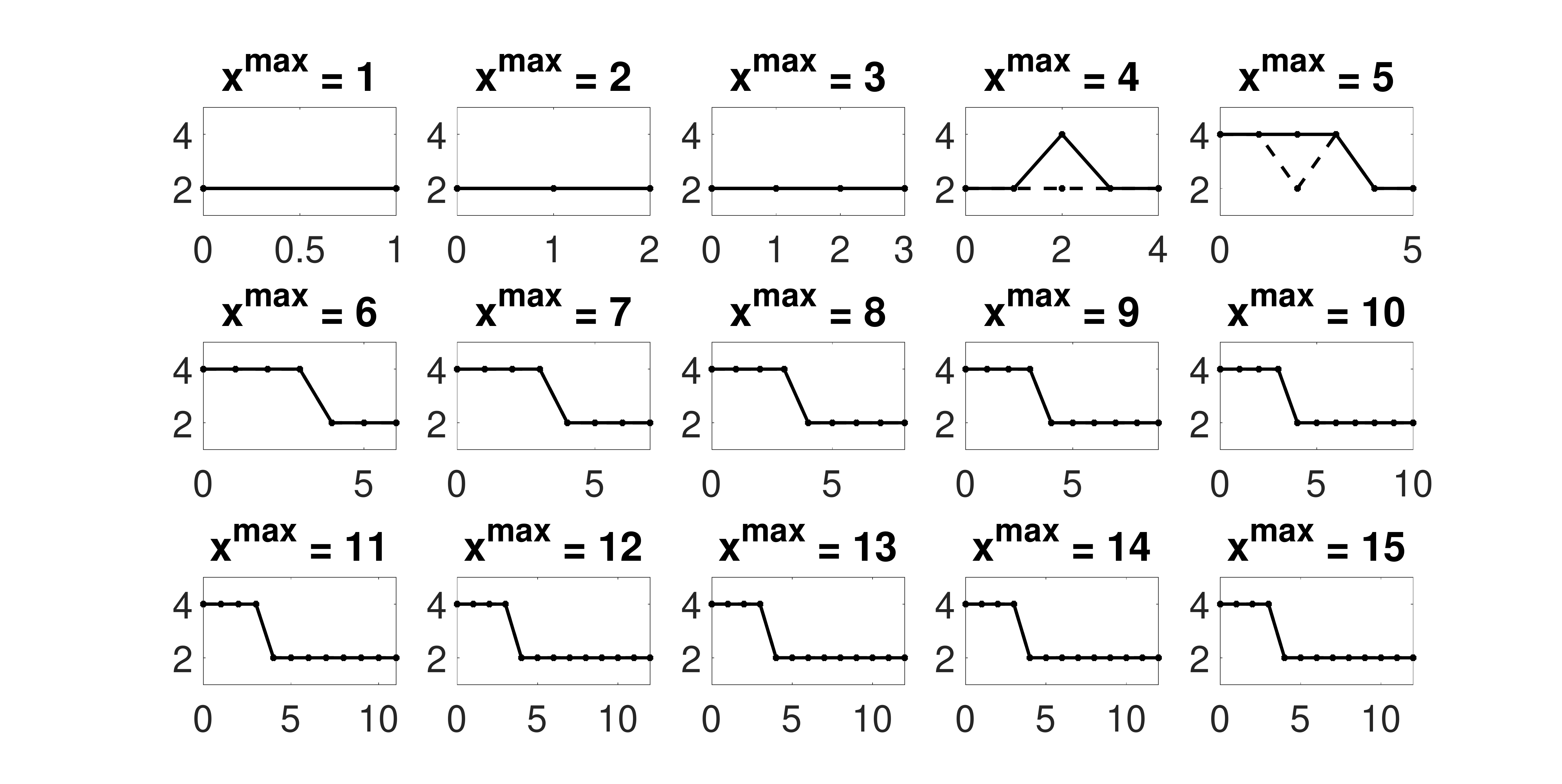}
	\caption{Plot of optimal toll values by solving the \textit{original} problem for different values of $x^{max}$. The x-axis represents state value (demand value) and y-axis represents the corresponding optimal toll value. The dashed and solid lines present optimal toll values for routes 1 and 2, respectively.}
	\label{fig:soln_xmaxvar}
\end{figure}
\subsection{Truncation}
In Section \ref{sec:implementation}, we introduced truncation of the states as a strategy to solve the infinite state problem. We said that as the truncation value ($x^{max}$) increases, the solution of the truncated problem can be more closely approximated to the solution of the untruncated problem. In Figure \ref{fig:maxvalue}, we plot the optimal average cost $\lambda^*$ obtained from solving the \textit{original} problem but for different values of $x^{max}$. It can be seen that $\lambda^*$ converges to a fixed value when $x^{max}$ approaches 15. Figure \ref{fig:soln_xmaxvar} presents the variation of optimal toll values with $x^{max}$. It can be seen that optimal toll values are the same for $x^{max}\ge 6$. Thus, choosing $x^{max}$ equal to 15 seems reasonable. 
Also, the choice of a larger truncation value $x^{max}$ comes at the expense of a significantly increased computation time (to be shown in the next section). Thus, there is a trade-off between solution quality versus computation time when we make the choice of truncation value.  

\subsection{Variation of computation time with the number of states}
In Section \ref{sec:aggregated}, we stated that as the number of states in the problem increase the computational time for solving the problem significantly increases. Figure \ref{fig:comptime_maxvalue} presents the computational time  when the \textit{original} problem is solved by varying the values of $x^{max}$. Note that the number of states in the problem are equal to $1+x^{max}$ because the state values start from zero. Thus, increasing the value of $x^{max}$ implies increasing the number of states in the problem. Since the computational time increases steadily with the number of states, truncated problem needs to be approximated so that the solutions of the problem can be efficiently solved. Therefore, from this point we present the results by solving an approximation of the \textit{original} problem.
\begin{figure}[h!]
\centering
	\includegraphics[width= 0.75\textwidth]{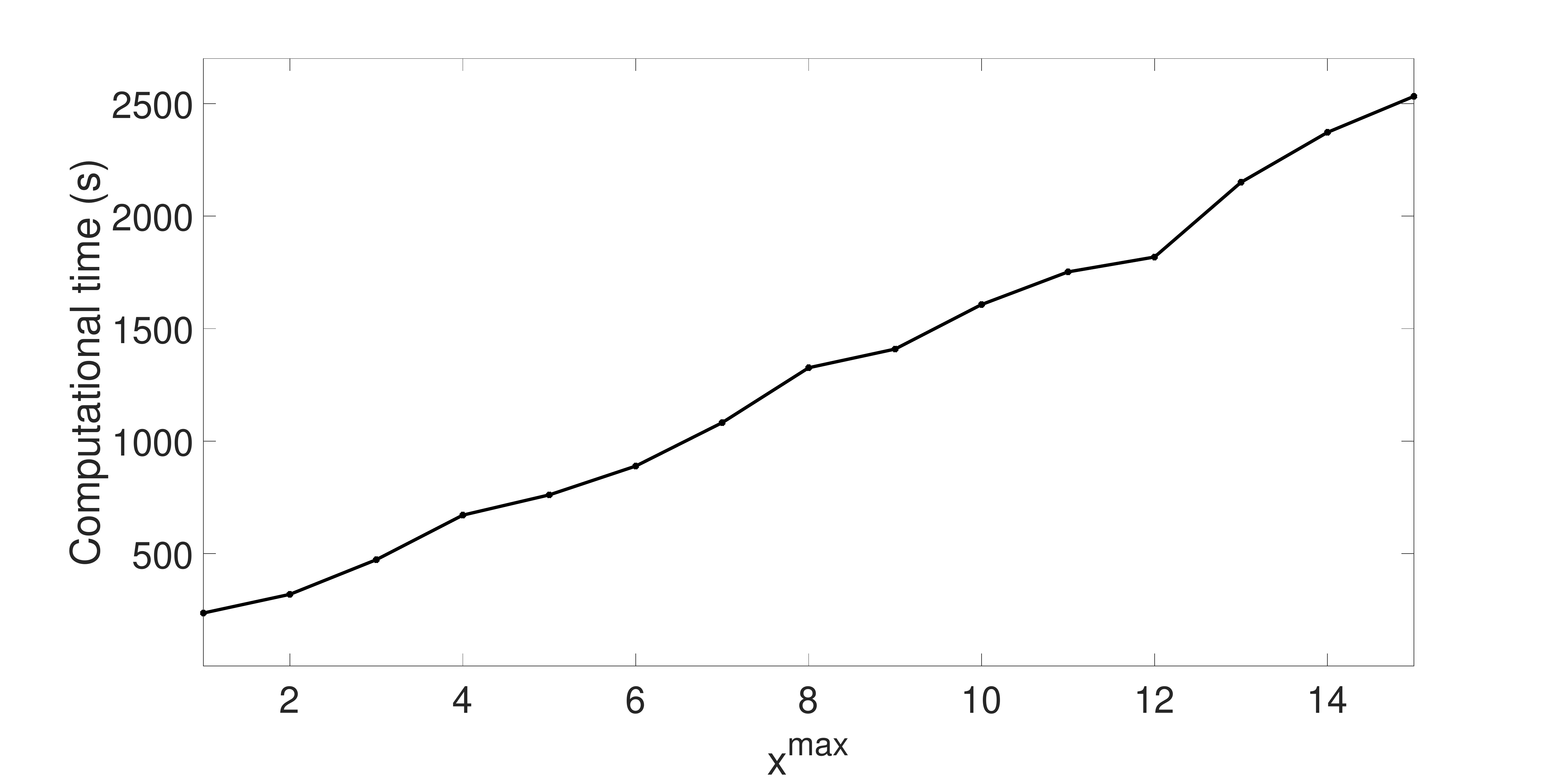}
	\caption{Variation of computational time (in seconds) with the state truncation value ($x^{max}$). }
	\label{fig:comptime_maxvalue}
\end{figure}
\begin{figure}[h!]
\centering
	\includegraphics[width= 0.75\textwidth]{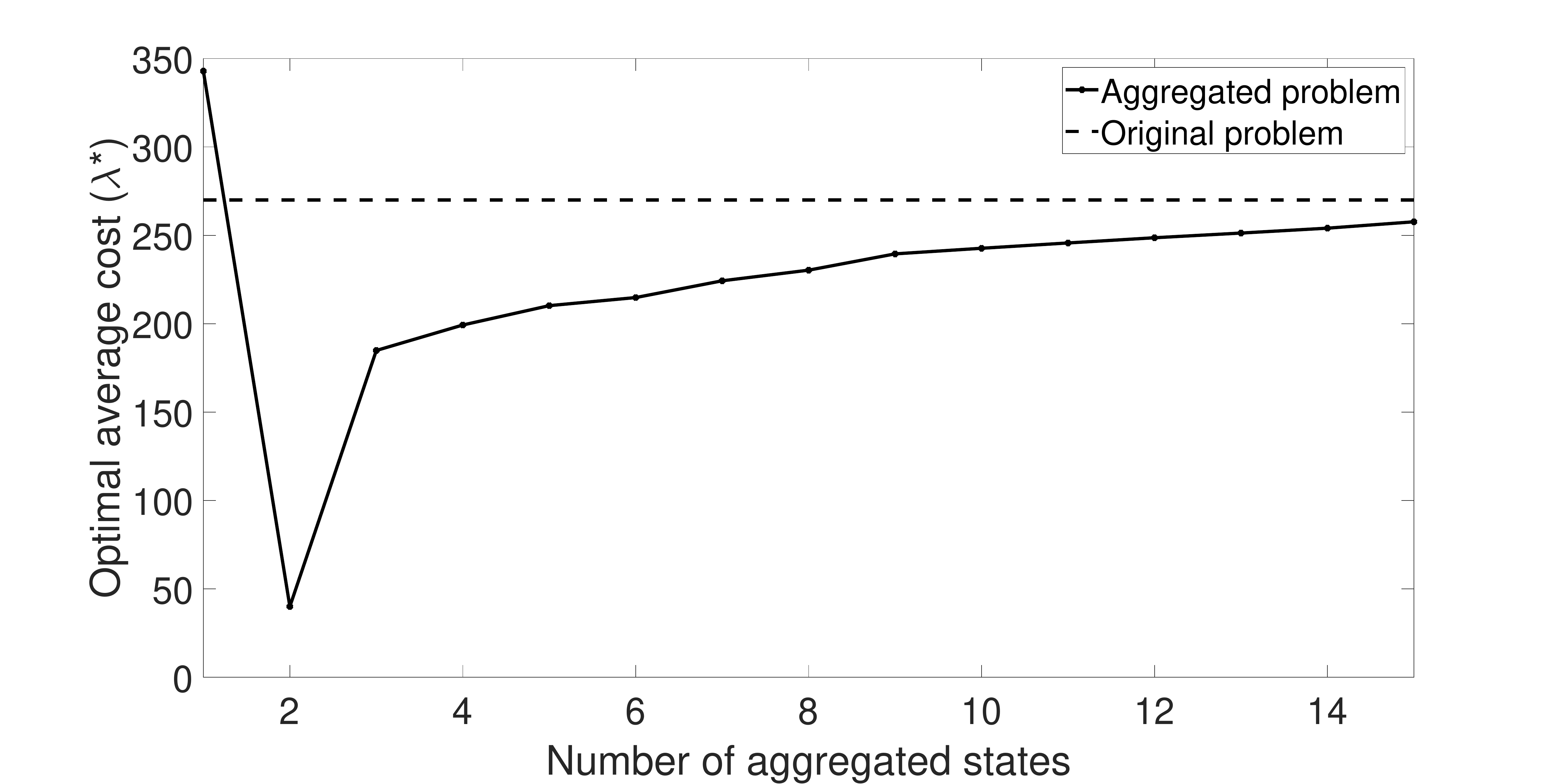}
	\caption{Variation of optimal average cost with the number of aggregated states.}
	\label{fig:cost_aggregation}
\end{figure}

\subsection{Variation of optimal average cost with the number of aggregated states}
Before we perform aggregation of the problem as suggested in the last section, we compare the solution quality of the aggregated problem with the \textit{original} problem. Figure \ref{fig:cost_aggregation} presents the comparison of the optimal average cost obtained from solving the \textit{original} and aggregated problems. The points on the solid line in the figure present the value of optimal average cost for various number of aggregated states in an aggregated problem. The dashed line represents the optimal average cost obtained from solving the \textit{original} problem. It can be seen that solutions of aggregated problems approach to the solution of the \textit{original} problem as the number of aggregated states increase. Based on these tests, we decided to solve the subsequent results with 8 aggregated states. Note that the computational time associated with solving the problem with 8 aggregated states is about 1204 seconds, which is about 0.47 times of the computational time involved in solving the \textit{original} problem.



\subsection{Variation of computational time with the number of routes}
In the \textit{original} problem, we considered two routes connecting the OD pair. We analyze the variation of computational time for solving the problem as the number of routes increase. Figure \ref{fig:numb_routes} presents this variation when the aggregated problem is solved. It can be seen that computational time significantly increases with the number of routes. This is in accordance with network modeling problems that face increasing computational times with network size \citep{gehlot2019user}. A possible remedy to tackle this issue in future works can be to aggregate routes similarly as we aggregate states in the MDP.   
%
\begin{figure}[h]
\centering
	\includegraphics[width= 0.65\textwidth]{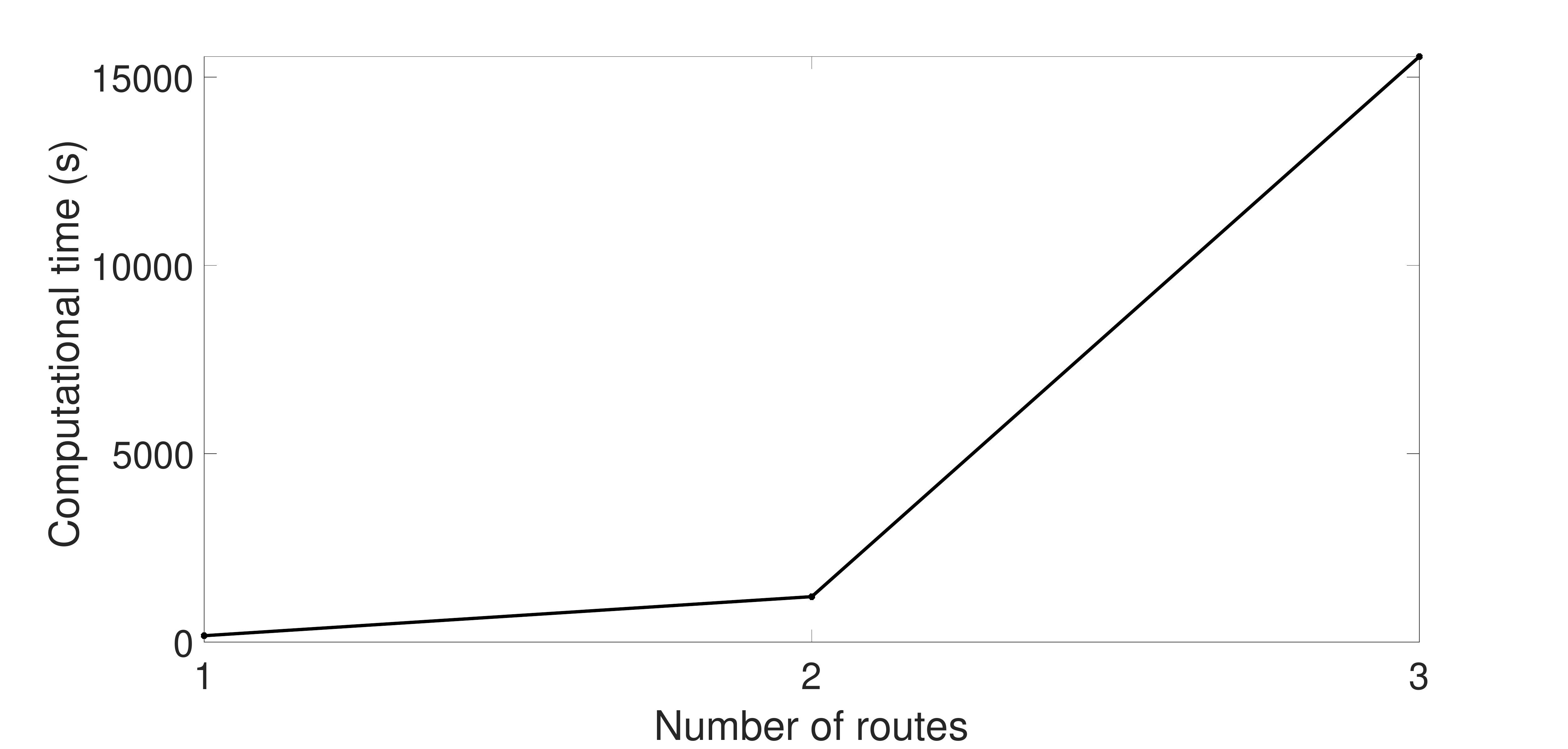}
	\caption{Variation of computational time (in seconds) with the number of routes}
	\label{fig:numb_routes}
\end{figure}

\section{Conclusions and future directions}
In this paper, we propose an optimal control modeling approach for day-to-day timescale congestion pricing that incorporates demand elasticity and stochasticity. We develop a countable-state infinite-horizon MDP where we do not constrain the travel time functions to be bounded, requiring a more elaborate solution analysis centered around weighted sup-norm contractions and recurrence properties of Markov chains. We prove that the optimal average cost of the MDP exists by verifying sufficient conditions needed to ensure Bellman's optimality. 
We also develop an approximate method that resolves implementation and computational issues associated with solving the exact problem. By the conducted numerical tests, we find that the optimal tolling calculation is anticipatory of the demand distribution in the next time step and sometimes it is optimal to levy large tolls whereas sometimes it is optimal to levy low tolls, depending on the problem parameters. We also found that the approximate method is both efficient and accurate through numerical results.

There can be several extensions to this study. We assumed a stationary (time-invariant) model by keeping the parameter $\theta$ to be the same across different days. However, if we assume time-varying parameters then it is notably harder to analyze the non-stationary model; see \cite{bowerman1974} for the finite state space setting. Computing the optimal policy is complicated as well; in practice one may have to use rolling/planning horizon methods \citep{alden1992rolling} or use classic linear programming methods \citep{ghate2013linear} when the state space is finite. Indeed, there is very little literature on how to deal with non-stationary MDP’s with unbounded costs and countable state spaces - two critical features of our model. This promises to be a fruitful avenue for further research. Finally, considering stochasticity in the supply side will also be an interesting future study.  

\section*{Acknowledgement}
The authors are grateful to National Science Foundation for the grant CMMI 1520338 to support the research presented in this paper.
\appendix
\section{$\bf{TSTT(x,u)}$ is a continuous piece-wise linear  function} \label{app:tstt_piecewiselinear}
We argue that $TSTT(x,u)$ is a continuous piece-wise linear  function.
\begin{lemma} \label{lem:tstt_piecewise}
$TSTT(x,u)$ is a continuous piece-wise linear  function. 
\end{lemma}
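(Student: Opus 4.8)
The plan is to reduce the statement to a claim about the equilibrium travel time and then to establish that claim by a parametric inversion argument.

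First I would put $TSTT(x,u)$ into closed form. Fix $(x,u)$ and let $w$ and $(x^r)_r$ be the equilibrium travel time and flows determined by \eqref{eq:tstt_compl1}--\eqref{eq:tstt3_2}. Since every $c^r$ is strictly increasing (Section \ref{subsec:tstt}; the BPR curves and their tangent-based approximations have positive slopes on each segment), a route carrying positive flow satisfies $c^r(x^r)+u^r=w$ with $c^r(x^r)>c^r(0)=\alpha^{r,1}$, so $\alpha^{r,1}+u^r<w$, whereas a route with zero flow has cost $\alpha^{r,1}+u^r\ge w$ by \eqref{eq:tstt_compl2}. Hence each route contributes exactly $\max\{\alpha^{r,1}+u^r,\,w\}$ to the sum \eqref{eq:tstt1_defn}, so
\[
TSTT(x,u)=\sum_{r=1}^{R}\max\{\alpha^{r,1}+u^r,\,w(x,u)\},
\]
where $w(x,u)$ denotes the equilibrium travel time. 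Since finite sums and finite maxima of continuous piece-wise linear functions are again continuous piece-wise linear and each $\alpha^{r,1}+u^r$ is affine, it suffices to show that $w(x,u)$ is continuous piece-wise linear on $[0,\infty)\times[\tau^{min},\tau^{max}]^{R}$.

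Next I would describe $w(x,u)$ implicitly via flow conservation. For each $r$ let $\phi_r$ be the inverse of $c^r$, extended to all of $\mathbb{R}$ by prolonging its first linear segment; then $\phi_r$ is a strictly increasing, continuous piece-wise linear function of one variable with $\phi_r(\alpha^{r,1})=0$. The equilibrium flow on route $r$ equals $x^r=\max\{0,\phi_r(w-u^r)\}$: on the branch $w-u^r\ge\alpha^{r,1}$ this inverts $c^r(x^r)+u^r=w$, and on the complementary branch it returns $0$, the branches agreeing at $w-u^r=\alpha^{r,1}$. Thus $x^r$ is continuous piece-wise linear in $(w,u^r)$, being $\max\{0,\cdot\}$ applied to the composition of the continuous piece-wise linear map $\phi_r$ with the affine map $(w,u^r)\mapsto w-u^r$. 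Summing, constraint \eqref{eq:tstt_sumofflow} becomes $\Phi(w,u)=x$ with $\Phi(w,u):=\sum_{r=1}^{R}\max\{0,\phi_r(w-u^r)\}$ continuous piece-wise linear in $(w,u)$, non-decreasing in $w$, and strictly increasing in $w$ wherever $\Phi>0$; so for each $(x,u)$ there is a unique smallest $w(x,u)\ge0$ with $\Phi(w(x,u),u)=x$, and it is the unique such $w$ when $x>0$.

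The remaining, and I expect most delicate, step is to deduce that this implicitly defined $w(x,u)$ is itself continuous piece-wise linear. The $(w,u)$-domain of $\Phi$ decomposes into finitely many full-dimensional polyhedral cells on each of which $\Phi$ is affine; intersecting each cell with $\{x=\Phi(w,u)\}$ and permuting coordinates exhibits the graph of $w(\cdot,\cdot)$ as a finite union of polyhedra in $(x,u,w)$-space. On such a piece $x=A\,w+\ell(u)$ with $\ell$ affine and $A$ the slope of $\Phi$ in $w$ on that cell; wherever $x>0$ this slope is strictly positive, so $w=(x-\ell(u))/A$ is affine in $(x,u)$, while the locus $x=0$ is the lower-dimensional boundary $w=\min_r(\alpha^{r,1}+u^r)$ absorbed by continuity. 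The projections of the pieces onto $(x,u)$-space are polyhedra covering the whole domain, and continuity of $w$ across the overlaps follows from continuity and strict monotonicity of $\Phi$ in $w$. Hence $w(x,u)$, and with it $TSTT(x,u)$, is continuous piece-wise linear. The one point requiring care is the degenerate cells of the subdivision---those on which $\Phi$ does not depend on $w$, corresponding to $x=0$---but away from them the argument is routine bookkeeping with the closure of the continuous-piece-wise-linear class under addition, finite maxima, and composition with affine maps.
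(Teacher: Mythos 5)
Your proof is correct, but it takes a genuinely different route from the paper's. The paper argues in three steps: uniqueness of the equilibrium flows from monotonicity; an explicit formula for $w$ obtained by solving \eqref{eq:tstt_compl1} as a linear system over the used routes (their equation \eqref{eq:TSTTexpression_2}), which shows $TSTT$ is affine in $(x,u)$ as long as the active piecewise segments and the used-route set do not change, so that nonlinearity can occur only at breakpoints; and continuity imported from outside, via the equivalence of the complementarity system with Beckmann's strictly convex program and the cited perturbation stability of its optimal value. You instead write $TSTT=\sum_r\max\{\alpha^{r,1}+u^r,\,w\}$, characterize $w$ as the root of the scalar monotone equation $\Phi(w,u)=x$ built from the inverted link performance functions, and read off both continuity and piecewise linearity from the polyhedral structure of $\Phi$ together with its strict monotonicity in $w$ on $\{\Phi>0\}$. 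Your version is more self-contained (no appeal to convex-programming perturbation theory) and is more careful than the paper about why the pieces form a finite polyhedral subdivision of the whole $(x,u)$-domain, which the paper glosses over; what the paper's route buys in exchange is the explicit coefficient formula \eqref{eq:TSTTexpression_2}, which it reuses in Lemma \ref{lem:tstt_mono} to get positivity of the slopes $k^{0,\psi},k^{r,\psi}$. One small wobble in your write-up: defining $w(x,u)$ as the \emph{smallest} nonnegative root of $\Phi(\cdot,u)=x$ gives $w(0,u)=0$, not the continuous extension $\min_r(\alpha^{r,1}+u^r)$ that your later sentence uses; since $\max\{\alpha^{r,1}+u^r,w\}=\alpha^{r,1}+u^r$ for every $w\le\min_r(\alpha^{r,1}+u^r)$, the value of $TSTT$ at $x=0$ is unaffected either way, so this is a presentational inconsistency rather than a gap, but you should fix the definition to take $w(x,u)=\min\{w:\Phi(w,u)\ge x\}$ or simply define $w(0,u)=\min_r(\alpha^{r,1}+u^r)$ outright.
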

\begin{proof}
Note that for a given value of the arguments $(x,u)$ there is a unique solution to \eqref{eq:tstt1_defn}-\eqref{eq:tstt4} in terms of the variables $\{x^r\}$ and $\{z^{r,\beta}\}$ as continuous piece-wise linear approximations of BPR travel time functions are monotonically increasing \citep{sheffi1985urban}.
Thus, depending on the values of $x^1,\ldots,x^R$, \eqref{eq:tstt_compl1} can be solved as a system of linear equations for all the used routes to obtain
\begin{equation}
w=  (R-\iota)x\frac{\underset{\forall r \notin \nu}{\Pi}c^r}{\underset{\forall r 
		\notin \nu}{\sum}\underset{\forall j \neq r}{\Pi}c^j}+(R-\iota)\frac{\underset{\forall r \notin \nu}{\sum} \left(\left(\underset{\forall j \neq r}{\Pi}c^j\right)(u^r+\alpha^r)\right)}{\underset{\forall r 
		\notin \nu}{\sum}\underset{\forall j \neq r}{\Pi}c^j},\label{eq:TSTTexpression_2}  
\end{equation}
where the coefficients $\{c^r\}$ and $\{\alpha^r\}$ are the corresponding parameters of the piece-wise approximations to the solutions $x^1,\ldots,x^R$ that are obtained from \eqref{eq:tstt1_defn}-\eqref{eq:tstt4}. Since the variables $x^1,\ldots,x^R$ are unique for a given set of arguments, the travel time parameters $\{c^r\}$ and $\{\alpha^r\}$ in \eqref{eq:TSTTexpression_2} are also unique. If the change in the arguments $(x,u)$ is such that after the change, $TSTT(x,u)$ is given by the same travel time parameters $\{c^r\}$ and $\{\alpha^r\}$ (i.e., if the change does not crosses a break point of $TSTT(x,u)$), then $TSTT(x,u)$ is linear with the input arguments by \eqref{eq:TSTTexpression_2} and \eqref{eq:tstt1_defn}. Therefore, non-linearity only occurs at the breakpoints of $TSTT(x,u)$.

We now focus on continuity. Note that $TSTT(x,u)$ is the sum of continuous piecewise-linear route functions such that the input parameters satisfy \eqref{eq:tstt1_defn}-\eqref{eq:tstt4}. \cite{sheffi1985urban} showed that complementarity conditions \eqref{eq:tstt_compl1}, \eqref{eq:tstt_compl2} and \eqref{eq:tstt3_2} are equivalent to a convex optimization problem with strictly convex objective function and linear constraints because the approximated travel functions are monotonically increasing. Since the optimal value of such a problem is continuous in the parameters $(x,u)$ \citep{bonnans2013perturbation}, $TSTT(x,u)$ is a continuous function. 
\end{proof}
\begin{lemma} \label{lem:tstt_mono}
$TSTT(x,u)$ is a monotonically increasing function with the input parameters. 
\end{lemma}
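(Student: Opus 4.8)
The plan is to show that $TSTT(x,u)$ is monotonically increasing separately in $x$ (the total demand) and in each toll coordinate $u^r$. Because $TSTT$ is continuous and piece-wise linear by Lemma \ref{lem:tstt_piecewise}, it suffices to check that on the interior of each linear piece (i.e.\ for a fixed active set of used routes and a fixed choice of the segment indicators $\{z^{r,\beta}\}$) the relevant partial derivatives are nonnegative, and then invoke continuity to patch across breakpoints. On such a piece the equilibrium travel time $w$ is the explicit affine function of $(x,u)$ given in \eqref{eq:TSTTexpression_2}, with all route slopes $c^r>0$ and all intercept-type parameters positive, and $TSTT$ is given by \eqref{eq:tstt1_defn}.

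First I would treat monotonicity in $x$. Fix an action $u$ and increase the demand $x$. Within a single linear piece, \eqref{eq:TSTTexpression_2} shows $w$ is affine in $x$ with coefficient $(R-\iota)\frac{\prod_{r\notin\nu}c^r}{\sum_{r\notin\nu}\prod_{j\neq r}c^j}>0$, so $w$ increases; since the unused-route contribution $\sum_{r\in\nu}(\alpha^{r,1}+u^r)$ in \eqref{eq:tstt1_defn} is constant on the piece and $(R-\iota)w$ increases, $TSTT$ increases on the piece. At a breakpoint the set of used routes or a segment index changes; here I would argue that a route enters the used set only when its travel time has risen to equal $w$, and a segment index increments only to a steeper-or-equal segment (the approximation is monotonically increasing), so the left and right derivatives are both nonnegative and, by continuity of $TSTT$, the function is nondecreasing across the breakpoint. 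Hence $TSTT(\cdot,u)$ is nondecreasing in $x$ for each fixed $u$.

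Next I would treat monotonicity in a toll $u^r$. If route $r$ is unused, raising $u^r$ only raises the constant term $(\alpha^{r,1}+u^r)$ in \eqref{eq:tstt1_defn} and makes an already-unused route even less attractive, so $TSTT$ strictly increases and no used route becomes unused. If route $r$ is used, then from \eqref{eq:TSTTexpression_2} the coefficient of $u^r$ in $w$ is $(R-\iota)\frac{\prod_{j\neq r}c^j}{\sum_{s\notin\nu}\prod_{j\neq s}c^j}>0$, so $w$ increases; again the constant unused-route term does not decrease, and when $w$ rises enough that some previously used route's travel time would be exceeded, that route simply drops out of $\nu^c$ at the breakpoint, and continuity of $TSTT$ handles the transition. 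In all cases raising $u^r$ does not decrease $TSTT$. Combining the two, $TSTT(x,u)$ is monotonically increasing (indeed strictly increasing in each toll, and nondecreasing in $x$) in all its arguments.

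The main obstacle I anticipate is the careful bookkeeping at the breakpoints: one must verify that the active-set changes (a route joining or leaving the used set, or a piece-wise segment index incrementing) never produce a downward jump or a negative one-sided derivative. The clean way to sidestep most of this is the convex-program characterization already invoked in the proof of Lemma \ref{lem:tstt_piecewise}: the used-route equilibrium travel time $w$, and hence $TSTT$ via \eqref{eq:tstt1_defn}, equals the derivative (in $x$) of Sheffi's strictly convex Beckmann-type potential, whose integrand is the monotonically increasing approximated link cost plus toll; monotonicity of $TSTT$ in $x$ then follows from convexity of the potential, and monotonicity in $u^r$ follows because increasing a toll pointwise raises the integrand and therefore raises the optimal objective's sensitivity, with an envelope-theorem argument giving the sign of the derivative without needing to track the active set explicitly.
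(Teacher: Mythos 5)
Your proposal is correct and follows essentially the same route as the paper: monotonicity on each linear piece via the positive coefficients of $x$ and the used-route tolls in \eqref{eq:TSTTexpression_2}, the direct additive dependence on unused-route tolls in \eqref{eq:tstt1_defn}, and continuity (Lemma \ref{lem:tstt_piecewise}) to patch across breakpoints. The extra care at breakpoints and the Beckmann-potential alternative are fine but not needed beyond what continuity plus piecewise monotonicity already gives.
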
	
\begin{proof}
We first focus on the monotonicity of $TSTT(x,u)$ with state $x$ and toll values on the used routes. $TSTT(x,u)$ monotonically increases with these parameters because $TSTT(x,u)$ is a continuous piece-wise linear function by Lemma \ref{lem:tstt_piecewise} and the coefficients correponding to $x$ and the toll values on used routes are positive in-between the breakpoints by \eqref{eq:TSTTexpression_2}. Finally, the monotonicity of $TSTT(x,u)$ with toll values on the unused routes is implied by \eqref{eq:tstt1_defn}. 
\end{proof}

\section{Extension to networks with overlapping routes} \label{app:overlapping}
In Section \ref{subsec:tstt}, we derived the expression of total system travel time $(TSTT)$ for a single OD pair with non-overlapping routes (consisting of single links) in terms of various parameters. We now analyze the problem with overlapping routes. Notice that in order to conduct the analysis in Section \ref{sec:solnmethod}, we used the fact that total system travel time is a continuous piece-wise linear  function, i.e., $TSTT(x,u)=\sum_{\psi=1}^\xi \left(k^{0,\psi}x+\sum_{r }k^{r,\psi} u^r\right)\gamma^{\psi}$, where $x$ is the demand of the OD pair and $\{u^r\}$ is the set of toll values on various routes (equivalently links because each route had a single link then) in the network. We now analyze if the same form holds when there are routes with multiple links such that there can be overlaps between routes in a network. We present an algorithm that can be used to argue that the expression of $TSTT$ for single OD pair networks with overlapping routes is also a continuous piece-wise linear function of the demand and toll values on various links. Note that a network is a composition of two types of connections: \textit{series}  and \textit{parallel}. A set of elements (e.g. links) are connected in \textit{series} if the same amount of traffic flows through them. A set of elements in the network are said to be in \textit{parallel} if they are connected between the same pair of nodes. In Figure \ref{fig:overlapping_network}, consider a network with origin A, destination D and two overlapping routes connecting the OD pair: 1) link 1-link 2-link 4, and 2) link 3-link 4. In this network, the same traffic flows through links 1 and 2, so they are connected in series. Link 3 and the combined set of links 1 and 2 are in \textit{parallel} as they are connected between the same pair of nodes, A and C. Finally, link 4 and the combined set of links 1, 2 and 3 are connected in \textit{series}. 
\begin{figure}[h]
\centering
	\includegraphics[width= 0.3\textwidth]{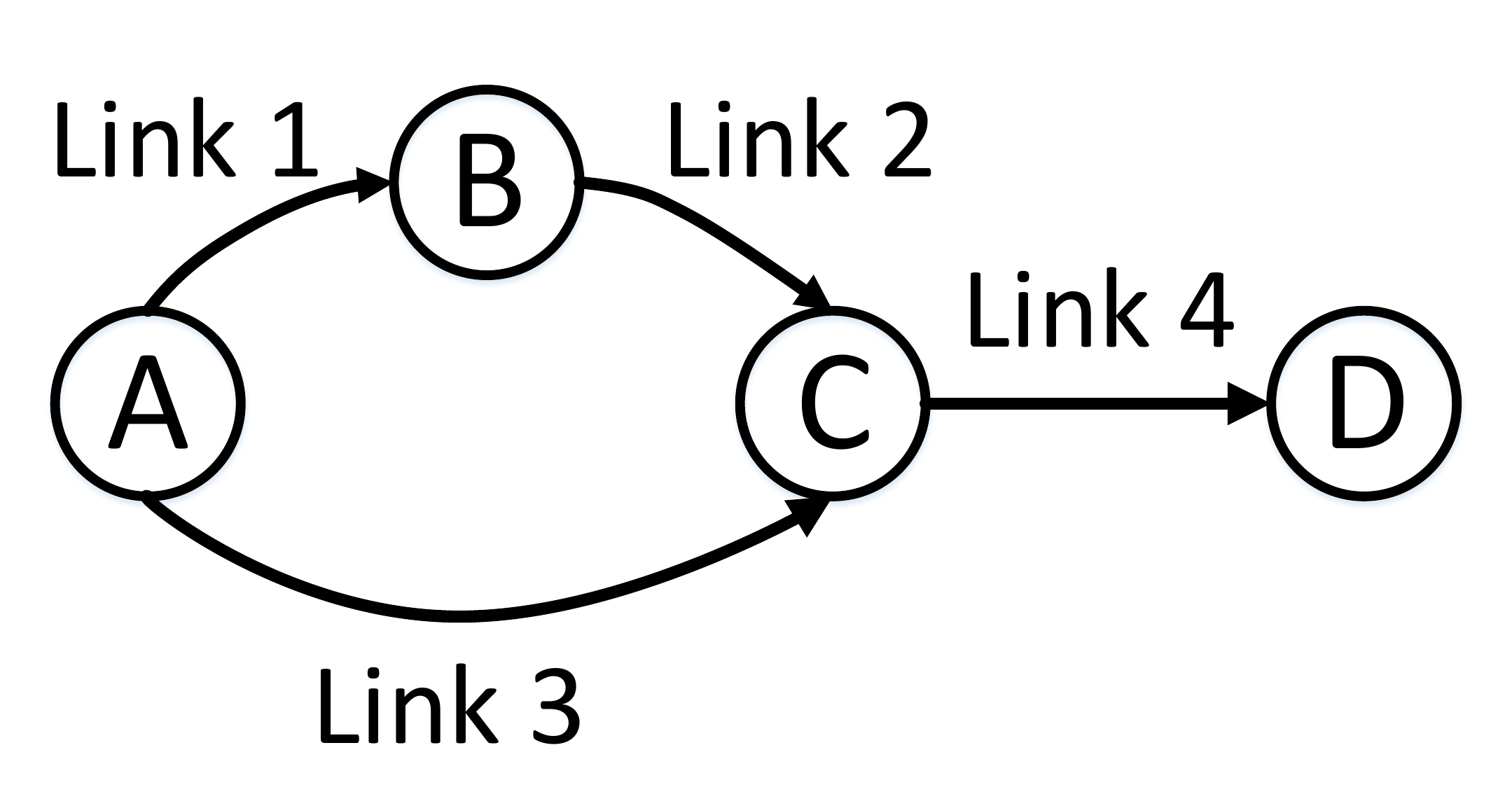}
	\caption{A network with overlapping routes}
	\label{fig:overlapping_network}
\end{figure}

We now present some results that we will use in computing the expression for $TSTT$.
\begin{lemma}\label{lem:jointlink_series}
Consider a pair of nodes that are connected by a route with $R$ consecutive links. The links have continuous piece-wise linear  travel time functions. Then, the total system travel time of this network is the same as that of a single link connecting the pair of nodes that has a continuous piece-wise linear travel time function. Thus, $R$ links of the route can be replaced by a single link that has a positive continuous piece-wise linear travel function. 
\end{lemma}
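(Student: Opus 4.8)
The plan is to use the defining property of a \emph{series} connection: all $R$ links carry one and the same flow, so the travel time of the route equals the \emph{sum} of the $R$ link travel time functions, and a finite sum of continuous piece-wise linear functions is again continuous piece-wise linear. Concretely, label the links $1,\dots,R$, let $c^i(\cdot)$ denote the continuous piece-wise linear travel time function of link $i$ (and, if link $i$ is tolled, $u^i$ its toll, with $0\le u^i\le\tau^{max}$), and set $\tilde c(y):=\sum_{i=1}^R c^i(y)$ and $\tilde u:=\sum_{i=1}^R u^i$. Since the two-node network contains only this one route, traffic assignment is trivial --- all of the demand $x$ is routed on it and therefore passes through every link --- so by the definition \eqref{eq:tstt1_defn} the total system travel time of the $R$-link network is the route travel time $\tilde c(x)+\tilde u$; this is exactly the total system travel time of a single link between the same two nodes with travel function $\tilde c$ and toll $\tilde u$. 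Hence the lemma reduces to verifying that $\tilde c$ is a positive, monotonically increasing, continuous piece-wise linear function of the form used in Section \ref{subsec:tstt}, with finite added toll $\tilde u$.

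First I would check continuity and the piece-wise linear structure. Continuity is immediate, being a finite sum of continuous functions. For the piece-wise linear structure, each $c^i$ has a finite breakpoint set $B_i$ whose smallest element is $0$ and whose largest-coordinate segment extends to $+\infty$; put $B=\bigcup_{i=1}^R B_i$, a finite set containing $0$. On each interval between consecutive points of $B$ every $c^i$ is affine, hence so is $\tilde c$; and for $x\ge\max B$ every $c^i$ lies in its terminal (unbounded) linear segment, so $\tilde c$ is affine on $[\max B,\infty)$. Re-indexing $\tilde c$ over the refined partition $B$ thus writes it in the summand form $\sum_{\beta}\bigl(\tilde c^{\,\beta}x+\tilde\alpha^{\,\beta}\bigr)\tilde z^{\,\beta}$ with finitely many segments, first breakpoint at the origin, and a single unbounded terminal segment --- i.e.\ the normal form \eqref{eq:tstt_new_piecewise}. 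Positivity and monotonicity then come for free: each $c^i$ is positive and, as a piece-wise linear approximation of a monotonically increasing BPR function, monotonically increasing (cf.\ Lemma \ref{lem:tstt_mono}), so $\tilde c$ is positive and monotonically increasing; and $\tilde u=\sum_i u^i\le R\,\tau^{max}<\infty$. Finally, for the replacement statement, a chain of links in series carries a common flow and every route through one of its links passes through all of them, so such a chain enters the equilibrium conditions \eqref{eq:tstt_compl1}--\eqref{eq:tstt3_2} only through its aggregate travel time $\tilde c(\cdot)+\tilde u$ as a function of that common flow; substituting a single link with travel function $\tilde c$ and toll $\tilde u$ therefore leaves the equilibrium --- and hence $TSTT$ --- unchanged.

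The argument is essentially bookkeeping, and the one point I would treat with care is the piece-wise linear step: the analysis of Section \ref{sec:solnmethod} (and Lemma \ref{lem:tstt_piecewise}) relies on the travel functions having the \emph{precise} normal form, so I would be explicit that the common refinement $B$ produces finitely many affine pieces with the first breakpoint at $0$ and exactly one unbounded terminal piece, rather than merely ``some'' piece-wise linear curve. Everything else --- continuity, positivity, monotonicity, and boundedness of $\tilde u$ --- is inherited termwise from the individual links.
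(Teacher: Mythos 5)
Your proposal is correct and follows essentially the same route as the paper's own proof: both arguments rest on the observation that a series connection forces a common flow through all $R$ links, so the route (and hence total system) travel time is the termwise sum of the link travel time functions, and a finite sum of continuous piece-wise linear functions is again continuous piece-wise linear. Your additional bookkeeping on the common refinement of breakpoints and the inheritance of positivity and monotonicity merely makes explicit what the paper leaves implicit.
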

\begin{proof}
Denote $x$ as the travel demand flowing between the pair of nodes. The flow is positive and equal across all the links as there is a single route connecting the OD pair. Note that link travel time functions are continuous piece-wise linear in the flow and linear in the corresponding toll values. Therefore, the sum of travel times across the $R$ links is also continuous piece-wise linear in the flow and linear in the toll values. Thus, we can replace the $R$ links by a single link whose travel time function is continuous piece-wise linear in the arguments (note that a linear function is also a continuous piece-wise linear function) and is given by the sum of the travel times across the $R$ links.
\end{proof}

\begin{lemma} \label{lem:jointlink_parallel}
Consider a pair of nodes that have $R$ non-overlapping routes (such that each route has a single link) connecting the OD pair. The routes have continuous piece-wise linear  travel time functions. Then, the $R$ routes can be replaced by a single link that has a continuous piece-wise linear travel time function with positive coefficients and the travel time of this link is equal to the total system travel time of the network consisting of $R$ routes.
\end{lemma}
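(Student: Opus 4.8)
The plan is to reduce the lemma to a statement already essentially contained in Lemma~\ref{lem:tstt_piecewise} together with one sign computation. View the $R$-route parallel subnetwork as a single-OD network with $R$ non-overlapping single-link routes carrying a total demand $x$ and with toll vector $(u^1,\dots,u^R)$ on the $R$ links, and define $\tilde c(x,u^1,\dots,u^R) := TSTT(x,u)$, where the right-hand side is the total system travel time obtained from the Wardrop conditions \eqref{eq:tstt1_defn}--\eqref{eq:tstt4} of that network. I would take $\tilde c$ to be the travel time function of the replacement link. Since the replacement link is the only route joining the node pair, it carries the full flow $x$ and is a \emph{used} route, so by \eqref{eq:tstt1_defn} the total system travel time of the one-link network equals $\tilde c(x,u^1,\dots,u^R)$; hence the replacement is travel-time-preserving \emph{by construction}, and all that must be shown is that $\tilde c$ is a continuous piece-wise linear function of $(x,u^1,\dots,u^R)$ with strictly positive coefficients of $x$ and of each $u^r$ (i.e., of the form \eqref{eq:tstt_new_piecewise}), which in particular gives that the replacement link carries a legitimate, monotonically increasing travel time function that can be fed back into the recursion.

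The continuity and piece-wise linearity are immediate from Lemma~\ref{lem:tstt_piecewise}, since the parallel subnetwork is exactly the single-OD, single-link-per-route setting treated there, so it remains only to check the signs of the coefficients on each affine piece. First I would fix one of the finitely many polyhedral cells of $(x,u)$-space on which the set of used routes $\{1,\dots,R\}\setminus\nu$ and the active linear segment of every used route are constant; on such a cell the used-route flows can be eliminated from \eqref{eq:tstt_compl1}--\eqref{eq:tstt_compl2} to give the equilibrium travel time $w$ by \eqref{eq:TSTTexpression_2} as an affine function of $(x,u)$, so that $\tilde c$ is affine in $(x,u)$ there by \eqref{eq:tstt1_defn}. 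Reading off \eqref{eq:TSTTexpression_2}, the coefficient of $x$ is a positive multiple of $(\prod_{r\notin\nu}c^r)/(\sum_{r\notin\nu}\prod_{j\neq r}c^j)$, which is strictly positive because each segment slope $c^r$ of the piece-wise linear approximation of a monotonically increasing BPR function is strictly positive. For the toll coefficients, an unused route $r\in\nu$ contributes $u^r$ only through the term $(\alpha^{r,1}+u^r)$ of \eqref{eq:tstt1_defn}, with coefficient $1>0$, while a used route $r\notin\nu$ contributes $u^r$ only through $w$, with coefficient a positive multiple of $(\prod_{j\neq r}c^j)/(\sum_{s\notin\nu}\prod_{j\neq s}c^j)>0$. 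Since $R-\iota>0$ and the remaining terms are nonnegative constants, all coefficients of $x$ and of each $u^r$ in $\tilde c$ are strictly positive on every cell; gluing the affine pieces along cell boundaries using the continuity from Lemma~\ref{lem:tstt_piecewise} yields that $\tilde c$ is a continuous piece-wise linear function with positive coefficients, consistent with the monotonicity already guaranteed by Lemma~\ref{lem:tstt_mono}, and assigning it as the travel time function of the replacement link completes the argument.

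The hard part is essentially bookkeeping rather than a genuine difficulty: one has to be sure that the polyhedral cells tile the whole $(x,u)$-domain, that only finitely many of them occur, that \eqref{eq:TSTTexpression_2} is the correct elimination on each, and that the affine pieces agree across boundaries — but all of this is subsumed in Lemma~\ref{lem:tstt_piecewise}, so no new work is required. The one genuinely new step is the sign analysis above, and the point to watch is to keep the tolls on unused routes (which enter $\tilde c$ directly through $(\alpha^{r,1}+u^r)$ with coefficient exactly $1$) separate from the tolls on used routes (which enter only through the $w$-expression \eqref{eq:TSTTexpression_2}), since conflating the two would give the wrong coefficient for a route that is momentarily unused at a given $(x,u)$.
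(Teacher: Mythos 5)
Your proposal is correct and follows essentially the same route as the paper: the paper's proof simply invokes Lemma~\ref{lem:tstt_piecewise} (continuity and piece-wise linearity of $TSTT$) and Lemma~\ref{lem:tstt_mono} (monotonicity, hence positive coefficients) and declares the replacement link's travel time to be the parallel subnetwork's $TSTT$. Your explicit sign computation via \eqref{eq:TSTTexpression_2} and \eqref{eq:tstt1_defn}, including the used-versus-unused-route distinction, is just an unpacking of what the paper delegates to the proof of Lemma~\ref{lem:tstt_mono}.
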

\begin{proof}
This proof of this result follows from the Lemmas \ref{lem:tstt_piecewise} and \ref{lem:tstt_mono}, which say that the total system travel time of links connected in parallel is a continuous piece-wise linear function. Thus, we can replace the $R$ routes with a single link that has travel time function equal to the total system travel time of the network consisting of $R$ routes.
\end{proof}
We now present the algorithm to reduce a network with overlapping routes to a single link network that has the same $TSTT$ as the original network:
\begin{enumerate}
\item Denote U and V as the upstream and downstream nodes in the current iteration, respectively.
\item Initialize the upstream node as the origin of the network.  
\item Pick an outgoing route from node U. Move along this route until a node is found that is an intersection of multiple routes. Denote this node as V. 
\item Find all the elements that are connected to the nodes U and V in parallel. Using Lemmas \ref{lem:jointlink_series} and \ref{lem:jointlink_parallel}, replace all the elements such that only single links are connected in parallel between the nodes U and V. 
\item Replace all the parallel links joining U and V with a single link by Lemma \ref{lem:jointlink_parallel}. 
\item Replace all the links (in series) connecting the origin and node V with a single link by Lemma \ref{lem:jointlink_series}.  
\item Set U equal to V. If U is the destination node then stop. Otherwise, go to step 3. 
\end{enumerate}
The correctness of the above algorithm can be proved by a loop invariant scheme, with the loop invariant being \textit{at a particular iteration of the algorithm, there is a single link with a linear piece-wise continuous travel time function with positive parameters connecting the origin and U.} 

\begin{figure}[h]
\centering
	\includegraphics[width= 0.6\textwidth]{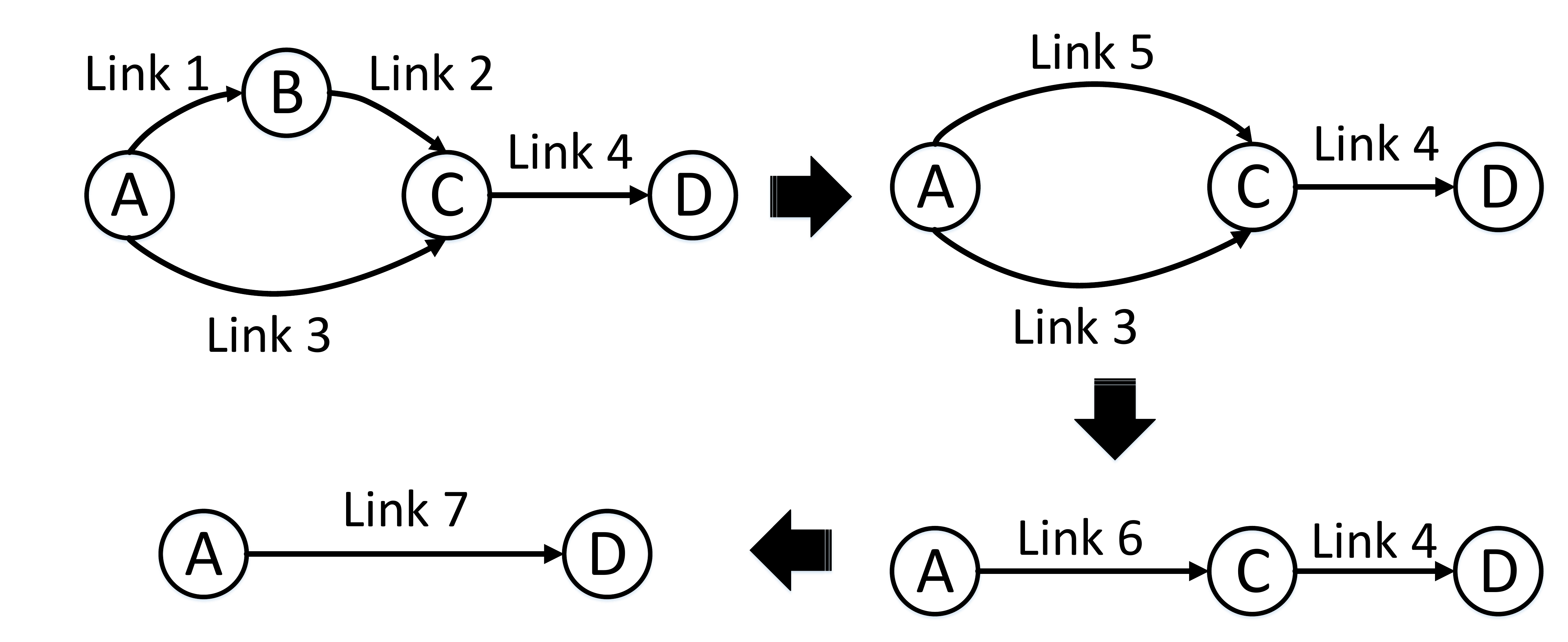}
	\caption{Reduction of an example network with overlapping routes to a single link network}
	\label{fig:overlapping_network_reduction}
\end{figure}

To demonstrate how this algorithm works, consider Figure \ref{fig:overlapping_network_reduction}: 1) First, initialize U as node A. Then, node C is node V. Then, we replace links 1 and 2 with a single link, say link 5 using Lemma \ref{lem:jointlink_series}. Now links 5 and 3 are connected in parallel. So, links 5 and 3 can be replaced by a single link, say link 6 by Lemma \ref{lem:jointlink_parallel}. In the next iteration, nodes C and D become the U and V nodes, respectively. Finally, links 6 and 4 can be replaced by a single link (say link 7) using Lemma \ref{lem:jointlink_series}. So, a single link with travel time of the form $\sum_{\psi=1}^\xi \left(k^{0,\psi}x+\sum_{r }k^{r,\psi} u^r\right)\gamma^{\psi}$,  connects the nodes A and D. Thus, we get the expression for total system travel time for this network. We can see that by using Lemmas \ref{lem:jointlink_series} and \ref{lem:jointlink_parallel}, the travel time parameters and the coefficients corresponding to toll values remain positive. Thus, the constants $k^{0,\psi}$ and $\{k^{r,\psi}\}$ in the expression $TSTT=\sum_{\psi=1}^\xi \left(k^{0,\psi}x+\sum_{r }k^{r,\psi} u^r\right)\gamma^{\psi}$ for networks with overlapping routes are positive. Thus, overlapping networks have $TSTT$ of the same form as non-overlapping networks and the analysis of sufficiency conditions that we earlier performed carries over. Finally, the action space in the problems with overlapping routes is the vector of link tolls and thus the proposed tolling scheme is not restricted to routes.

\section{Extension to multiple OD pair networks}\label{app:multipleOD}
We now demonstrate that the MDP model carries over to multiple OD-pair networks (that can possibly have overlapping routes). We consider a special case of multiple OD pairs, which we term as \textit{series activity trips} networks. Consider the network in Figure \ref{fig:multipleOD_tripactivity} with three OD pairs: $O$-$D_1$,$O$-$D_2$ and $O$-$D_3$. At a given time step, $x$ is the number of people traveling from node $O$ to different destinations: $\rho_1x$ people travel to node $D_1$; of the remaining $(1-\rho_1)x$ people, a fraction $\rho_2$ of them travel to node $D_2$ and the remaining $\left(1-\rho_1-\rho_2+\rho_1\rho_2\right)x$ people travel to node $D_3$. Such type of networks are relevant for modeling interrelated choices underlying trip chaining behavior. \cite{kitamura1984incorporating} expresses the destination choice decision that underlies trip chaining as a series of sequential choices. Suppose a traveler is about to make a visit to another location after completing a visit at the current location. It may happen that some travelers might not travel to the other location because of an unexpected incidence (e.g. coming across a friend) or the purpose of travel is satisfied at the current location. But for some travelers the purpose may not be satisfied at the current location (e.g. some stock is unavailable at the first location in a shopping trip, so some people may go to another shopping location), so they will travel to the next destination. Some recent studies also indicate that series trip chaining behavior happens in electric vehicles (EVs) due to different electric-charing opportunities along a route \citep{tamor2013statistical,xie2017path}. 

We characterize the state of the (controlled) system by the total demand across different OD pairs. The demand in each time step changes by the transition probability function of Equation \eqref{eq:poissondist}. 
\begin{figure}[h]
\centering
	\includegraphics[width= 0.8\textwidth]{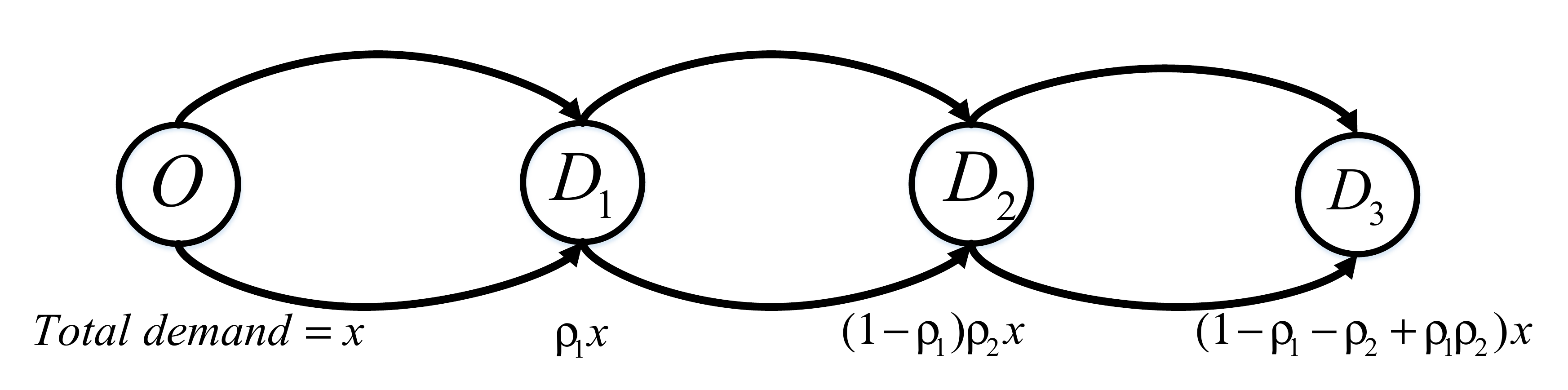}
	\caption{An example of a \textit{series activity trips} network}
	\label{fig:multipleOD_tripactivity}
\end{figure}
Consider a network with total demand at a given time step as $x$, then the demands across different OD pairs are some proportions of the total demand as shown in Figure \ref{fig:multipleOD_tripactivity}. Under this setting, we show that the total system travel time at the time step under consideration is given by the same form as before, i.e., $TSTT=\sum_{\psi=1}^\xi \left(k^{0,\psi}x+\sum_{r }k^{r,\psi} u^r\right)\gamma^{\psi}$, where $\{u^r\}$ is the vector of toll values on the links in the network. We again reduce the whole network to a single link that has travel time of the form $\sum_{\psi=1}^\xi \left(k^{0,\psi}x+\sum_{r }k^{r,\psi} u^r\right)\gamma^{\psi}$. This can be done through the following algorithm:
\begin{enumerate}
\item Consider an OD pair whose routes do not contain other origin or destination nodes in them. Initialize this OD pair as $OD_0$, where 0 denotes the iteration number. Then, all the routes connecting $OD_0$ can be replaced by a single link with continuous piece-wise linear  travel time function using the algorithm of Appendix \ref{app:overlapping}. This link has a travel time function that is continuous piece-wise linear  in terms of the flow traveling between $OD_0$ and the toll values on  links connecting the pair $OD_0$ (note that the flow traveling between $OD_0$ can also constitute demand from other OD pairs apart from the demand of $OD_0$). 

\item In iteration $n$ of the algorithm, we denote the OD pair under consideration as $OD_n$. Consider an OD pair, denoted as $OD_n$, that contains $OD_{n-1}$, i.e., if a user is traveling between $OD_n$ then he/she also travels between the pair $OD_{n-1}$. Then, $OD_{n-1}$ is connected by a single link that has travel time as a continuous piece-wise linear  function of the traffic that flows between the $OD_{n-1}$ (consisting of the demand of $OD_{n-1}$ and demands from other OD pairs). Consider the terms corresponding to the demands of $OD_{n-1},\ldots,OD_0$ as additional tolls in the travel time function of the link joining $OD_{n-1}$ as these do not constitute part of the demand between $OD_n$. Since Lemmas \ref{lem:jointlink_series} and \ref{lem:jointlink_parallel} ensure that the travel time functions of the new links are continuous piece-wise linear  functions of the flows as well as link toll values, the single link that replaces all the routes between $OD_{n}$ has travel time that is a continuous piece-wise linear  function of the demands corresponding to $OD_n,OD_{n-1},\ldots,OD_0$ as well as the toll values on the links connecting the pair $OD_{n}$.  
\item If the origin and destination nodes of the pair $OD_n$ do not have any incoming and outgoing links, respectively, then stop. Else, go to Step 2. 
\end{enumerate}
The proof of correctness of the above algorithm can be proceeded using the following loop invariant: \textit{after iteration $n$, all the routes joining $OD_n$ can be replaced by a single link that has travel time as a continuous piece-wise linear function of the demands of $OD_n,OD_{n-1},\ldots,OD_{0}$} as well as the toll values on the links connecting the OD pair $OD_{n}$. 

To demonstrate how the algorithm works, consider the network in Figure \ref{fig:overlapping_network} but with two OD pairs: $A$-$C$ and $A$-$D$ with demands at a given time step being $d^1=\rho x$ and $d^2=(1-\rho)x$, respectively, where $x$ is the total demand at the given time step. In the first iteration of the algorithm, $A$-$C$ is the OD pair under consideration. Then, links 1, 2 and 3 can be replaced by a single link 6 that has travel time function that varies piece-wise linearly and continuously with the total demand (i.e., the sum of the demands of both the OD pairs) and the toll values on links 1, 2 and 3 by the analysis in Appendix \ref{app:overlapping}. Denote the travel time of link 6 joining nodes $A$ and $C$ at the given time step by $\sum_{\psi=1}^\xi \left(k^{0,\psi}x+k^{1,\psi} u^1+k^{2,\psi} u^2+k^{3,\psi} u^3\right)\gamma^{\psi}$, where $\{k^{0,\psi}\},\{k^{1,\psi}\},\{k^{2,\psi}\},\{k^{3,\psi}\}$ are positive constants, $\{\gamma^{\psi}\}$ are binary variables and $u^1, u^2,u^3$ are tolls applied on links $1, 2$ and $3$ at the given time step, respectively. Alternatively, it can also be considered that link 6 has flow equal to $d^2$ flowing through it at the given time step with travel time parameter $k^{0,\psi}$ and toll equal to $k^{0,\psi}d^1+k^{1,\psi}u^1+k^{2,\psi}u^2+k^{3,\psi}u^3$ levied on it, given that $\psi$th segment of the continuous piece-wise function determines the value of $TSTT(x,u)$. Denote the travel time of link 4 at the given time step as $\sum_{\bar{\psi}=1}^{\bar{\xi}} \left({\bar{k}}^{0,\bar{\psi}}d^2+{\bar{k}}^{1,\bar{\psi}} u^4\right){\bar{\gamma}}^{\bar{\psi}}$, where $\{{\bar{k}}^{0,\bar{\psi}}\}$ and $\{{\bar{k}}^{1,\bar{\psi}}\}$ are positive constants, $\{{\bar{\gamma}}^{\bar{\psi}}\}$ are binary variables and $u^4$ is the toll value on link 4. Then, links 6 and 4 can be replaced by a single link by Lemma \ref{lem:jointlink_series} that is piece-wise linear and continuous in terms of the demand values of both the OD pairs and the toll values on different links. Since the demands across different OD pairs are fixed proportions of the total demand, the single link that connects $A$ and $D$ has travel time that is piece-wise linear and continuous in terms of the total demand across the network and the link toll values. So, for such networks $TSTT$ is of the same form as before and the solution analysis that we conducted for single OD pair networks carries over.

\bibliographystyle{apa}
\bibliography{mybibfile}

%
%

%

\end{document}